\documentclass[a4paper,UKenglish,cleveref, autoref, thm-restate]{lipics-v2021}

\graphicspath{{./figures/}}

\usepackage{mathtools}
\usepackage{thm-restate}

\usepackage{hyperref,pifont,rotating,todonotes}

\usepackage{xcolor}

\definecolor{ao(english)}{rgb}{0.0, 0.5, 0.0}

\newcommand{\rdist}{\ensuremath{\textsf{rdist}}}
\newcommand{\ring}{\ensuremath{\textsf{Ring}}}

\newcommand{\gparallel}{\ensuremath{G^\textsf{par}}}
\newcommand{\radgraph}{\ensuremath{G^\textsf{rad}}}

\newcommand{\gcontract}{G^\textsf{con}}
\newcommand{\gmod}{G^\textsf{mod}}
\newcommand{\wmod}{\mathcal W_\textsf{mod}}
\newcommand{\tw}{\textsf{tw}}

\newcommand{\cl}{\textsf{cl}}

\newcommand{\ccheck}[1]{\textcolor{black}{#1}}

\title{Linear-Time FPT Algorithm for Surface Disjoint Paths via Surface Cutting}

 \author{Kyungjin Cho}{Institute of Algorithms and Theory, Graz University of Technology, Korea}{kyungjin.cho@tugraz.at}{0000-0003-2223-4273}{Supported by the Austrian Science Fund (FWF) \url{https://doi.org/10.55776/P36280}.} 
\author{Eunjin Oh}{POSTECH, Korea}{eunjin.oh@postech.ac.kr}{https://orcid.org/0000-0003-0798-2580}{Supported by Institute of Information \& Communications Technology Planning \& Evaluation (IITP) grant funded by the Korea government (MSIT) (No.RS-2024-00440239, Sublinear Scalable Algorithms for Large-Scale Data Analysis) and the National Research Foundation of Korea (NRF) grant funded by the Korea government (MSIT) (No.RS-2024-00358505).}
\author{Sebastian Wiederrecht}{School of Computing, KAIST, Daejeon, South Korea}{wiederrecht@kaist.ac.kr}{0000-0003-0462-7815}{Supported by the Institute for Basic Science (IBS-R029-C1)}
\authorrunning{K Cho, E Oh, and S Wiederrecht}
\Copyright{Kyungjin Cho, Eunjin Oh, Sebastian Wiederrecht} 

\ccsdesc[100]{Theory of computation~Computational geometry} 
\ccsdesc[100]{Theory of computation~Design and analysis of algorithms}

\keywords{disjoint paths, Euler genus, surface decomposition, irrelevant vertex technique} 

\category{} 

\relatedversion{} 

\acknowledgements{}

\nolinenumbers 

\hideLIPIcs
\EventEditors{John Q. Open and Joan R. Access}
\EventNoEds{2}
\EventLongTitle{42nd Conference on Very Important Topics (CVIT 2016)}
\EventShortTitle{CVIT 2016}
\EventAcronym{CVIT}
\EventYear{2016}
\EventDate{December 24--27, 2016}
\EventLocation{Little Whinging, United Kingdom}
\EventLogo{}
\SeriesVolume{42}
\ArticleNo{23}

\begin{document}
\maketitle
\begin{abstract}
    We study the \textsc{$k$-Disjoint Paths} problem on a graph embedded on a surface with bounded Euler genus. Given a graph $G$ with $n$ vertices and $k$ vertex pairs embedded on a surface of Euler genus $g$, we present a $2^{O(k^2+g^2)}n$-time algorithm that computes $k$ pairwise vertex-disjoint paths connecting the given vertex pairs if such paths exist. Our approach relies on the decomposition of $G$ into $O(k+g)$ planar subgraphs while bounding the complexity of the boundaries between these subgraphs.
    This approach enables the use of techniques for compressing linkages in planar graphs. Moreover, our techniques yield two kernels of size polynomial in $k$, $g$, and the treewidth of the graph, and of size $2^{O(k+g)}$.
    These results extend recent advances on \textsc{$k$-Disjoint Paths} on planar graphs [Cho et al. SODA 2023] and  [Włodarczyk and Zehavi FOCS 2023] to surface-embedded graphs.
\end{abstract}
\newpage

\setcounter{page}{1}
\section{Introduction}
The \textsc{Disjoint Paths} problem is a fundamental routing problem defined on pairs $(G,\mathcal T)$: Here, $G$ is an undirected graph with $n$ vertices and $\mathcal T$ is a set of $k$ pairs $(s_1,t_1),\ldots, (s_k,t_k)$ of vertices of $G$.
The goal is to find $k$ pairwise vertex-disjoint paths connecting $s_i$ and $t_i$ for each $i\in\{1,\ldots, k\}$ or to conclude that such paths do not exist in $G$.
We call the vertices appearing in $\mathcal T$ the \emph{terminals}. 
Due to its numerous applications and interesting graph theoretic properties, there exists an extensive body of literature on the \textsc{Disjoint Paths} problem, and it is known to be \textsf{NP}-complete even for grid graphs~\cite{chuzhoy2018almost} when $k$ is part of the input.
To circumvent this fact, there have been two perspectives from which to approach the problem: the lens of approximation algorithms and the lens of parameterized algorithms.
For approximation algorithms, the goal is to connect as many terminal pairs as possible using pairwise vertex-disjoint paths in polynomial time.
The best-known approximation algorithm has an approximation ratio $O(\sqrt{n})$~\cite{kolliopoulos2004approximating}.
Even for planar graphs, the best-known approximation ratio is $O(n^{9/19}\log^{O(1)}n)$~\cite{chuzhoy2016improved}.
We focus on the approach through ``parameterized algorithms'', following along a route that can be traced back to the Graph Minors Series of Robertson and Seymour \cite{robertson1995graph}.

\ccheck{
When discussing the parameterized complexity of the \textsc{Disjoint Paths} problem, one usually refers to the $k$-\textsc{Disjoint Paths} problem. Here, the number of terminal pairs is fixed to be at most $k$.
Almost all known FPT algorithms for the \textsc{$k$-Disjoint Paths} problem, including recent algorithms with running times $h_1(k)\cdot n^{1+o(1)}$~\cite{korhonen2024minor} and $h_2(k)\cdot n^2$~\cite{cavallaro2026optimalboundskdisjointpaths}, largely depend on the so-called \textsl{linkage function} \cite{Robertson2009GMXXI}.
Furthermore, an almost optimal upper bound on the linkage function was given, which implies a double-exponential FPT algorithm~\cite{cavallaro2026optimalboundskdisjointpaths}.}
A well-studied parametrization for the \textsc{Disjoint Paths} problem, besides the number $k$ of terminal pairs is the \textsl{treewidth} of the input graph.
For instance, there exist a $O(\tw^3)$-approximation algorithm~\cite{ene2016routing} and a $2^{O(\tw\log \tw)}n^{O(1)}$-time algorithm~\cite{scheffler1994practical} for the \textsc{Disjoint Paths} problem, where $\tw$ denotes the treewidth of the given graph $G$.
Additionally, under the Exponential Time Hypothesis, the \textsc{Disjoint Paths} problem requires $2^{\Omega(\tw \log \tw)} n^{O(1)}$ time~\cite{lokshtanov2018slightly}, and for planar graphs, it requires $2^{\Omega(\tw)}n^{O(1)}$ time~\cite{baste2015role}.

For the case where the input graph is planar, i.\@e.\@ the \textsc{Planar $k$-Disjoint Paths} problem, significant progress has been made.
Włodarczyk and Zehavi established that the \textsc{Planar $k$-Disjoint Paths} problem admits a \emph{polynomial kernel} parameterized by the treewidth and $k$~\cite{wlodarczyk2023planar}.
A parameterized problem $\Pi$, is said to admit a \emph{kernel} parameterized by $p$ if there is a polynomial-time algorithm that, given an instance $(I,p)$ of $\Pi$, translates it into an equivalent instance $(I',p')$ of size at most $h(p)$ for some computable function $h$ depending only on $p$.
The kernel is \emph{polynomial} if $h$ is a polynomial function.
Here, equivalence means that $(I,p)$ is a \textsf{yes}-instance if and only if $(I',p')$ is a \textsf{yes}-instance.
The polynomial kernel allows us to apply the well-known XP algorithm by Schrijver~\cite{schrijver1994finding}, giving a total running time of $2^{O(k^2)}n^{O(1)}$.
Independently, Cho et al.~\cite{cho2023parameterized} achieved a total running time of $2^{O(k^2)}n$ by developing a $(k+\tw)^{O(k)}\cdot n$-time algorithm.
This is the best-known FPT algorithm for the \textsc{Planar $k$-Disjoint Paths} problem. 
These results naturally prompt the question of whether similar techniques can be applied to graphs embedded in surfaces of higher genus.

In this paper, we investigate the \textsc{$g$-Surface $k$-Disjoint Paths} problem, that is a special case of the $k$-\textsc{Disjoint Paths} problem where the graph $G$ is given alongside with an embedding on a surface $\Sigma$ with {Euler genus} $g$. 
The \emph{Euler genus} of a surface $\Sigma$ is the smallest integer $g$ such that $\Sigma$ has $g$ closed curves whose removal transforms $\Sigma$ into a polygon with $2g$ sides.
Particularly, the plane and the sphere are the surfaces with Euler genus $0$. 

Many problems that admit natural (parameterized) algorithms on planar graphs can be generalized to graphs of bounded Euler genus.
Consider for example the \textsc{Minimum Multicut} problem~\cite{erickson2011minimum,colin2017multicuts} and the \textsc{Induced Cycle} problem~\cite{kobayashi2009algorithms}.
A key result in this area is an algorithm by Mohar~\cite{mohar1999linear} that finds an embedding of a graph $G$ in a fixed surface of Euler genus $g$ in a linear FPT time if such an embedding exists.

\subparagraph*{Our contribution.}

In this paper, we present a $2^{O(k^2+g^2)}n$-time algorithm for the \textsc{$g$-Surface $k$-Disjoint Paths} problem on $(G, \mathcal{T})$ where $G$ is embedded on a surface of Euler genus $g$.
As a secondary result, we show that the problem admits a polynomial kernel parameterized by $k$, $g$, and the treewidth $\tw$.
This follows from our new ``\textsl{surface cutting technique}'' as described in \cref{sur_sec:cut_reduction}.

\begin{restatable}{theorem}{ThmMain}\label{sur_thm:main_result}
    The \textsc{$g$-Surface $k$-Disjoint Paths} problem can be solved in $2^{O(k^2+g^2)}n$ time.
    Furthermore, it admits a polynomial kernel in $(k+g+\tw)$ and a kernel of size $2^{O(k+g)}$.
\end{restatable}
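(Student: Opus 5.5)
The plan is to reduce to the planar machinery of Cho et al.\ and Włodarczyk--Zehavi by cutting the surface into planar pieces whose interfaces are simple.

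\textbf{Step 1: reduce treewidth by irrelevant vertices.} First I bring the treewidth down to $\tw = O(k+g)$ in linear FPT time. If $\tw$ is large, the graph contains a large flat grid minor, which can be found in linear time for bounded-genus graphs via Mohar's embedding and standard grid theorems. Its central vertex is irrelevant by the surface version of the irrelevant-vertex theorem, with linkage bound $2^{O(k)}$ or better, polynomial in $k$ for bounded genus. To keep the total time linear I would follow the batched removal used by Cho et al.: many irrelevant vertices are deleted per round, over geometrically shrinking rounds.

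\textbf{Step 2: cut the surface.} With $\tw = O(k+g)$, I take $O(g)$ noncontractible cycles, or a cut graph, each of length $O(\tw)$ via face-width and radial-distance arguments. Cutting along them, together with short curves through the terminals, yields $O(k+g)$ planar subgraphs. Each has $O(1)$ boundary cycles and total boundary complexity $\mathrm{poly}(k+g)$. I then guess how a solution linkage crosses each boundary. The goal is to show that some solution, after homotopy-based rerouting, crosses every cut curve only $O(k+g)$ times. This is the step I expect to be the main obstacle. Plan: apply the planar linkage compression (``few crossings with a separating curve'') inside a thin annulus around each cut cycle. Any linkage can be rerouted so that it crosses a shortest cut curve $\mathrm{poly}(k+g)$ times. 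The genus-dependent part of the bound I would control by choosing the cut curves as shortest representatives in their homotopy classes.

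\textbf{Step 3: enumerate and solve.} The number of crossing patterns is $(k+g)^{O(k+g)}$. Refining with the Cho et al.\ counting of homotopy and topological types brings this down to $2^{O(k^2+g^2)}$. Each pattern is a planar disjoint-paths instance with terminals on $O(1)$ faces, solved by the $(k+\tw)^{O(k)}n$ planar algorithm. Combining the instances over the $O(k+g)$ pieces gives $2^{O(k^2+g^2)}n$ in total.

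\textbf{Step 4: kernels.} For the kernels, I replace each planar piece by the Włodarczyk--Zehavi compression. That compression preserves all linkages between its $\mathrm{poly}(k+g)$ boundary vertices and yields a graph of size polynomial in $k+g+\tw$. Gluing the compressed pieces along the cuts preserves equivalence and gives the polynomial kernel. Running Step 1 first bounds $\tw$ by $O(k+g)$. Replacing each piece by a linkage-equivalent gadget of size exponential in its number of boundary vertices then gives a kernel of size $2^{O(k+g)}$.
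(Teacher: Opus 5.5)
There is a genuine gap, and it starts in Step~1. Deleting irrelevant vertices cannot bring the treewidth down to $O(k+g)$. The known bound (Mazoit; Cavallaro et al.) makes $f(k,g)$-isolated vertices irrelevant with $f(k,g)=2^{O(k+g)}$, so deletion only reaches $\mathrm{tw}=2^{O(k+g)}$. Already for planar graphs this cannot go below roughly $2^{k}$ (Adler et al.). W\l{}odarczyk and Zehavi even rule out any polynomial-time reduction to $\mathrm{tw}=k^{O(1)}$ unless $\mathsf{coNP}\subseteq\mathsf{NP}/\mathsf{poly}$. So ``polynomial in $k$ for bounded genus'' is false even for $g=0$, and your own ``linkage bound $2^{O(k)}$'' gives $\mathrm{tw}=2^{O(k+g)}$, not $O(k+g)$. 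Your accounting also shows the problem: with $\mathrm{tw}=O(k+g)$, Step~3 would cost $(k+g)^{O(k+g)}n$, i.e.\ $2^{O(k\log k)}n$ for planar graphs, far below the best known $2^{O(k^2)}n$. Moreover, ``refining \ldots\ brings this down to $2^{O(k^2+g^2)}$'' goes in the wrong direction. In the theorem, the exponent $k^2+g^2$ arises exactly as $(k+g)\cdot\log\mathrm{tw}$ with $\mathrm{tw}=2^{O(k+g)}$.

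With the correct bound $\mathrm{tw}=2^{O(k+g)}$, Steps~2--4 break down. The cut curves have complexity $2^{O(k+g)}$. Your central claim, that every solution can be rerouted to cross each curve only $\mathrm{poly}(k+g)$ times, has no argument behind it, and the paper never uses such a lemma: its weak linkages may pass a portal or frame $O(\mathrm{tw})$ times. If you guess explicit crossing patterns and call the planar algorithm of Cho et al.\ as a black box per piece, the sub-instances carry up to $O((k+g)\mathrm{tw})$ terminal pairs. The paper notes that this route only yields $2^{O((k^2+g^2)\mathrm{tw}^2)}n$, which is doubly exponential.

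The missing idea is to open that black box. In each planar piece, contract every hole to a single vertex, usable with multiplicity twice its number of boundary vertices. Then enumerate \emph{discrete homotopy classes} of weak linkages via frames, skeleton forests and reference paths. This gives $M^{O(b)}$ classes per piece, where $b$ is the number of holes and $M$ is the number of boundary vertices plus $\mathrm{tw}$. Since the decomposition has $O(k+g)$ holes and $O((k+g)\mathrm{tw})$ boundary vertices in total, gluing at portals gives $(k+g+\mathrm{tw})^{O(k+g)}$ classes, with $\mathrm{tw}$ appearing only in the base. Each class is turned into an actual linkage by Schrijver's cohomology feasibility in $\mathrm{poly}(k,g,\mathrm{tw})\cdot n$ time. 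This step needs bounds on the number of violated constraints and on the length of the smallest pre-feasible function. Substituting $\mathrm{tw}=2^{O(k+g)}$ then gives $2^{O(k^2+g^2)}n$.

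Step~2 also only gestures at the decomposition itself, which is the paper's main technical step: finding a genus- or hole-reducing radial curve of complexity $O(\mathrm{tw})$ relative to holes already created, via tight isolating cycles and minimum vertex cuts.

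For the kernels, gluing W\l{}odarczyk--Zehavi compressions of the pieces is exactly the paper's route and yields $O((k+g)^{12}\mathrm{tw}^{24})$ vertices. However, a gadget exponential in the number of boundary vertices does not give size $2^{O(k+g)}$. Even under your assumptions it is $2^{\mathrm{poly}(k+g)}$, and it is doubly exponential once $\mathrm{tw}=2^{O(k+g)}$. The $2^{O(k+g)}$ kernel instead comes from substituting $\mathrm{tw}=2^{O(k+g)}$ into the polynomial one.
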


Since the work of Robertson and Seymour in Graph Minors XXI~\cite{Robertson2009GMXXI}, the \textsl{linkage function} has been studied as the main tool for the disjoint paths problem, which allows the \textsl{irrelevant vertex technique}: Given
an instance of the \textsc{$k$-Disjoint Paths} problem, when the treewidth of the graph is sufficiently
large, one can delete a non-terminal vertex without affecting the outcome of the instance.
By applying the reduction, one eventually obtains an equivalent instance of bounded treewidth that is amenable to dynamic programming.
Most recently, the $2^{2^{\textsf{poly}(k)}}n^2$-time FPT algorithm was obtained by applying the reduction and dynamic programming framework~\cite{cavallaro2026optimalboundskdisjointpaths}.
Precisely, they gave an almost optimal bound on the linkage function to obtain their algorithm.
This implies that, in order to improve this running time, one would need to develop techniques that avoid explicit dynamic programming on tree decompositions.
For planar graphs, single-exponential bounds~\cite{lokshtanov2020exponential} were obtained, and later improved to a $2^{O(k^2)}n$-time algorithm~\cite{cho2023parameterized}, by avoiding dynamic programming.
Our result provides the first significant milestone in extending the approaches beyond planar graphs.

\subparagraph*{Our approach and overview.}
The main goal of our algorithm is to decompose the graph $G$ embedded on a surface $\Sigma$ of Euler genus $g$ into $O(k+g)$ planar subgraphs while preserving the embedding, see \Cref{sur_sec:cut_reduction} for details.
The following theorem summarizes our result.
For a subgraph $H$ of $G$ embedded on a subsurface $\Sigma'$ (possibly with boundary) of $\Sigma$, we call a vertex $v$ of $H$ a \emph{boundary vertex} if it lies on the boundary of $\Sigma'$, or it is adjacent to a vertex of $V(G)\setminus V(H)$ in $G$.
\begin{restatable}{theorem}{SmallNiceSubgraph}\label{sur_thm:nice_decomp_tw}
    Let $G$ be a graph embedded on a surface $\Sigma$ with Euler genus $g$ and $k$ boundary vertices.
    In $O((k+g)\tw(G)\cdot n)$ time, we can cut $\Sigma$ into $O(k+g)$ topological disks with holes such that there are at most $O(k+g)$ holes and $O((k+g)\tw(G))$ boundary vertices in total.
\end{restatable}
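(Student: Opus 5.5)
We plan to build a small system of cutting curves through the graph, using short paths in the radial graph, and then cut $\Sigma$ along it. Set $w=\tw(G)$. First we build the radial (vertex--face incidence) graph $\radgraph$ of $G$. We mark all $k$ boundary vertices, and we may assume they lie on a constant number of faces (holes) of the current surface. Two facts drive the construction. A graph embedded on a surface of Euler genus $g$ with treewidth $w$ has bounded \emph{radial} diameter per "planar piece", in the sense that any two vertices on the same face boundary can be joined through few faces. More importantly, by the classical grid-minor and face-width arguments, every noncontractible closed curve can be chosen to meet $G$ in $O(w)$ vertices once $w$ is small compared with face-width. Concretely, we show that $\Sigma$ contains a noncontractible, non-separating closed curve (a "short cut") intersecting $G$ only in vertices, with $O(w)$ such intersections. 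If every such curve had to cross $\Omega(w')$ vertices for $w'\gg w$, the face-width would exceed $cw$, and a $\Theta(w)$-grid minor would exist, which is a contradiction.

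The algorithm then runs in two phases. \textbf{Phase 1 (genus reduction):} we repeat $g$ times. We find a noncontractible curve of radial length $O(w)$ and cut along it, which duplicates its $O(w)$ vertices as new boundary vertices and lowers the Euler genus by at least one, or splits the surface. To find it in linear time, we run a BFS in $\radgraph$ from a suitable vertex up to depth $O(w)$ and detect the first nontrivial cycle in the BFS tree–cotree structure. Each cut costs $O(n)$ time and contributes $O(w)$ boundary vertices, for $O(gw)$ in total. After Phase 1 every component is a disk with holes, but there may be many holes: each cut adds up to two, and the original terminals form up to $k$ holes.

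\textbf{Phase 2 (hole merging/disk splitting):} now each piece is planar with $h=O(k+g)$ holes. We connect the holes by a spanning tree of "short" radial paths. Between two holes in a planar graph of treewidth $w$, either there is a radial path of length $O(w)$, or there are $\Omega(w)$ nested separating cycles and hence a large grid, which contradicts treewidth $w$. Each connecting path has $O(w)$ vertices, and we do not cut along it but use it to bound the boundary structure. Where a piece has too many boundary vertices spread over distant holes, we instead cut along a short separating curve, splitting the piece into two. We charge every cut to a reduction of the potential $(\#\text{holes})+g$, so only $O(k+g)$ cuts occur. Each cut is found by a BFS of depth $O(w)$ in $O(n)$ time, which gives $O((k+g)w\cdot n)$ overall. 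The result is $O(k+g)$ pieces with $O(k+g)$ holes and $O((k+g)w)$ boundary vertices.

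The main obstacle is the quantitative claim that short curves exist: a noncontractible curve, or a curve separating hole groups, meeting only $O(w)$ vertices. This requires a treewidth-versus-face-width/radial-distance lemma on surfaces with boundary, where the constant must not depend on $g$. We will try first to prove it via the planar fact that treewidth bounds the radial distance between any vertex and the outer face up to a factor $O(1)$, applied on the universal cover or after cutting along a canonical system of loops. We will then charge the additive genus contributions separately.
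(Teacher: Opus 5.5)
There is a genuine gap, and it sits where the content of the theorem lies: finding an $O(\tw)$-complexity cutting curve within the $O(\tw\cdot n)$ time budget. That a short noncontractible noose exists on a closed surface is a known fact: large face-width forces a large grid minor. But ``run a BFS in $\radgraph$ from a suitable vertex up to depth $O(w)$'' is not an algorithm, because nothing tells you which vertex is suitable. Small treewidth also does not put every vertex, or any terminal, radially close to a short noncontractible curve. Take a $w\times N$ cylinder capped by a disk at one end, with a handle at the other end. Its treewidth is $O(w)$, but a depth-$O(w)$ search started anywhere near the cap sees only a disk. Trying every start vertex costs $\Theta(n^2)$.

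The paper's key idea replaces this step. It starts at a hole $B$ and grows a longest \emph{tight} sequence of concentric contractible cycles isolating it. This takes linear total time, using simultaneous DFS from the boundary components of the face-neighbourhood $F$ together with Euler-characteristic bookkeeping. The sweep stops at the first obstruction, of which there are two kinds:
\begin{itemize}
\item $F$ carries a noncontractible noose of complexity at most three through the contracted vertex $o$; or
\item the next isolating cycle would have to enclose another hole $B'$. Then a second sweep from $B'$ gives a curve through $\{o,u,o'\}$ joining the two holes.
\end{itemize}
Treewidth enters only through Menger's theorem. The region between $C_{\ell-3\tw}$ and $C_\ell$ cannot carry $3\tw$ disjoint crossing paths, since together with the cycles they would form a grid minor. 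So Ford--Fulkerson finds a separating noose with $O(\tw)$ vertices in $O(\tw\cdot n)$ time, and the long uncontracted part of the curve is rerouted along it. Because the sweep can reach another hole before it detects any genus, hole-merging cuts are an integral part of the genus reduction. This is why the paper needs Condition~2 and a lexicographic induction on $(g,b)$, which bounds the number of pieces by $4g+2b-5$.

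Second, the facts you plan to use for Phase~2 and for your ``main obstacle'' are false. Bounded treewidth does not bound radial distance. The stack of nested triangles $C_3\times P_d$ has treewidth $O(1)$ but radial depth $\Theta(d)$, and a long cylinder of circumference $w$ behaves the same way. Nested separating cycles yield a grid only together with many disjoint paths crossing them. So the dichotomy ``either a radial path of length $O(w)$ between two holes, or a large grid'' fails. The correct dichotomy is ``short connection or short separating noose'', which is exactly the min-cut step above.

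Phase~2 is in fact unnecessary for the statement: once every piece has genus zero, the counts from Phase~1 already suffice. Phase~1, however, must handle holes, and you do not address this. On a surface with boundary, curves parallel to a hole are noncontractible but useless for reducing genus. Also, a shortest noose on the capped surface may pass through a hole. Finally, your assumption that the terminals lie on $O(1)$ faces is unjustified; the paper simply gives each terminal its own hole.
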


Note that cutting the embedded graph along several curves intersecting only a few vertices has been widely used for the plane-embedded case~\cite{cho2023parameterized,reed1995rooted} to reduce the topological complexity.
Our contribution builds upon this line of ideas and extends them to the surface-embedded setting.
\Cref{sur_thm:nice_decomp_tw} allows us to employ the full toolbox of techniques for the \textsc{Planar $k$-Disjoint Paths} problem, like weak linkage enumerating and reconstructing techniques~\cite{cho2023parameterized}.
In particular, we can extend the polynomial kernel for \textsc{Planar Disjoint Paths}~\cite{wlodarczyk2023planar}, as stated in \Cref{sur_cor:kernelization}.
Furthermore, by adapting the treewidth bound $2^{O(k+g)}$~\cite[Theorem 4.10]{cavallaro2026optimalboundskdisjointpaths}, the corollary also implies the second kernel stated in \Cref{sur_thm:main_result}, of size $2^{O(k+g)}$.

\begin{corollary}\label{sur_cor:kernelization}
    The \textsf{$g$-Surface $k$-Disjoint Paths} problem admits a kernel with $O((k+g)^{12}\tw^{24})$ vertices.
\end{corollary}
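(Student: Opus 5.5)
We would derive the corollary by combining \Cref{sur_thm:nice_decomp_tw} with the polynomial kernel for \textsc{Planar Disjoint Paths} of Włodarczyk and Zehavi~\cite{wlodarczyk2023planar}, applied piece by piece. Given $(G,\mathcal T)$ embedded on $\Sigma$, first declare the at most $2k$ terminals to be boundary vertices. This is done by punching a small hole around each terminal, or simply by treating each terminal as a marked vertex. Then invoke \Cref{sur_thm:nice_decomp_tw}. It yields $O(k+g)$ pieces $H_1,\dots,H_m$. Each piece is embedded in a disk with holes. In total there are $O(k+g)$ holes and $B=O((k+g)\tw)$ boundary vertices.

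Every piece is planar, and its boundary vertices lie on $O(k+g)$ faces, namely the holes and the outer boundary. A solution linkage of $G$ restricts on each $H_i$ to a family of vertex-disjoint paths with endpoints at boundary vertices. The number of such segments is at most $B$, since each segment uses two boundary vertices. The key step is therefore to compress each $H_i$ while preserving every possible pattern of boundary-to-boundary linkages. We would apply the Włodarczyk--Zehavi machinery to $H_i$ with the boundary vertices in the role of terminals. Their kernel handles instances whose terminals lie on a bounded number of faces. It produces an equivalent graph with polynomially many vertices in (number of terminals) $\times$ (treewidth).

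We would check that their reduction rules are safe with respect to every set of at most $B$ pairs over the boundary set, not just one fixed demand. Their argument is a rerouting/irrelevant-vertex argument that depends only on the terminals being fixed, so it should be. The relevant parameter is then $B=O((k+g)\tw)$ with treewidth $\tw$. Plugging this into their bound, polynomial of the form $O(p^{6}\tw^{c})$ in the terminal count $p$, gives a polynomial in $(k+g)$ and $\tw$. The stated $O((k+g)^{12}\tw^{24})$ would come from substituting $p=O((k+g)\tw)$ and summing over the $O(k+g)$ pieces. Finally, we glue the compressed pieces back along the boundary vertices, reidentifying the cut copies. Equivalence follows because any linkage decomposes into per-piece linkages between boundary vertices and these are preserved both ways. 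Gluing can raise the genus of the glued graph, but it stays embedded on $\Sigma$, since each compressed piece is embedded in its own disk with holes.

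The main obstacle is the interface with~\cite{wlodarczyk2023planar}. Their theorem is stated for a single instance with terminals in general position, whereas here we need a ``linkage-pattern preserving'' compression of a planar graph with many boundary vertices spread over $O(k+g)$ holes. To handle this, we would first fill each hole with a new vertex adjacent to its boundary vertices. This keeps planarity and adds only $O(k+g)$ vertices to the treewidth. We would then reread their argument to confirm that the irrelevant vertices they remove are irrelevant simultaneously for all terminal demands on the given terminal set. If that fails, the fallback is to run their kernel separately for each boundary-demand pattern and take the union of retained vertices. That fallback would cost the claimed polynomial bound, so the simultaneous irrelevance is what we would aim to establish.
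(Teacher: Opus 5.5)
Your plan is the same as the paper's: punch a hole at each terminal, cut with \Cref{sur_thm:nice_decomp_tw}, replace each piece $H_i$ by a graph that is linkage-equivalent with respect to its boundary set $X_i$, and reglue. The difference is the step you leave open, namely ``simultaneous irrelevance'' for all demands, to be confirmed by rereading their proof, with a fallback that would lose the bound. The paper does not prove this; it is exactly the statement of \cite[Theorem~8]{wlodarczyk2023planar}, which the paper simply cites. For a plane graph $H$ and any $X\subseteq V(H)$, that theorem outputs in polynomial time a plane graph $H'$ with $O(|X|^{12}\tw(H)^{12})$ vertices that is $X$-linkage-equivalent to $H$. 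That is, $(H,\mathcal T)$ and $(H',\mathcal T)$ are equivalent for \emph{every} $\mathcal T\subseteq X\times X$. The paper also relies on the construction working inside the given drawing, so the hole faces of $\Sigma_i$ survive and the compressed pieces reglue on $\Sigma$; you assumed this without comment. So you should cite W{\l}odarczyk--Zehavi's theorem in this form rather than plan to re-derive it. Your description of their result (terminals on few faces, a bound of the form $O(p^6\tw^c)$) is not what is used. The stated exponents come from $\sum_i|X_i|=O((k+g)\tw)$ together with $\sum_i |X_i|^{12}\tw^{12}\le(\sum_i|X_i|)^{12}\tw^{12}=O((k+g)^{12}\tw^{24})$. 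Note that no separate factor for the number of pieces arises, whereas naive summing over pieces would give $(k+g)^{13}$.

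Drop the hole-filling step, because it is not harmless. A new vertex adjacent to all boundary vertices of a hole is a new routing resource: two boundary vertices of the same hole acquire a common neighbour. A graph linkage-equivalent to the filled piece can therefore realize demands the original piece cannot, and equivalence with $H_i$ is lost. The step is also unnecessary, since a disk with holes is already a plane graph whose holes are faces, and the theorem applies to $(H_i,X_i)$ directly.

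Finally, both your ``preserved both ways'' and the paper's ``it is clear'' leave one point implicit. A vertex on a cutting curve lies in several pieces, and a solution path may pass through it inside one piece while its copy in another piece is unused. The demand posed in that other piece must keep rerouted paths off this vertex, for instance by admitting trivial pairs $(z,z)$ in the demand sets over $X_i$. Otherwise the reglued paths need not be disjoint.
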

\begin{proof}
    Note that Włodarczyk and Zehavi established that in polynomial time, we can construct an {$X$-linkage-equivalent} plane-embedded graph $H'$ of size $O(|X|^{12}\tw(H)^{12})$ for the given plane-embedded graph $H$ with treewidth $\tw(H)$ and a vertex set $X\subset V(H)$~\cite[Theorem 8]{wlodarczyk2023planar}.
    Here, two graphs $H$ and $H'$ are said to be \emph{$X$-linkage-equivalent} if $X\subset V(H), V(H')$ and the \textsc{Planar $k$-Disjoint Paths} problems on $(H,\mathcal T)$ and $(H',\mathcal T)$ are equivalent for any terminal pairs $\mathcal T\subseteq X\times X$. 
    Note that the algorithm works for a planar graph $H$.
    Precisely, it starts with a fixed embedding of $H$ on a plane, and $H'$ preserves the location of vertices and the drawing of edges.
    That means, even if we start with a fixed embedding of $H$ on a plane minus open topological disks, called \emph{boundary components}, the algorithm returns a kernel $H'$ that preserves the boundary components and the fixed location of vertices.
    This property allows us to generalize the technique for the surface by decomposing the surface into several topological disks with boundary components as follows. 

    Intuitively, we can apply their algorithm to achieve $O(\tw^{24}(k+g)^{12})$ sized kernel for the \textsc{$g$-Surface $k$-Disjoint Paths} problem on $(G,\mathcal T)$ embedded on $\Sigma$.
    Particularly, we first insert the small empty boundary component into $\Sigma$ at each of the terminals in $T$, then     
    decompose $\Sigma$ into the subsurfaces $\Sigma_i$'s by \Cref{sur_thm:nice_decomp_tw}. 
    Here, $\Sigma_i$'s are topological disks with holes. 
    Furthermore, $H_i$'s are the induced subgraphs of $G$ embedded on $\Sigma_i$'s, respectively.
    Therefore, we can replace each $H_i$ by the $X_i$-linkage equivalent graph $H_i'$, where $H_i$ is the subgraph of $G$ embedded on $\Sigma_i$ and $X_i$ is the set of vertices incident to the boundary components. 
    The obtained graph $G'$ is embedded on $\Sigma$, and it has $O(\tw^{24}(k+g)^{12})$ vertices. Moreover, it is clear that $G'$ has $k$ vertex-disjoint paths connecting the pairs $\mathcal T$ if and only if $G$ has. Therefore, $(G',\mathcal T)$ is a kernel for the \textsc{$g$-Surface $k$-Disjoint Paths} on $(G,\mathcal T)$.
\end{proof}

Using \Cref{sur_thm:nice_decomp_tw}, we can extend the $(k+\tw)^{O(k)}n$-time algorithm of the \textsc{Planar $k$-Disjoint Paths} problem introduced by Cho et al.~\cite{cho2023parameterized}.
Precisely, in~\Cref{sur_sec:encoding_construction}, we present a $(k+g+\tw)^{O(k+g)}n$-time algorithm for the \textsc{$g$-Surface $k$-Disjoint Paths} problem by extending their scheme.
The treewidth can be reduced to at most $2^{O(k+g)}$~\cite{mazoit2013single,cavallaro2026optimalboundskdisjointpaths}, and we can achieve the treewidth reduction in $2^{O(k+g)}n$ time (\Cref{thm:irrelevant_surface} in \Cref{sec:irrelevant}).\footnote{Golovach et al.~\cite{golovach2025irrelevant} designed a linear-time FPT algorithm for a more general situation by allowing a large time complexity depending on the genus $g$ and $k$. We can achieve $2^{O(k+g)}n$-time by focusing on the \textsc{$g$-Surface $k$-Disjoint Paths} problem. Details are in~\Cref{sec:irrelevant}.}
Therefore, by applying the reduction, we obtain a $2^{O(k^2+g^2)}n$-time algorithm, concluding \Cref{sur_thm:main_result}.

\section{Preliminaries}\label{sec:preliminaries}

For any undefined terms and notation, we stick to~\cite{cho2023parameterized}. 
For a graph $G$, $V(G)$ denotes the vertex set of $G$, and $E(G)$ denotes its edge set of $G$. A \emph{walk} $\omega$ is a sequence of edges that join a sequence of vertices. If all vertices and edges of $\omega$ are distinct, we call $\omega$ a \emph{path}. 
Throughout this paper, we use $[n]$ to denote the set $\{1,2,\ldots, n\}$.

\subparagraph*{Surface and genus.}
A \emph{surface} $\Sigma$ is a connected compact two-dimensional manifold where each point has a neighborhood that is homeomorphic either to the plane or to the closed half-plane. Here, the \emph{boundary} of $\Sigma$ is the set of points whose neighborhoods are homeomorphic to the closed half-plane.
A surface is \emph{non-orientable} if it contains a region homeomorphic to the M\"{o}bius band, and \emph{orientable} otherwise. 
An orientable surface with \emph{orientable genus} $g$ and $b$ \emph{boundary components} is homeomorphic to a sphere where $g$ disjoint disks are removed, a handle is attached to each of the remaining $g$ circles, and then $b$ open disjoint disks are removed. 
Moreover, a non-orientable surface with \emph{non-orientable genus} $g$ and $b$ boundary components is homeomorphic to a sphere with $g$ disjoint crosscaps, instead of handles, and $b$ disjoint open disks removed. See \Cref{sur_fig:cut_operation}.
When the surface has genus zero, then it is a special case.
Especially, the surface is called a \emph{topological disk} if it is homeomorphic to a sphere with one boundary component.
For clarity, we call a surface homeomorphic to a sphere with $b$ boundary components a \emph{topological disk} with \emph{$b-1$ holes}.

We say a drawing of $G$ on $\Sigma$ is a \emph{cellular embedding} if it has no crossings on the edges and every face is homeomorphic to an open disk.
In this paper, all embeddings are assumed to be cellular embeddings.
We call the vertices in $\mathcal T$ \emph{terminals}, and usually denote by $T\subset V(G)$ the set of terminals.
Throughout the paper, we assume that $G$ has no parallel edges or loops, and a cellular embedding of $G$ on the surface $\Sigma$ is given.
Additionally, we call the vertices lying on the boundary of $\Sigma$ the \emph{boundary vertices}.

\subparagraph*{Linkages and weak linkages}For a graph $G$ embedded on a surface $\Sigma$, 
we say two walks $\omega$ and $\omega'$ are \emph{crossing} if there are four edges $e,f,e',f'$ of $G$ sharing a common endpoint such that $e, f$ are consecutive edges of $\omega$, $e',f'$ are consecutive edges of $\omega'$, and $e, e', f$, and $f'$ lie in clockwise order around their common endpoint with respect to the fixed embedding. 
For a set $\mathcal T=\{(s_1,t_1),\ldots, (s_k,t_k)\}$ of vertex pairs, a \emph{$\mathcal T$-linkage} (in a graph $G$) is a sequence $\langle P_1,\ldots, P_k\rangle $ of $k$ {vertex-disjoint} paths in $G$ such that $P_i$ connects $s_i$ and $t_i$ for {$i \in [k]$}.
A \emph{weak $\mathcal T$-linkage} $\mathcal W$ is a sequence $\langle W_1,\ldots, W_k\rangle $ of $k$ non-crossing walks in $G$ such that $W_i$ connects $s_i$ and $t_i$ for $i\in[k]$. That means distinct walks $W_i$ and $W_j$ may share a vertex (including terminals) and edges.
We sometimes call a $\mathcal T$-linkage and a weak $\mathcal T$-linkage for a set $\mathcal T$
simply a \emph{linkage} and a \emph{weak linkage}, respectively, when we do not need to specify $\mathcal T$. 




\subparagraph*{Treewidth.}
A {\emph{tree decomposition}} of a graph $G$ is defined as a pair $(\tau,\beta)$, where $\tau$ is a tree and $\beta$ is a mapping from the nodes of $\tau$ to subsets of $V(G)$ (called the \emph{bags}) satisfying the following properties.
Let $\mathcal {B}\coloneqq \{\beta(t) \colon t\in V(\tau)\}$ be the set of bags of $\tau$.
	\begin{itemize}\setlength\itemsep{0.1em}
		\item  For any vertex $u\in V(G)$, there is at least one bag in $\mathcal {B}$ which contains $u$.
		\item For any edge $uv\in E(G)$, there is at least one bag in $\mathcal {B}$ which contains both $u$ and $v$.
		\item For any vertex $u\in V(G)$, the nodes of $\tau$ containing $u$ in their bags are connected in $\tau$.
	\end{itemize}
	The \emph{width} of a tree decomposition is defined as the size of its largest bag minus one, and the \emph{treewidth} of $G$, denoted by $\tw(G)$, is the minimum width over all tree decompositions of $G$.
    For clarity, we simply denote it by $\tw$ if it is clear from the context.
\medskip

\subsection{Radial Distance and Radial Curves}\label{sur_sec:radial_distance_curves}
Recall that the embedding of $G$ on $\Sigma$ is fixed.
A curve on $\Sigma$ is said to be a \emph{radial curve} if it intersects $G$ in vertices only. 
The {\emph{complexity}} of a radial curve is defined as the number of vertices of $G$ it intersects.
We call a simple closed radial curve a \emph{noose}.

The \emph{radial distance} between two faces $F_1$ and $F_2$ of $G$ is defined as the minimum length of a sequence of faces starting from $F_1$ to $F_2$ such that every two consecutive faces of this sequence share a common vertex. 
Moreover, the radial distance between two vertices $u$ and $v$ is defined as the minimum radial distance between two faces $F_u$ and $F_v$ incident to $u$ and $v$, respectively. 
We denote the radial distance between $u$ and $v$ as $\textsf{rdist}(u,v)$.
Note that there is a radial curve connecting $u$ and $v$ of complexity $\textsf{rdist}(u,v)+1$ but no radial curve of complexity $\textsf{rdist}(u,v)$.
We call such a curve a \emph{shortest radial curve} connecting $u$ and $v$.
For two subsets $X$ and $Y$ of $V$, we define their radial distance
as the Hausdorff distance: $\rdist(X,Y)=\max\{ \max_{x\in X}\min_{y\in Y}\rdist(x,y),  \max_{y\in Y}\min_{x\in X}\rdist(x,y)\}$.

\begin{figure}
    \centering
    \includegraphics[width=0.8\textwidth]{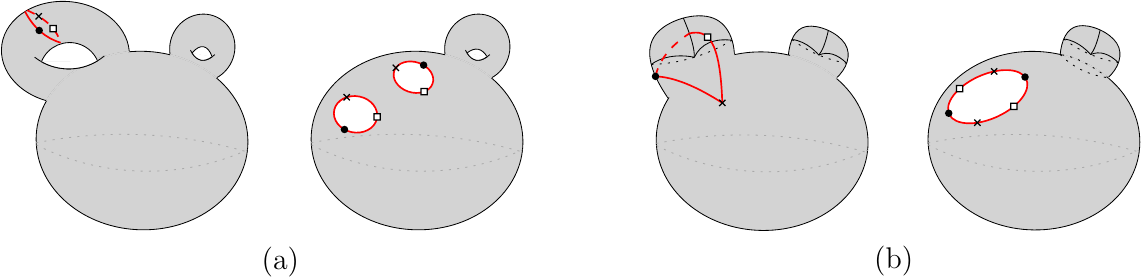}
    \caption{\small 
    Three different marks represent three vertices. (a) A cut along a non-contractible and non-separating noose on an orientable surface with two handles.
    (b) A cut along a noose on a non-orientable surface with two crosscaps whose neighborhood is homeomorphic to a M\"{o}bius band.
    \vspace{-1em}
    }
    \label{sur_fig:cut_operation}
\end{figure}

\subparagraph*{Contractible and separating curves.}
  A noose or a cycle $J$ of $G$ on $\Sigma$ is said to be {\emph{separating}} if its removal from $\Sigma$ provides at least two connected components, otherwise, it is said to be {\emph{non-separating}}.
  We say a region of $\Sigma$ is \emph{enclosed by $J$} if it is the closure of a connected component of $\Sigma-J$.
  Furthermore, we say $J$ is \emph{contractible} if it bounds a topological disk (or a topological disk minus several open disks) on the surface. Otherwise, it is \emph{non-contractible}. 
  We use $\textsf{cl}(J)$ to denote the closed region enclosed by $J$ homeomorphic to a topological disk. 
  If the surface is a sphere, and $J$ obtains two topological disks, then we choose an arbitrary one for $\textsf{cl}(J)$.
  Moreover, we call the subgraph of $G$ embedded on $\textsf{cl}(J)\setminus J$ {the} \emph{inside of} $J$.

It is a well-known property that by cutting along a non-separating noose, we can reduce the {Euler genus} of the surface~\cite[Section~6.3]{erickson2002optimally}.
Moreover, for a separating curve $J$, the sum of the Euler genus of the connected components enclosed by $J$ is equal to the Euler genus of the original surface as stated in \Cref{sur_lem:separating_genus_reducing}.
\begin{lemma}[{\cite[Lemma 2.2]{fuladi2023short}}]\label{sur_lem:separating_genus_reducing}
    Let $J$ be a separating (closed) curve on a surface $\Sigma$ and let $\Sigma_1,\dots,\Sigma_{\ell}$ be the regions enclosed by $J$.
    We have $\textsf{eg}(\Sigma)$ is equal to the sum of all $\textsf{eg}(\Sigma_i)$'s.
\end{lemma}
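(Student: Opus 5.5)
The plan is to prove the additivity through Euler characteristics. For a compact connected surface $S$ with $b(S)$ boundary components, the classification of surfaces gives
\[
\textsf{eg}(S)=2-\chi(S)-b(S).
\]
This holds in both cases: an orientable surface of orientable genus $h$ has $\chi=2-2h-b$ and Euler genus $2h$, and a non-orientable one with $h$ crosscaps has $\chi=2-h-b$ and Euler genus $h$. This formula turns the statement into a count of Euler characteristics and of boundary components. I assume that $J$ is a simple closed curve, as it is for the nooses and cycles to which the lemma is applied. I also assume it is in general position, meaning it either avoids $\partial\Sigma$ or can be pushed off it, so that cutting along $J$ is well defined.

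First I determine the number $\ell$ of regions. A one-sided simple closed curve has a Möbius-band neighbourhood whose boundary is connected, so removing it never disconnects $\Sigma$. Hence a separating $J$ is two-sided, with an annulus neighbourhood $A$. Each component of $\Sigma-J$ meets one of the two sides of $A$, so there are at most two components; since $J$ is separating, $\ell=2$. Cutting $\Sigma$ along $J$ therefore yields compact connected surfaces $\Sigma_1,\Sigma_2$. Each of them contains one copy of $J$ as a new boundary circle, and the boundary components of $\Sigma$ are distributed between them. This gives $b(\Sigma_1)+b(\Sigma_2)=b(\Sigma)+2$.

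Second, I compute the Euler characteristic. I triangulate $\Sigma$ so that $J$ is a subcomplex; for a radial curve one can subdivide the given cellular embedding along $J$. The surface $\Sigma$ is then the union of $\Sigma_1$ and $\Sigma_2$ glued along a circle, whose Euler characteristic is $0$. By inclusion–exclusion for cell counts (or Mayer–Vietoris),
\[
\chi(\Sigma)=\chi(\Sigma_1)+\chi(\Sigma_2)-\chi(J)=\chi(\Sigma_1)+\chi(\Sigma_2).
\]
Combining the two steps,
\[
\textsf{eg}(\Sigma_1)+\textsf{eg}(\Sigma_2)=4-\chi(\Sigma)-\bigl(b(\Sigma)+2\bigr)=2-\chi(\Sigma)-b(\Sigma)=\textsf{eg}(\Sigma).
\]

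The main obstacle is the topological bookkeeping in the first step: justifying that a separating simple closed curve is two-sided and produces exactly two regions, each gaining exactly one boundary circle. This is where the simplicity of $J$ matters. If $J$ touched itself, for instance at a vertex visited twice by a radial curve, I would first perturb it within a small neighbourhood into a simple curve defining the same regions up to homeomorphism, and then apply the argument above.
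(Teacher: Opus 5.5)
The paper gives no proof of this lemma: it cites Fuladi, Hubard and de Mesmay and uses the result as a black box. Your Euler-characteristic computation is a correct, self-contained proof for a simple closed curve $J$ in the interior of $\Sigma$. The identity $\textsf{eg}(S)=2-\chi(S)-b(S)$ matches the convention the paper actually computes with in \Cref{sur_sec:isolating}, where $\textsf{eg}(\Sigma_j)=2-\chi(\Sigma_j)$ is evaluated with boundary components counted as extra faces. The two-sidedness argument, the conclusion $\ell=2$, the count $b(\Sigma_1)+b(\Sigma_2)=b(\Sigma)+2$, and $\chi(\Sigma)=\chi(\Sigma_1)+\chi(\Sigma_2)$ are all sound. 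Your route has one advantage over the bare citation. The same bookkeeping covers the form in which the paper really uses the lemma in that subsection, namely cutting along the $x$ pairwise disjoint boundary cycles of $F$. Cutting along $x$ disjoint two-sided simple closed curves into $p$ pieces gives $\sum_i \textsf{eg}(P_i)=\textsf{eg}(\Sigma)-2(x+1-p)$, so additivity holds exactly when $p=x+1$.

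The one false claim in your proposal is the closing remark that a self-touching $J$ can be perturbed into a simple curve defining the same regions up to homeomorphism. A figure-eight on the sphere has three complementary regions, while every simple closed curve has two. Worse, for non-simple curves the conclusion itself can fail. On the torus, take two parallel meridians touching at a single point; together they form a closed curve whose complement has two components. Cutting along it yields a closed disk and a closed annulus, both of Euler genus $0$, while the torus has Euler genus $2$. The pieces are no longer glued along a single circle, so both of your counts change. Simplicity is therefore a genuine hypothesis, not something perturbation can remove. Drop that remark and state simplicity as an assumption. Note also that the lemma is later invoked for curves enclosing three regions, in the proof of \Cref{sur_thm:nice_decomp_tw}. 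Such curves cannot be simple, so those uses are not covered by the statement you prove.
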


When a graph $G$ embedded on $\Sigma$ and a radial curve $J$ are given, we define the \emph{cut operation} of $G$ along $J$ intuitively. Refer to \Cref{sur_fig:cut_operation}.
For each vertex $v$ on $J$, we split $v$ into two vertices $v_1$ and $v_2$, then we cut all edges incident to $v$, and we reconnect them to $v_1$ and $v_2$ so that every edge is incident to the same vertex that is embedded on the same side with respect to $J$ near $v$.
Note that if $J$ is not a separating noose, it is possible that the resulting graph $G$ is not separated by cutting along $J$.


\subparagraph*{Discrete homotopy.}
The \emph{discrete homotopy} relation is a variant of the standard homotopy on embedded graphs. 
When we deal with discrete homotopy, we always work with the {so-called} {``radial completion''} of $G$. 
The \emph{radial completion} of $G$, denoted by $G^\textsf{rad}$, is a supergraph of $G$ constructed as follows.
We add one vertex for each face $F$ of $G$ and join this vertex and all the vertices incident to $F$.
Note that every face of $G^\textsf{rad}$ is a triangle.
We can consider a radial curve of complexity $N$ as a path in $G^\textsf{rad}$ of length $\Theta(N)$.
Throughout this paper, we use the bound $\tw(\radgraph) = O(\tw(G))$ rather than referring to $\tw(\radgraph)$ explicitly for clarity.

Two weak linkages $\mathcal W$ and $\mathcal W'$ are \emph{discretely homotopic} to each other if one can {be} obtained from the other by a sequence of {so-called ``face operations'' as defined below}.
There are three types of face operations: \textsf{Face Move}, \textsf{Face Pull}, and \textsf{Face Push}.
Each operation is applied to a face $F$ of $\radgraph$ and a walk $W$ of the weak linkage $\mathcal W$. Let $\partial F$ be the cycle consisting of the edges incident to $F$. 
When $\partial F\cap W$ is a subpath of $W$ while $\partial F\setminus W$ is nonempty and does not appear in any walks of $\mathcal W$, then the \textsf{Face Move} operation replaces the subwalk $\partial F\cap W$ {by} $\partial F\setminus W$.
When $\partial F$ is a subwalk of $W$, then the \textsf{Face Pull} operation removes the subwalk $\partial F$ from $W$.
When no edge of $\partial F$ appears in any walks of $\mathcal W$ while there {appear} two consecutive edges $e$ and $e'$ in $W$ {sharing} a common endpoint $v$ incident to $F$, the \textsf{Face Push} operation inserts $\partial F$ to $W$ starting and terminating at $v$ between $e$ and $e'$.
Note that the collection of walks obtained through these operations is still a weak linkage. 

\subsection{The Planar Disjoint Paths Algorithm}\label{sur_sec:preliminary_pdp}
Cho et al.~\cite{cho2023parameterized} developed a $2^{O(k^2)}n${-time} algorithm for the \textsc{Planar $k$-Disjoint Paths} problem on $(G,\mathcal T)$.
Their main approach is enumerating $(k+\tw(G))^{O(|T|)}$ weak linkages $\mathcal{L}$ connecting a terminal set $T$ so that if $G$ has a $\mathcal T$-linkage for terminal pairs $\mathcal T\subseteq T\times T$ with $|\mathcal T|\leq k$, then $\mathcal{L}$ contains a weak linkage homotopic to some $\mathcal T$-linkage.
Their approach is naturally extended to the plane with $b$ boundary components.
Briefly, by identifying each boundary component into a single terminal with multiplicity, 
the problem is reduced to have $b$ terminals on a plane with no holes, i.e., no boundary component. Refer to~\Cref{sur_fig:framed_regions}(a).
Precisely, for an identified terminal $v$, it is enough to allow a multiplicity twice the number of boundary vertices identified into $v$.
Therefore, by applying the algorithm of Cho et al.~\cite{cho2023parameterized}, we can enumerate $M^{O(b)}$ weak linkages connecting the boundaries,
where $M$ is the number of boundary vertices plus $\tw(G)$.
In this section, we provide a slightly more in-depth summary of this scheme.
More details are in~\Cref{ap:enumerating}.

\begin{figure}
    \centering
    \includegraphics[width=0.85
    \textwidth]{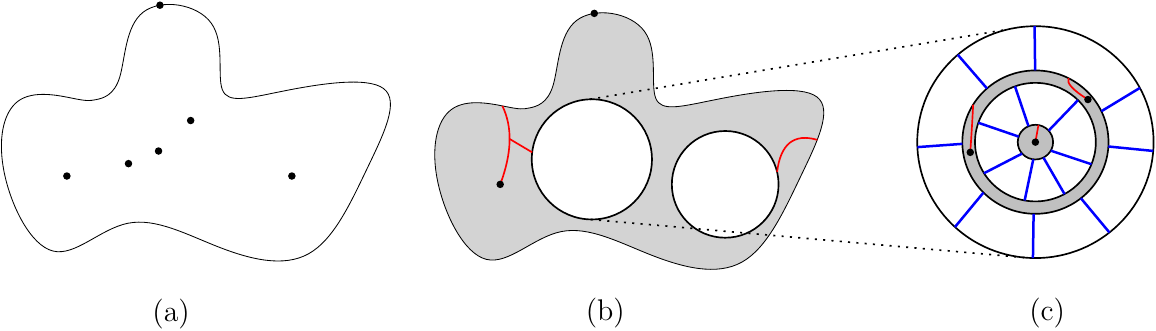}
    \caption{\small 
    (a) An embedding of a graph $\widetilde G$ in the plane. The points represent the boundary vertices.
    (b-c) 
    The white and gray areas represent boundary-free annuli and boundary-containing regions, respectively.
    The red and blue lines illustrate the skeleton forests and reference paths, respectively.
    \vspace{-1em}
    }
    \label{sur_fig:framed_regions}
\end{figure}

\subparagraph*{Enumerating weak linkages.}
Let $G$ be a planar graph embedded on the plane with $b$ boundary components.
We consider a modified graph $\widetilde G$ (and $\widetilde{\radgraph}$) obtained from $G$ (and $\radgraph$) by identifying each boundary component into a single boundary vertex. 
Note that $\widetilde G$ has $b$ boundary components, but it is embedded on a plane without a hole. Refer to \Cref{sur_fig:framed_regions}(a).
Thus, we can apply the approach of Cho et al.~\cite{cho2023parameterized} with respect to $\widetilde G$.

In the following, we use the bound $\tw(\widetilde G ) \leq \tw(G)$ rather than referring to $\tw(\widetilde G )$ explicitly for clarity.
By applying the algorithm in~\cite{cho2023parameterized}, we can construct the following structures: \emph{frames, reference paths, and skeleton forests} with respect to $\widetilde {\radgraph}$ as follows.
See \Cref{sur_fig:framed_regions} for an illustration.
The \emph{frames} are $O(b)$ nooses each with complexity $O(\tw (G))$ on $\Sigma$ which are pairwise vertex-disjoint and do not intersect any boundary vertex. 
Frames decompose the plane $\Sigma$ into $O(b)$ regions, each of which satisfies one of the following:
\begin{itemize}
\item The region contains no boundary vertex and is homeomorphic to an annulus bounded by two frames where the radial distance between the frames is at least $\Omega(\tw(G))$ within the annulus, or
\item The boundary vertices and frames incident to the region are pairwise separated by at most $O(\tw(G))$ in radial distance.
\end{itemize}
We call these two kinds of regions \emph{boundary-free annuli} and \emph{boundary-containing regions}, respectively.
Note that even if we uncontract the boundary vertices of $\widetilde G$ (and $\widetilde{\radgraph}$), a boundary-containing region fully includes or excludes any boundary component of $G$ since frames are disjoint from boundary vertices.

For a boundary-containing region, there is a collection of shortest radial curves so that cutting the region along the curves transforms it into a topological disk where all boundary vertices and frames appear on the single boundary.
We call the union of those radial curves the \emph{skeleton forest} in the region. The skeleton forest for all boundary-containing regions consists of $O(b)$ radial curves, each of complexity $O(\tw(G))$.
For each boundary-free annulus, Cho et al.~\cite{cho2023parameterized} constructed $O(\tw(G))$ pairwise vertex-disjoint paths connecting two bounding frames, called \emph{reference paths}, satisfying the following lemma.
Here, $\widetilde{\radgraph}$ is the graph obtained from $\radgraph$ by identifying each boundary component into a single vertex. 
Note that it is a supergraph of the radial completion of $\widetilde G$.
Furthermore, the frames and a skeleton forest are cycles and a forest, respectively, in $\widetilde{\radgraph}$.

\begin{restatable}{lemma}{LemSummaryPDP}\label{sur_lem:summary_pdp}
    When a planar graph $G$ is embedded on a plane with $b$ boundary components,
    we let  $M$ be the number of boundary vertices of $G$ plus $\tw(G)$.
    Then we can construct frames, skeleton forests, and reference paths, and enumerate $M^{O(b)}$ weak linkages $\mathcal W$ connecting boundary vertices on $\widetilde{\radgraph}$ in $M^{O(b)}\cdot O(n)$ time such that:
    \begin{itemize}
        \item $\mathcal W$ is a set of pairwise vertex-disjoint walks in $G$ except on frames or skeleton forests, 
        \item For a boundary-free annulus $\circledcirc$, $\mathcal W$ traverses in $\circledcirc$ along the reference paths, and
        \item If $\mathcal W$ traverses an edge in $\widetilde{G^{\textsf{rad}}}$ but not in $G$, then it is an edge on a frame or a skeleton forest.
        Additionally, such an edge is traversed at most $O(\tw(G))$ times by $\mathcal W$.
    \end{itemize}
    
    Furthermore, for any collection of pairs $\mathcal T$ of boundary vertices in $G$, some $\mathcal T$-linkage of $G$, if exists, is discretely homotopic to a weak linkage $\mathcal W$ from the enumerated weak linkages.
\end{restatable}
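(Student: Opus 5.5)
The plan is to reduce the statement to the plane case of Cho et al.~\cite{cho2023parameterized} by identifying each boundary component of $G$ into a single vertex, which gives $\widetilde G$ and $\widetilde{\radgraph}$. The plane instance then has $b$ terminals, each with multiplicity twice the number of boundary vertices merged into it, so the total multiplicity is $O(M)$. We also need $\tw(\widetilde G)\le \tw(G)+O(b)$, or a bound of the form $O(M)$. This holds because identifying a boundary component amounts to contracting the cycle on the face boundary, treated as a hole, together with adding at most one vertex per hole to every bag. After that we run their pipeline on $\widetilde{\radgraph}$ in four steps.

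First, we compute frames. We grow radial balls around the $b$ identified terminals and, wherever two clusters are radially far apart, cut along nooses of complexity $O(\tw)$. The existence of such nooses comes from the planar fact that radial distance $\Omega(\tw)$ guarantees separating nooses of complexity $O(\tw)$; this is a grid-minor/treewidth duality argument. The construction takes $O(n)$ time per level using BFS in $\widetilde{\radgraph}$, and $O(b)$ levels suffice because the cluster structure forms a tree with $b$ leaves. This produces the boundary-free annuli and the boundary-containing regions. Since the identified vertices are never used by frames, each region contains or excludes whole boundary components.

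Second, inside each boundary-containing region we take shortest radial curves joining all terminals and frames to a root. These curves have complexity $O(\tw)$ by the radius bound, and together they form the skeleton forest. Third, inside each boundary-free annulus we use Menger's theorem to extract $O(\tw)$ disjoint reference paths between the two frames.

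Fourth, we enumerate. In each boundary-containing region, a weak linkage is described, up to discrete homotopy, by how many times it crosses each skeleton edge and frame segment. After cutting along the skeleton, the region becomes a disk, and non-crossing walks inside a disk are determined by these crossing counts, giving $M^{O(1)}$ choices per curve and $M^{O(b)}$ choices overall. In each annulus we choose a winding number relative to the reference paths, and the number of useful winding values is bounded by $O(M)$. Each candidate is realized in $O(n)$ time as walks routed along the frames, the skeleton forest, and the reference paths. This yields the three listed properties, including the $O(\tw)$ bound on how often an edge is traversed, which holds because multiplicities are $O(M)$.

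The completeness claim is where we expect the main difficulty. We need that any $\mathcal T$-linkage can be discretely homotopically shifted so that it crosses each skeleton curve $O(M)$ times and winds through each annulus at most $O(M)$ times. For the crossing counts we plan to use the standard shortcutting argument, as in Cho et al.: a minimal-crossing linkage that crosses a shortest radial curve too often can be rerouted along that curve, which contradicts minimality. For the winding we plan to use the fact that an annulus of radial width $\Omega(\tw)$ cannot force large twisting. The remaining task is to check that the holes, once identified into vertices with multiplicity, do not invalidate these exchange arguments. Face operations never pass through a boundary vertex, because boundary vertices are terminals in $\widetilde{\radgraph}$, so discrete homotopy in $\widetilde{\radgraph}$ transfers back to $G$.
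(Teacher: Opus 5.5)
Your overall route matches the paper's. You identify each hole to a single vertex whose multiplicity is twice its number of boundary vertices, build frames, skeleton forests and reference paths on $\widetilde G$ as in Cho et al., count homotopy classes region by region, and multiply. The only structural difference is that you grow radial balls where the paper uses the tight concentric cycle sequences of \Cref{sur_lem:isolated_linear}.

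The step that fails as stated is the enumeration in a boundary-containing region. You claim that, after cutting along the skeleton forest, the non-crossing walks in the disk $\Delta$ are determined up to discrete homotopy by how often they cross each skeleton edge and frame segment. They are not. Inside $\Delta$ the weak linkage is a system of pairwise non-crossing chords between the $O(b')$ boundary portals of $\Delta$. The counts record only how many chord ends lie on each portal, not which portals are joined. Already four portals $a,b,c,d$ in cyclic order, each carrying one endpoint, admit the two non-crossing pairings $\{ab,cd\}$ and $\{ad,bc\}$. These in general give different discrete homotopy classes; if $a,b,c,d$ are terminals they even connect different pairs. Enumerating count vectors and realizing one walk system per vector therefore loses completeness.

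The missing ingredient is what the paper enumerates instead: a weighted diagonalization of $\Delta$. This is an outerplanar pattern of diagonals on the portals (at most $2^{O(b'\log b')}$ choices) together with a multiplicity per diagonal bounded by the portal weights. It gives $(M_{\textsf o}+\tw(G))^{O(b')}$ classes per region and hence still $M^{O(b)}$ overall. Crossing counts with all arcs of a triangulation of $\Delta$ would also work, but skeleton edges alone do not.

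Relatedly, your completeness plan puts effort in the wrong place. The number of times a $\mathcal T$-linkage crosses a frame or skeleton curve needs no shortcutting argument. A vertex-disjoint linkage meets a radial curve of complexity $O(\tw(G))$ in at most $O(\tw(G))$ vertices.

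This vertex-disjointness, not the multiplicity bound $O(M)$, is also what gives the third bullet. $M$ includes the number of boundary vertices and can be much larger than $\tw(G)$; in the application it is $\Theta((k+g)\tw(G))$. So ``multiplicities are $O(M)$'' only yields $O(M)$ traversals per edge.

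For the same reason you need frames of complexity $O(\tw(G))$, not $O(\tw(\widetilde G))=O(\tw(G)+b)$. This holds because the grid-minor argument runs inside a boundary-free ring. Such a ring is untouched by the identifications and is therefore a subgraph of $G$, so your detour through $\tw(\widetilde G)$ is unnecessary there.

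The genuinely nontrivial completeness ingredient is bounded winding through boundary-free annuli. You, like the paper, take it from Cho et al. rather than proving it.
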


The above lemma summarizes the result obtained by applying the arguments from Sections~5--7 of~\cite{cho2023parameterized} to the graph $\widetilde G$. For completeness, we provide the details in~\Cref{ap:enumerating}. Briefly notice that, the original statements established by Cho et al.~\cite{cho2023parameterized} use $2^{O(|\mathcal T|)}$ instead of the $\tw(G)$.
This is because the authors achieved a bound on $\tw(G)$ using the so-called \emph{irrelevant vertex technique}~{\cite[Theorem 4.1]{cho2023parameterized}}. 
However, we can naturally derive the above lemma from their algorithms and proofs.

\begin{figure}
    \centering
    \includegraphics[width=0.75
    \textwidth]{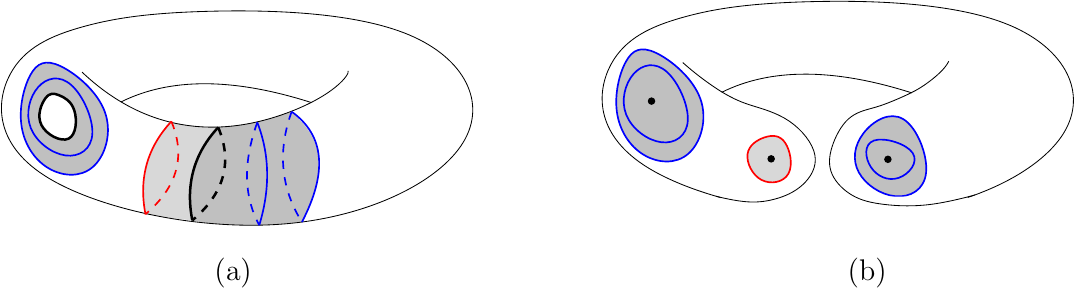}
    \caption{\small 
   The homeomorphic noose (in (a)) to a noose or boundary component in $\Sigma$ corresponds to the contractible noose (in (b)) in the modified surface isolating the contracted vertex.
   \vspace{-1em}
    }
    \label{sur_fig:noose_vertex}
\end{figure}

\section{Surface Cutting}\label{sur_sec:cut_reduction}
In this section, we enforce that our surface $\Sigma$ has $O(k)$ boundary components by adding a small empty hole for each of the terminals in $T$. The boundary vertices consist of the terminals in $T$.
We aim to cut the graph $G$ and the surface $\Sigma$ along $O(k+g)$ curves so that each resulting subsurface has Euler genus zero while bounding the complexity of the boundary components.
Particularly, we prove the following theorem, which can be seen as our main contribution and whose algorithm is a crucial step that allows for the application of the full power of techniques originally developed for plane-embedded graphs.

\SmallNiceSubgraph*

To prove the theorem, we require an algorithm computing the \emph{farthest} homotopic noose to an arbitrary noose $J$ with respect to the radial distance on $\Sigma$. 
We design such an algorithm by reducing the problem to finding a longest sequence of pairwise disjoint contractible cycles ``isolating'' a single vertex.
Specifically, we identify the vertices on $J$ into a single vertex $v$ and cut the surface at $v$.
Then we apply our isolating cycles algorithm to $v$. \Cref{sur_fig:noose_vertex} illustrates this reduction.
\Cref{sur_sec:isolating} explains how we find such a longest sequence in $O(|E(G)|)$ time. 
We then prove \Cref{sur_thm:nice_decomp_tw} in \Cref{sur_sec:prove_cutreduction} by applying this algorithm.

\subsection{Finding a Longest Sequence Isolating a Vertex}\label{sur_sec:isolating}
We say a contractible cycle (or noose) $C$ \emph{isolates} a vertex $v$ if its enclosing region contains $v$ but excludes all boundary components.
Moreover, we say a vertex $v$ is $\ell$-isolated if there is a sequence $\langle C_1,\ldots, C_{\ell}\rangle$ of cycles isolating $v$. Here, $C_1,\ldots, C_\ell$ are said to be \emph{concentric} if they are pairwise vertex-disjoint contractible cycles with $\cl(C_{i})\subsetneq \cl(C_{i+1})$ for $i\in[\ell-1]$.
In this section, we describe how to find a longest sequence of concentric cycles {isolating} $v$.

We say a sequence $\langle C_1,\ldots, C_{\ell}\rangle$ of concentric cycles is \emph{tight} {if} there is no other contractible noose $C$ lying in $\textsf{cl}(C_{i+1})\setminus \textsf{cl}(C_i)$ so that $\textsf{cl}(C_i)\subsetneq\textsf{cl}(C)\subsetneq \textsf{cl}(C_{i+1})$ for any $i\in[\ell-1]$.  
Note that a vertex $v$ is $\ell$-isolated if and only if there is a tight sequence of $\ell$ concentric cycles isolating $v$.
The remainder of this section is devoted to proving the following lemma.

\begin{restatable}{lemma}{LemIsolatedLinear}\label{sur_lem:isolated_linear}
    For a vertex $v$, we can compute a longest tight sequence $\langle C_1,\ldots, C_{\ell}\rangle$ of concentric cycles in $G$ isolating $v$ in {$O(|E(G)|)$} time.    
\end{restatable}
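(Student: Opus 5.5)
We plan a greedy "peeling" construction that grows the cycles outward from $v$, one layer at a time, in the spirit of a BFS in the radial graph. Start with $C_0=\{v\}$. Given the current cycle $C_i$ (or $v$ for $i=0$), compute the next cycle $C_{i+1}$ as the innermost contractible cycle that is vertex-disjoint from $C_i$ and encloses $\cl(C_i)$ strictly. Concretely, delete $V(\cl(C_i))$, look at the faces of the remaining graph that touch the deleted region, and take the boundary walk of the union of $\cl(C_i)$ and its incident faces. We will argue that this boundary walk, after removing dangling trees, is the innermost candidate cycle $C_{i+1}$.

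The key exchange claim is that the greedy choice is optimal. Let $\langle C'_1,\ldots,C'_m\rangle$ be any concentric sequence isolating $v$. By induction, $\cl(C_i)\subseteq\cl(C'_i)$. The step is as follows. Since $C'_{i+1}$ is disjoint from $C'_i$ and encloses $\cl(C'_i)\supseteq\cl(C_i)$, the vertices of $C'_{i+1}$ avoid $\cl(C_i)$. So $C'_{i+1}$ is a candidate at step $i+1$, and the innermost candidate satisfies $\cl(C_{i+1})\subseteq\cl(C'_{i+1})$. To make this legitimate, we need the innermost candidate to be unique. We get this from an uncrossing argument: the region given by the union of $\cl(C_i)$ and the faces incident to it is contained in every candidate's closure. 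Hence the outer boundary of this region is contained in every candidate's disk, and if it is a contractible cycle isolating $v$, it is the unique minimum. Tightness then follows directly from minimality.

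The greedy process stops when the next region is no longer a disk excluding all boundary components. This happens when the grown region touches a boundary component, or when its outer boundary ceases to be a contractible cycle because the region wraps around a handle or crosscap. The number of cycles produced is then exactly $\ell$, by the induction above.

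For the running time, each layer is found by scanning the faces and edges incident to the newly covered vertices. Every edge is charged $O(1)$ times when it is first absorbed into $\cl(C_{i+1})$, which gives $O(|E(G)|)$ in total over all layers. Extracting the outer boundary is done by tracing face boundaries through the rotation system, in time linear in the size of the new layer.

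\textbf{Main obstacle.} The hardest step is detecting contractibility of the region locally, without global homotopy computations, while keeping the total time linear. The plan is to maintain the Euler characteristic $V-E+F$ of the grown region incrementally. The region is a disk exactly when this characteristic is $1$ and its boundary is a single closed walk. We also check that no boundary component has been absorbed. Both quantities update in $O(1)$ per absorbed element, which keeps the total linear time.
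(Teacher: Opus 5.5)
Your overall skeleton matches the paper's. You grow layer by layer, observe (as in Claim~\ref{sur_claim:inductively_isolated}) that every admissible $C_{i+1}$ must enclose $\cl(C_i)$ together with every face incident to it, and conclude by an exchange argument that an innermost greedy choice gives a longest tight sequence. The gap is in how you identify $C_{i+1}$ and decide whether it exists.

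Let $F$ be the union of faces incident to $\cl(C_i)$ and $R=\cl(C_i)\cup F$. Then $R$ need not be a disk when $C_{i+1}$ exists. Two faces incident to $\cl(C_i)$ that are not consecutive around it may share an edge or vertex farther out. In that case $R$ has several boundary cycles, all but one of which bound ``pockets'' of faces not incident to $\cl(C_i)$; this is exactly the situation of Figure~\ref{sur_fig:multiple_bfs}(b). The correct $C_{i+1}$ is then one boundary cycle of $R$, and $\cl(C_{i+1})$ must swallow the pockets. Your test ``$\chi=1$ and a single boundary walk'' rejects this situation, so your procedure stops early and the sequence it returns need not be longest.

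Worse, no test based only on $R$ can work. $R$ may be an annulus for two different reasons:
\begin{itemize}
\item it wraps around a handle, so both boundary cycles are non-contractible and no $C_{i+1}$ exists; or
\item its inner boundary cycle bounds a disk avoiding the boundary of $\Sigma$, and then the outer cycle is a valid $C_{i+1}$.
\end{itemize}
The Euler characteristic and the number of boundary cycles are the same in both cases. Which cycle is ``outer'', and whether the others bound boundary-free disks, depends on the far side of each cycle. So ``the outer boundary of this region'' in your uniqueness step is not computable from $R$ alone.

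The missing ingredient is an exploration of the complement of $R$, together with an accounting that keeps it linear. The paper does the following:
\begin{itemize}
\item It searches outward simultaneously from each of the $x$ boundary cycles of $F$.
\item It halts if two searches meet: then not all boundary cycles of $F$ are contractible, which by Claim~\ref{sur_claim:inductively_isolated} rules out $C_{i+1}$.
\item Otherwise it uses the Euler characteristics of the explored pieces to find the contractible cycle with $F$ on its disk side, and to check that the other pieces are boundary-free disks.
\item For linear time, the searches run in lockstep and the last one is aborted once the other $x-1$ have finished. Its Euler genus is recovered from $\textsf{eg}(F)$ and the others via Lemma~\ref{sur_lem:separating_genus_reducing}.
\end{itemize}
This way the potentially huge outer side is never scanned in full. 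The work of the round producing $C_i$ is charged to the layer $\cl(C_i)\setminus\cl(C_{i-1})$, which contains the pockets. Your charging (``each edge is paid for when first absorbed'') does not cover the work spent exploring the outside to certify pockets. So even with a corrected test, you would still need this lockstep-and-abort argument.
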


We describe how to compute $C_i$ if it exists by assuming that $C_{i-1}$ is given for $i\in [\ell]$.
Initially, we set $C_0$ as the single vertex $v$.
To compute $C_i$, we first contract $\textsf{cl}(C_{i-1})$ {into a single} vertex, and {refer to the resulting vertex as $v$}.
Recall that $C_{i-1}$ is a contractible cycle.
Let $F$ be the subsurface of $\Sigma$ which is the union of the faces of $G$ incident to $v$. 
Note that the boundary components of $F$ are cycles in $G$.
If a contractible cycle $C$ isolates $v$ among the boundary components of $F$, we set {$C_i\coloneqq C$}, otherwise we terminate.
\Cref{sur_claim:inductively_isolated} guarantees the tightness, and this implies the correctness of our algorithm.

\begin{figure}
    \centering
    \includegraphics[width=0.7
    \textwidth]{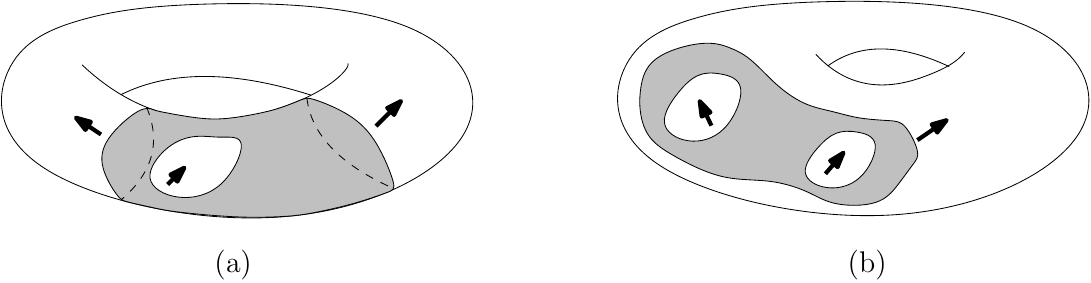}
    \caption{\small 
    The gray region represents $F$.
    (a) $F$ has three boundary components, but two of them are non-separating, and the searches starting from the components meet at some point, thus, the algorithm halts.
    (b) $F$ has three boundary components {all of which} are contractible, furthermore, one contains $F$ inside.
    Thus, our algorithm returns the boundary component as $C_i$.
    \vspace{-1em}
    }
    \label{sur_fig:multiple_bfs}
\end{figure}

\begin{restatable}{claim}{ClaimInductiveIsolated}\label{sur_claim:inductively_isolated}
    Let $C$ be a contractible cycle that is disjoint from $v$.
    Then $\textsf{cl}(C)$ contains $v$ if and only if $F$ is contained inside $\textsf{cl}(C)$, where $F$ is the union of the faces of $G$ incident to $v$.
\end{restatable}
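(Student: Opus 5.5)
The claim is essentially topological, so I would work with the closed disk $D\coloneqq\textsf{cl}(C)$ and the open faces incident to $v$. Throughout, $C$ is a cycle of $G$ with $v\notin V(C)$, and every face of $G$ is an open disk, since the embedding is cellular. The key observation is that a face $f$ of $G$ is an open disk disjoint from $G$, so $f\cap C=\emptyset$. Hence $f$ is a connected subset of $\Sigma-C$ and lies entirely inside a single component of $\Sigma-C$. Consequently, the closure $\overline f$ lies either in $D$ or in the closure of the complementary region. Everything then reduces to tracking which side each face incident to $v$ falls on.

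For the ``if'' direction, suppose $F\subseteq D$. Since $v$ is incident to at least one face, $v$ lies in the closure of that face, hence in $F\subseteq D$, and so $v\in\textsf{cl}(C)$.

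For the ``only if'' direction, suppose $v\in\textsf{cl}(C)$. Because $v\notin C$, the vertex $v$ lies in the interior $D\setminus C$, which is open in $\Sigma$. Take a small open neighborhood $U$ of $v$ inside $D\setminus C$. Every face $f$ incident to $v$ has $v$ in its closure, so $f\cap U\neq\emptyset$. Thus $f$ meets the interior of $D$, and by the observation above $f\subseteq D\setminus C$. Taking closures, $\overline f\subseteq D$, using that $D$ is closed. Since $F$ is the union of the closures of the faces incident to $v$, we get $F\subseteq D$.

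The only subtle point is making sure that ``contained inside $\textsf{cl}(C)$'' is well defined on a general surface, that is, that $\Sigma-C$ genuinely has a component whose closure is the disk $\textsf{cl}(C)$, and that faces cannot straddle $C$. I would settle this by appealing to the definition of contractible, which says $C$ bounds a disk, possibly minus holes. In the sphere case, $\textsf{cl}(C)$ is the chosen side, and the same argument applies verbatim to whichever side was fixed. Edges incident to $v$ also lie in $F$ and are handled automatically: each such edge is contained in the closure of an incident face, and its interior avoids $C$ except possibly at an endpoint.
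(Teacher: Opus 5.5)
Your proof is correct and follows the same idea as the paper's: a face incident to $v$ cannot straddle the cycle $C$, so it lies on the same side as $v$. The paper phrases this discretely, via the radial-completion edge joining $v$ to the face vertex $u$, which cannot connect $\cl(C)\setminus C$ to $\Sigma\setminus\cl(C)$. You instead use directly that the open face is connected, disjoint from $C$, and meets a neighborhood of $v$ inside the open component $\cl(C)\setminus C$, which is a slightly more explicit rendering of the same step.
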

\begin{proof}
The `if' direction is clear since $v$ refers to a single vertex that is incident to every face of $F$.
We prove the `only if' direction.
For this, we fix a contractible cycle $C$ of $G$ {whose inside contains $v$}, and we show that {any face $F'\subseteq F$ of $G$ must be contained in $\cl(C)$.}
Let $u$ be the facial vertex in the radial completion $\radgraph$ which corresponds to $F'$.
Observe that $u$ and the vertex $v$ are adjacent in $\radgraph$ by the definition of $F$.

Since $C$ is a separating cycle on $\Sigma$, no edge of {the} radial completion $\radgraph$ connects two vertices each in $\textsf{cl}(C)\setminus C$ and $\Sigma\setminus \textsf{cl}(C)$. 
By the construction, $v$ is in $\cl(C)$ and $u$ is adjacent to $v$ in $\radgraph$.
Therefore, $\cl(C)$ contains $u$ and the face $F'$, which completes the proof.
\end{proof}

When a boundary component
of $F$ is contractible and isolates $v$, then all boundaries of $F$ are contractible by \Cref{sur_claim:inductively_isolated}.
If no such contractible boundary component exists, then the algorithm halts. 
In the following, we describe how to check if all boundary components of $F$ are contractible, and then show how to choose the one isolating $v$ if it exists.

Here, we contract all boundary components of $F$ into loops for convenience. We conduct $x$ simultaneous depth-first searches on $G$ starting from each boundary component of $F$ towards the opposite of $F$, where $x$ is the number of boundary components of $F$.
Each search starts from the vertex on each boundary component of $F$, see \Cref{sur_fig:multiple_bfs}.

\subparagraph*{Halting conditions and Euler characteristic.}
Let $x$ be the number of boundary components of $F$.
Note that all boundary components of $F$ are contractible if and only if they decompose the surface $\Sigma$ into $(x+1)$ subsurfaces (including $F$) and they all have Euler genus zero except one.
Therefore, we can terminate the algorithm immediately if two searches started from distinct components visit the same vertex. 
Now, assuming this does not happen, we let $H_1,\ldots, H_{x}$ be the subgraphs of $G$ traversed by the searches.
In particular, all boundary components are separating, and $H_1,\ldots, H_x$ are embedded on subsurfaces $\Sigma_1,\ldots, \Sigma_x$ of $\Sigma$ such that $F$ and $\Sigma_1,\ldots, \Sigma_x$ are pairwise interior-disjoint and their union is $\Sigma$.

During the traversal, we compute the genus $\textsf{eg}(\Sigma_1),\ldots ,\textsf{eg}(\Sigma_x)$ by computing the \emph{Euler characteristic} $\chi(\Sigma_j)=|V(H_j)|-|E(H_j)|+|F(H_j)|+|B(H_j)|$ of $H_j$ for $j\in[x]$, where $V(H_j), E(H_j), F(H_j)$, and $B(H_j)$ are the number of vertices, edges, faces, and boundary components of $H_j$ on $\Sigma_j$, respectively. 
The Euler genus of the subsurface $\Sigma_j$ is defined as $\textsf{eg}(\Sigma_j)=2-\chi(\Sigma_j)$. 
If some $\Sigma_j$ has Euler genus $g$, then its corresponding boundary component $C$ of $F$ is contractible and includes $v$ inside. Refer to \Cref{sur_fig:multiple_bfs}(b). 
Thus, if such a subsurface $\Sigma_j$ contains all boundary components of $\Sigma$, then $C$ isolates $v$.
In this case, we set $C_i\coloneqq C$, and we move to the next iteration.
In the other case, there is a non-contractible boundary component of $F$, or no boundary component of $F$ isolates $v$.
Therefore, there is no contractible cycle isolating $v$ by \Cref{sur_claim:inductively_isolated}, and our algorithm halts.

Note that each iteration described above takes $O(|E(G)|)$ time to compute the respective $C_i$. Thus, the algorithm takes $O(|E(G)|\ell)$ time to find $\ell$ tight concentric cycles.
Since $\ell$ is in $O(n)$, the algorithm runs in quadratic time in the worst case.
{However, we can achieve linear time as shown below.}
Briefly, we instead abort when $(x-1)$ depth-first searches are halted among $x$ simultaneous depth-first searches, and we compute the genus of the subsurface $F$ of $\Sigma$.
Then, we compute the genus for all subsurfaces $\Sigma_1,\ldots, \Sigma_x$ according to \Cref{sur_lem:separating_genus_reducing}. 

\subparagraph*{Linear-time.}
Originally, our algorithm halts after all $x$ simultaneous depth-first searches are halted.
To improve efficiency, we instead abort when $(x-1)$ depth-first searches are halted, and we compute the genus of the subsurface $F$ of $\Sigma$.
Then we can compute the genus of $(x-1)$ regions among $\Sigma_1,\ldots, \Sigma_x$, and the genus of $F$.
However, since the sum of $\textsf{eg}(\Sigma_1), \ldots, \textsf{eg}(\Sigma_x),$ and $ \textsf{eg}(F)$ is $g$ by \Cref{sur_lem:separating_genus_reducing}, the algorithm provides all of the genuses.
Thus, we can check whether all boundary components of $F$ are contractible or not, and return the same solution as the original.
In the following, we analyze the running time.

We let $C_1,\ldots, C_\ell$ be the returned concentric cycles isolating $v$, and $G_i$ be the induced subgraph embedded on $\textsf{cl}(C_{i})\setminus \textsf{cl}(C_{i-1})$ for $i\in[\ell]$.
We show that to compute $C_i$ for $i\in[\ell-1]$, the modified algorithm takes $O(|E(G_i)|)$ time. 
Although we perform $x$ simultaneous depth-first searches, the algorithm aborts the last one when the other searches terminate.
If the aborted search traverses the outside of $\textsf{cl}(C_i)$, then the algorithm naturally takes $O(|E(G_i)|)$ time.
In the other case, the complexity of the subgraph $G_i$ is at least the complexity outside of $\textsf{cl}(C_{i})$, then the {total number of operations our algorithm takes is} at most $O(|E(G_i)|)$.
{Since $\textsf{cl}(C_{i})\setminus \textsf{cl}(C_{i-1})$ includes $F$, computing $\textsf{eg}(F)$ also runs in this time bound. }
Therefore, the total running time is $O(|E(G)|)$ time, which completes the proof of \Cref{sur_lem:isolated_linear}.

   \begin{figure}
    \centering
    \includegraphics[width=0.7
    \textwidth]{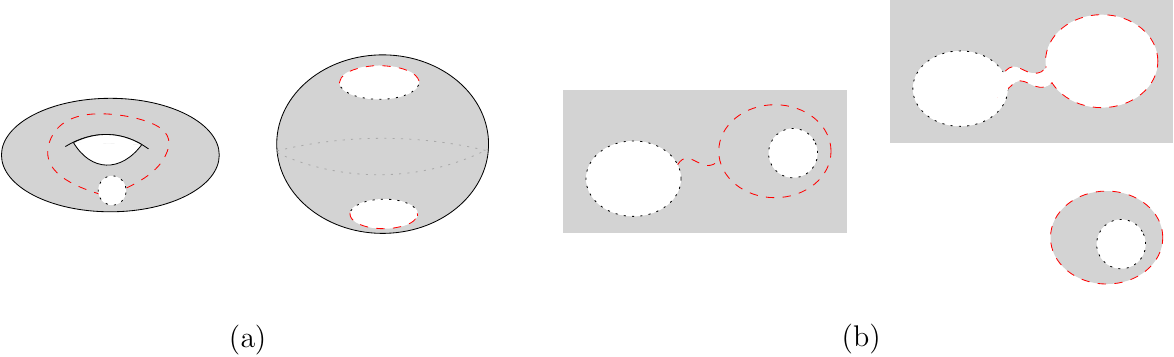}
    \caption{\small 
    (a) By cutting a non-separating and non-contractible noose, we obtain a surface with a reduced Euler genus.
    (b) A contractible noose decomposes the surface into two components so that one is embedded in the plane, and the other has fewer boundary components.  }
    \label{sur_fig:cut_reduction}
\end{figure}

\subsection{Cut Reduction}\label{sur_sec:prove_cutreduction}
For a subgraph $H$ of $G$, we define a vertex of $H$ to be a \emph{boundary vertex} if it is on the boundary of $\Sigma$ or adjacent to a vertex of $V(G)\setminus V(H)$ in $G$. 
Note that $H$ is embedded on a subsurface $\Sigma'$ of $\Sigma$ such that all boundary vertices of $H$ lie on the boundary of $\Sigma'$.
We choose $\Sigma'$ as the inclusion-wise minimal among such subsurfaces, then we let \emph{boundary component} of $H$ be the set of boundary vertices of $H$ lying on the same boundary component of $\Sigma'$.

In this section, we prove \Cref{sur_thm:nice_decomp_tw}.
Precisely, we describe how to find radial curves of complexity $\tw(G)$ which decompose $G$ into $O(k+g)$ subgraphs that are plane-embedded.
Initially, we are given $G$ embedded on $\Sigma$ with Euler genus $g$ and $k$ boundary components.
In each iteration, 
we call \Cref{sur_lem:cut_reduction} and cut the surface and graph to reduce the Euler genus or the number of boundary components (see \Cref{sur_fig:cut_reduction}).
Here, $\Sigma'$ be the inclusion-wise minimal subsurface of $\Sigma$ such that $H$ is embedded on $\Sigma'$ and its boundary vertices are on the boundary of $\Sigma'$.

\begin{restatable}{lemma}{LemCutRedcution}\label{sur_lem:cut_reduction}
    Let $H$ be the subgraph of $G$ embedded on a subsurface $\Sigma'$ of $\Sigma$ with genus $g'\leq g$ and $b'$ boundary components. If $g',b'\geq 1$, we can compute a radial curve $J$ on $\Sigma'$ of complexity $O(\tw(G))$ and in time $O(\tw(G)\cdot n)$ such that $J$ encloses at most three regions of $\Sigma'$, and one of the following holds:
    \begin{enumerate}
        \item Every region of $\Sigma'$ enclosed by $J$ has Euler genus at most $g'-1$ or
        \item A region of $\Sigma'$ enclosed by $J$ either \textsf{(i)} has fewer than $b'$ boundary components and Euler genus $g'$, or \textsf{(ii)} has at least two boundary components and Euler genus zero.
    \end{enumerate}
\end{restatable}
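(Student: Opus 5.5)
Pick a boundary component $B$ of $\Sigma'$ and identify its boundary vertices into a single vertex $v_B$, as in \Cref{sur_fig:noose_vertex}, so that nooses homotopic to $B$ become contractible cycles isolating $v_B$. Then run the algorithm of \Cref{sur_lem:isolated_linear} from $v_B$. It returns in $O(n)$ time a longest tight sequence $\langle C_1,\ldots,C_\ell\rangle$ of concentric cycles isolating $v_B$. Let $C_\ell$ be the outermost one, i.e.\ the noose homotopic to $B$ that lies farthest from $B$ in radial distance. Let $R$ be the region of $\Sigma'$ between $B$ and $C_\ell$. By construction, $R$ is an annulus containing no other boundary component and having no genus. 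The case analysis then depends on why the sequence cannot be extended beyond $C_\ell$. The halting conditions in \Cref{sur_sec:isolating} give two possibilities: either some boundary component of the face-union $F$ around the contracted $\cl(C_\ell)$ is non-contractible, or every such component is contractible but none isolates $v_B$.

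\textbf{Short curve.} Termination at $C_\ell$ means $F$ touches either a non-trivial topology or another boundary component. In either case there is a radial curve of complexity $O(1)$ through $F$: it starts and ends on $C_\ell$ (equivalently on $v_B$) and is either non-contractible or separates off another boundary component. I would combine it with one shortest radial curve from $B$ to $C_\ell$ crossing $R$. Taking $J$ to be the union of $B$, a portion of $C_\ell$ and these arcs, uncontracted, yields a theta- or dumbbell-shaped radial curve. It encloses at most three regions, and its complexity is $O(|C_\ell|+\rdist(B,C_\ell))$.

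\textbf{Bounding the complexity.} The main obstacle is showing that $|C_\ell|=O(\tw(G))$ and that the crossing arc has length $O(\tw(G))$. Concentric cycles alone do not bound the length of $C_\ell$. I would first try a standard treewidth argument: if the radial distance from $B$ to $C_\ell$ exceeds $c\cdot\tw(G)$, then $R$ contains a large cylindrical grid minor, contradicting the treewidth bound. This bounds $\ell=O(\tw(G))$. Then, instead of $C_\ell$ itself, I would choose a shortest noose homotopic to $B$ inside the annulus near $C_\ell$. Again by the grid-minor/treewidth duality, a minimal separating noose in a radially shallow planar annulus has complexity $O(\tw(G))$. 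If this fails, I would fall back on balanced separators of $\radgraph$ of size $O(\tw)$, which give nooses of that complexity. Computing shortest radial curves is done by BFS in $\radgraph$, and iterating over $O(\tw)$ candidate layers gives $O(\tw(G)\cdot n)$ time.

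\textbf{Genus accounting.} Finally, verify that the outcome matches the statement using the Euler characteristic together with \Cref{sur_lem:separating_genus_reducing}.
\begin{itemize}
\item If $J$ contains a non-separating loop, cutting along it lowers the genus of every resulting region, giving case 1.
\item If $J$ contains a separating non-contractible loop, both sides have genus below $g'$ (since each side carries positive genus), again giving case 1.
\item Otherwise $J$ separates a genus-zero region containing $B$ and at least one other boundary component, giving (ii). The complementary region then has fewer than $b'$ boundary components and genus $g'$, giving (i).
\end{itemize}
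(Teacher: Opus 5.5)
Your overall plan (contract $B$, run the isolating-cycle algorithm, take an $O(1)$ curve through $F$ at the halting step, then uncontract) is the paper's. Cutting off the side of $\gamma$ with a short noose through $v_B$ is a reasonable variant of the paper's second case; the paper instead grows a second concentric sequence around $B'$ and joins $B$ to $B'$. The gap is the complexity bound, which is the real content of the lemma, and both of your arguments for it are false.

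First, many concentric cycles do not yield a grid minor. They give only one direction of the grid; the other direction needs many vertex-disjoint paths crossing them, and these need not exist. Take nested triangles $T_1,\ldots,T_\ell$ around $B$, with consecutive triangles joined by a single edge. This is a tight concentric sequence of arbitrary length in a cellularly embedded graph of treewidth $2$, and every radial curve from $B$ to $T_\ell$ meets all $\ell$ triangles. So $\ell$ is not $O(\tw(G))$, and your crossing arc from $B$ to $C_\ell$ can be arbitrarily long.

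Second, a radially shallow annulus need not contain a short noose homotopic to $B$. Let $B$ be an $m$-cycle joined by $m$ rungs to an outer $m$-cycle $C_1$ (a prism: treewidth $3$, $\ell=1$). Every noose in $\cl(C_1)$ homotopic to $B$ meets at least $m$ vertices. So ``a portion of $C_\ell$'', or any replacement noose near it, can cost $\Theta(n)$. Balanced separators of $\radgraph$ ignore the homotopy class of $B$, so that fallback does not help either.

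What is missing is the paper's two-step treatment of the uncontracted curve.

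\textbf{(a) Never close the curve along $C_\ell$.} Route the two ends of the short curve radially inward through the tight cycles down to $B$, one vertex per cycle. This is possible because each $C_i$ bounds the faces incident to the contracted $\cl(C_{i-1})$. The result is an arc with both endpoints on $B$, of complexity $2\ell+O(1)$ regardless of $|C_\ell|$; in the prism it has complexity $O(1)$.

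\textbf{(b) When $\ell\ge 3\tw(G)$, apply Menger between $C_{\ell-3\tw(G)}$ and $C_\ell$ inside their annulus.} If there were $3\tw(G)$ disjoint paths, they would combine with these $3\tw(G)$ cycles into a grid minor of order larger than $\tw(G)$, a contradiction. Hence there is a vertex cut of size less than $3\tw(G)$. This cut is a noose $O$ separating $B$ from $C_\ell$, and $O(\tw(G))$ augmenting-path rounds find it in $O(\tw(G)\cdot n)$ time.

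Then replace the part of the arc inside $\cl(O)$ by a subarc of $O$ in the genus case. In your boundary case, close the separating noose along a subarc of $O$ instead of along $C_\ell$. Either way the curve has complexity $O(\tw(G))$, and from there your genus accounting goes through.
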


\begin{proof}
    We show how to find the desired radial curve $J$ on $\Sigma'$.
    For this, we first fix an arbitrary boundary component $B$ of $\Sigma'$ and contract the boundary vertices on $B$ into a single vertex $o$. We denote the resulting graph by $H_o$.
    We then compute the longest tight sequence $\langle C_1,\ldots, C_\ell\rangle$ of concentric cycles isolating $o$ from the boundary components of $\Sigma'$ except $B$ using \Cref{sur_lem:isolated_linear}.
    In the following, we further contract the vertices of $H_o$ on $\cl(C_\ell)$.
   
    Let the region $F$ in $\Sigma'$ be the union of the faces of $H_o$ incident to $o$.
    Note that there is no contractible cycle whose inside contains $o$ but whose boundary components are in $\Sigma'$ (except $B$).
    Therefore, there are two possible cases: $F$ has a non-contractible noose, or all boundary components of $F$ are contractible but one contains another boundary component $B'\neq B$ of $\Sigma'$ inside.
    We consider each case one by one.

    \subparagraph*{$F$ has a non-contractible noose.}
    We first claim that there exists a non-contractible noose of complexity at most three and intersecting $o$ if there is a non-contractible $\gamma$ in $F$, refer to \Cref{sur_fig:lateral}(a-b). 
    Note that we can compute the shortest non-contractible noose $J$ on $F$ intersecting $o$ in $O(|E(H^{\textsf{rad}})|)$-time~\cite[Lemma 5.2]{erickson2002optimally}.
    Recall that cutting along a non-contractible noose reduces the Euler genus by at least one~\cite[Secion~6.3]{erickson2002optimally}.
    Therefore, the obtained non-contractible $J$ satisfies the \textsf{Condition~1} of \Cref{sur_lem:cut_reduction}.
    According to \Cref{sur_claim:uncontracting}, we obtain a radial curve $J'$ from $J$ so that $J'$ intersects $O(\tw(G))$ vertices in $H$.

    
    \begin{restatable}{claim}{ClaimUncontracting}\label{sur_claim:uncontracting}
        We can translate the non-contractible noose $J$ into a radial curve $J'$ intersecting $O(\tw(G))$ vertices in $H$ whose removal reduces the genus of $\Sigma'$ in time $O(\tw(G)\cdot n)$.
    \end{restatable}
    \begin{proof}
    By uncontracting $o$ as the vertices in $\cl(C_\ell)$, we can intuitively translate the noose $J$ as a radial curve $J'$ connecting two boundary vertices on $B$ such that $J'$ intersects at most $2\ell+3$ vertices on $H$ and the cutting along $J'$ reduces the genus of $\Sigma$. See \Cref{sur_fig:cut_reduction}(a).
        If $\ell\in O(tw(G))$, then the claim holds. In the following, we suppose that $\ell$ is at least $\omega(\tw(G))$.
        Note that there is no $3\tw(G)$ pairwise vertex-disjoint paths connecting $C_i$ and $C_j$ lying on the region $\cl(C_j)\setminus \cl(C_{i})$ for two indices $i,j$ in $[\ell]$ with $j-i\geq 3\tw(G)$.
        This is because, if such vertex-disjoint paths exist, then the paths and the cycles $C_{i},\ldots, C_j$ yield a minor model of a $2\tw(G)$-grid in $H$.
        This contradicts that the treewidth of $H$ is at most $\tw(G)$.
        Therefore, we can compute a noose $O$ on $\cl(C_\ell)\setminus \cl(C_{\ell-3\tw(G)})$ intersecting at most $3\tw(G)$ vertices of $H$ in $O(\tw(G)\cdot n)$ time by the Ford-Fulkerson algorithm~\cite{ford_fulkerson_1957} such that any radial curve connecting $B$ and $C_\ell$ on the region $\cl(C_\ell)$ must intersect a vertex on $O$.

        We let $u$ and $v$ be the two vertices on $V(J')\cap V(O)$ which decompose $J'$ into three radial subcurves such that one $\pi$ does not intersect the interior of $\cl(O)$. 
        We replace $J'\setminus \pi$ by the radial subcurve of $O$ between $u$ and $v$.
        Then the modified $J'$ is a non-contractible noose on $\Sigma'$ of complexity $O(\tw(G))$ since $O$ is contractible.
        It satisfies the \textsf{Condition 1} of \Cref{sur_lem:cut_reduction}.
    \end{proof}
   
    \subparagraph*{$B'$ is contained inside of a boundary component of $F$.}
    We assume that every boundary component of $F$ is contractible, and one boundary component $\gamma$ of $F$ has another boundary component $B'\neq B$ of $\Sigma'$ inside $\cl(\gamma)$.
    If there are multiple boundary components inside $\gamma$, then we choose $B'$ that is closest to $V(\gamma)\cup V(C_\ell)$ w.\@r.\@t.\@ the radial distance on $\cl(\gamma)$.
    
Using \Cref{sur_lem:isolated_linear}, we compute a longest tight sequence $\langle C_1'=B', \ldots, C_{\ell'}'\rangle$ of concentric cycles on $\cl(\gamma)$ that are vertex-disjoint to $C_\ell$ and $\gamma$ except the last cycle $C_{\ell'}'$ by ignoring the boundary components except $B$ and $B'$. This means $C_i'$ might contain a boundary component of $\Sigma'$ other than $B$ and $B'$ inside.
    Let $u$ be the vertex on $V(C_{\ell'}')\cap V(C_\ell)$ or $V(C_{\ell'}')\cap V(\gamma)$.
    After identifying the vertices in $\cl(C_{\ell'}')$ into a single vertex $o'$, we compute the radial curve $J$ intersecting only the vertices in $\{o,u,o'\}$.
    Analogue to \Cref{sur_claim:uncontracting}, we can translate $J$ to a radial curve $J'$ satisfying \textsf{Condition 2} of \Cref{sur_lem:cut_reduction}.    
    
        Intuitively, we can translate  $J$ as a curve $J'$ connecting $B$ and $B'$ that intersects $O(\ell+\ell')$ vertices on $H$ by uncontracting $o$ and $o'$. Refer to \Cref{sur_fig:lateral}(c).
        Here, $J'$ does not intersect any other boundary components since we assumed that $\cl(C_\ell)$ has no other boundary components except $B$, and we chose $B'$ as the closest boundary component, inside $\cl(\gamma)$, to the vertices $V(\gamma)\cup V(C_\ell)$ with respect to the radial distance.
        If either $\ell$, $\ell'$, or both are at least $\omega(\tw(G))$, then we can further modify $J'$ as follows. Then the proof completes.
\begin{itemize}
    \item 
    If $\ell$ is at least $\omega(\tw(G))$, then we can compute a noose $O$ on $\cl(C_\ell)\setminus \cl(C_{\ell-3\tw(G)})$ such that $O$ separates $V(B)$ and $V(C_{\ell})$ and intersects $O(\tw(G))$ vertices in $H$.
        The details are in the proof of \Cref{sur_claim:uncontracting}.
        We let $v$ be the farthest vertex of $V(J')\cap V(O)$ from $B$ along $J'$. 
        We replace the subcurve of $J'$ between $B$ and $v$ by the noose $O$. 
        Then the obtained radial curve $J$ decomposes $\Sigma'$ into two components satisfying the \textsf{Condition 2}. Refer to \Cref{sur_fig:lateral}(d). Here, the modified $J'$ intersects at most $O(\tw(G)+\ell')$ vertices.
    \item If $\ell'$ is at least $\omega(\tw(G))$, analogously, we can compute a noose $O'$ on $\cl(C_{\ell'}')\setminus \cl(C_{\ell'-3\tw(G)}')$ such that $O'$ separates $V(B')$ and $V(C_{\ell'}')$ while it intersects $O(\tw(G))$ non-boundary vertices of $H$.
        If $O'$ does not intersect any boundary vertex, then we can modify $J'$ to traverse $O'$ analogously to the noose $O$, see \Cref{sur_fig:lateral}(e-g). 
        The modified curve $J'$ intersects at most $O(\tw(G))$ vertices of $H$, which completes the proof.
        
        We show that $O'$ does not intersect any boundary component of $\Sigma'$.  For contradiction, we suppose that $O'$ intersects a boundary component.
        We let $\bar B$ be the closest boundary component to the vertex $v'$ along $O'$, where $v'$ is the farthest vertex of $V(J')\cap V(O')$ from $B'$ along $J'$.
        Note that $\bar B$ is neither $B$ nor $B'$ since $O'$ is a noose on $\cl(C_{\ell'}')\setminus \cl(C_{\ell'-3\tw(G)}')$.
        Then there is a radial curve of complexity $O(\tw(G))$ in $H$ from $\bar B$ to $V(C_\ell)\cup V(\gamma)$ along $J'\cup O'$ while the radial distance from $B'$ to these vertices is $\ell'\in \omega(\tw(G))$, which contradicts.
        Therefore, $O'$ does not intersect any boundary component of $\Sigma'$.
\end{itemize}

        \begin{figure}
    \centering
    \includegraphics[width=0.9
    \textwidth]{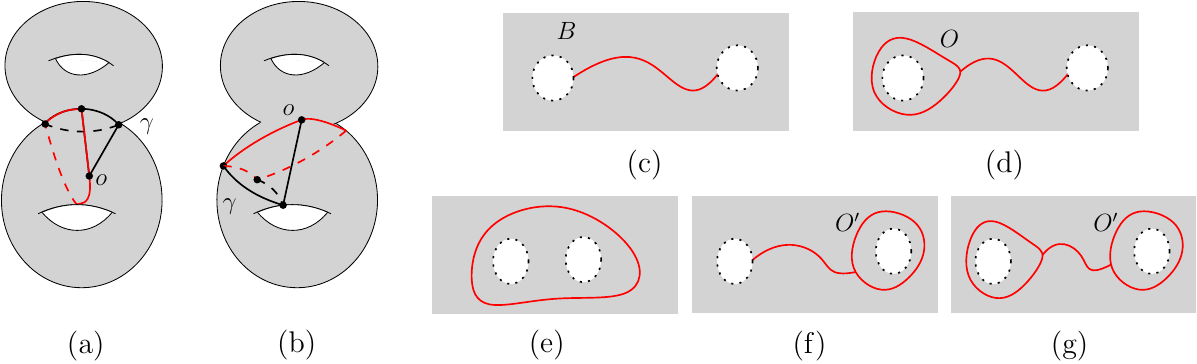}
    \caption{\small All kinds of curves obtained from \Cref{sur_lem:cut_reduction}.
    (a) Cutting $J$ reduces the Euler genus of $\Sigma'$. (b) Cutting $J$ separates the surface.
    {By replacing the handles in (a-b) with crosscaps, the figures represent non-orientable surfaces.}
    (c) Cutting $J$ reduces the number of boundary components of $\Sigma'$.
    (d-g) Among the regions enclosed by $J$, one has fewer boundary components, while the others have Euler genus zero and at least two boundary components.
    \vspace{-1em}
    }
    \label{sur_fig:lateral}
    \end{figure}

\medskip

In conclusion, in $O(\tw(G)n)$ time, we find the desired curve on $\Sigma'$ of complexity at most $O(\tw(G))$ in $H$, completing the proof of \Cref{sur_lem:cut_reduction}.
\end{proof}
In conclusion, by applying \Cref{sur_lem:cut_reduction} recursively, we may decompose $\Sigma$ into subregions of Euler genus zero, which implies \Cref{sur_thm:nice_decomp_tw}.

\SmallNiceSubgraph*
\begin{proof}
    Here, we let $b\leq k$ be the number of boundary components of $\Sigma$.
    In this proof, we show that by applying \Cref{sur_lem:cut_reduction} recursively, we can cut $\Sigma$ into at most $4g+2b-5$ regions for $g, b\geq 1$.
    This claim bounds the number of calls to \Cref{sur_lem:cut_reduction} at most $5g+2b$.
    Moreover, it bounds the total number of holes and boundary vertices at most $O(k+g)$ and $O((k+g)\tw(G))$, respectively.
    In the following,  
    we prove the claim inductively along the lexicographical ordering of the Euler genus $g$ and the number $b$ of boundary components.

   We let $J$ be the obtained radial curve by \Cref{sur_lem:cut_reduction}.
    Note that every region enclosed by $J$ has at least one boundary component.
    If $\Sigma$ has Euler genus one and exactly one boundary component, then the removal $J$ cannot decompose into multiple components. 
    Therefore, it obtains only one subsurface by cutting $J$ that is embedded on a topological disk with holes. Therefore, the claim holds. In the following, we suppose that $g>1$ or $b>1$.
    
    If $J$ is not separating $\Sigma$, then the cutting $J$ obtains one surface with Euler genus $g_s$  and $b_s$ boundary components such that $g_s\leq g-1$ and $b_s\leq b+2$, or $g_s=g$ and $b_s\leq b-1$.
    By the inductive assumption, this surface is decomposed into at most $4g_s+2b_s-5\leq 4g+2b-5$ subsurfaces with genus zero.
    Therefore, the claim holds.
    In the following, we suppose that $J$ is separating.
    We let $x$ and $y$ be the number of regions enclosed by $J$ with Euler genus at least one and zero, respectively.
    Furthermore, we let $\Sigma_1',\ldots, \Sigma_x'$ be the regions enclosed by $J$ with Euler genus at least one. 
    There are two cases: there is exactly one region enclosed by $J$ with non-zero Euler genus, or not.
    We consider each case one by one.
    \begin{itemize}    
        \item \textbf{Case for $x=1$:}
    Observe that $J$ satisfies the \textsf{Condition 2} of \Cref{sur_lem:cut_reduction}.
    Thus, $\Sigma_1'$ has at most $b-1$ boundary components.
    Then the subsurface $\Sigma_1'$ can be further decomposed into at most $4g+2b-7$ regions by the inductive assumption. 
    In total, we obtain at most $4g+2b-7+y\leq 4g+2b-5$ regions embedded on planes since $y\leq 2$.
    \item \textbf{Case for $x\geq 2$:}
    We consider the other case that 
    there are at least two regions $\Sigma_1',\ldots, \Sigma_x'$ enclosed by $J$ with non-zero Euler genus.
    We use $b_i'$ to denote the number of boundary components of $\Sigma_i'$ for $i\in[x]$.
    Note that the sum of all Euler genuses $\textsf{eg}(\Sigma'_1), \ldots \textsf{eg}(\Sigma_x')$ is at most $g$ by \Cref{sur_lem:separating_genus_reducing}.
    Furthermore, the sum of all $b_i'$ is at most $b+x$, refer to \Cref{sur_fig:lateral}.
    By the inductive assumption, each $\Sigma_i'$ is decomposed into $4\textsf{eg}(\Sigma'_i)+2b_i'-5$ regions for $i\geq y+1$.
    This concludes that $\Sigma$ is decomposed into at most $4g+2(b+x)-5x+y\leq 4g+2b-5$ regions since $x+y$ is at most three by \Cref{sur_lem:cut_reduction}.
    \end{itemize}
    In conclusion, our claim holds for every $g\geq 1$ and $b\geq 1$, and thus, \Cref{sur_thm:nice_decomp_tw} holds.
\end{proof}

\section{Linear-Time FPT Algorithm}\label{sur_sec:encoding_construction}

\ccheck{
Notably, cutting the graph and the surface according to \Cref{sur_thm:nice_decomp_tw} reduces the \textsc{$g$-Surface $k$-Disjoint Paths} problem to planar instances with $O((k+g)\tw)$ boundary vertices in total.
This already yields a linear FPT algorithm. Briefly, by considering all possible sets of terminal pairs among the boundary vertices and applying the \textsc{Planar Disjoint Paths} algorithm of Cho et al.~\cite{cho2023parameterized} as a black box, one obtains an algorithm running in time $2^{O((k^2+g^2)\tw^2)}n$.
In this section, we improve this bound by opening the black box.
More precisely, we develop a $(k+g+\tw)^{O(k+g)}n$-time algorithm for the \textsc{$g$-Surface $k$-Disjoint Paths} problem.
}

\subsection{Enumerating Weak Linkages}\label{sur_sec:crossing_pattern}
Note that, in \Cref{sur_sec:cut_reduction}, we decomposed $\Sigma$ and $G$ into planar graphs.
In this section, we enumerate $(k+g+\tw(G))^{O(k+g)}$ discrete homotopy classes for the \textsc{$g$-Surface $k$-Disjoint Paths} problem on $(G,\mathcal T)$ by applying the scheme illustrated by \Cref{sur_lem:summary_pdp}.

Let $\mathcal M$ be the subgraph of $\radgraph$ obtained by the union of terminals in $T$ and the radial curves used to cut the surface through \Cref{sur_thm:nice_decomp_tw}.
We let $\Sigma_1,\ldots, \Sigma_x$ be the subsurfaces of $\Sigma$ obtained via the decomposition induced by $\mathcal M$, and we let $\mathcal H=\{H_1,\dots, H_x\}$ be the subgraphs of $G$ embedded on them.
We aim to compute weak linkages in each radial completion $H_i^{\textsf{rad}}$ and concatenate them via the vertices on $\mathcal M$ by applying \Cref{sur_lem:summary_pdp}.
However, this introduces a subtle issue: Although two vertices lie on the same boundary component of $\Sigma_i$, they might appear in different subsurfaces since $\mathcal M$ is a union of multiple curves on $\Sigma$. 

Observe that for every pair of adjacent degree-two vertices on $\mathcal M$, both appear as boundary vertices in the same two subgraphs in $\mathcal H$.
This allows us to transfer the adjacency information on $\mathcal M$ to the corresponding boundary cycles of the subgraphs in $\mathcal H$.
Based on this, we define the auxiliary graph $\mathcal M^{\textsf{con}}$ (see \Cref{sur_fig:face_reduction}(a--b)): We identify each maximal path $P$ of degree-two vertices in $\mathcal M$ into a single vertex $v_P$, which we refer to as a \emph{portal}. Similarly, we define $\gcontract$ as the graph obtained by applying the same identification process to $\radgraph$.
Observe that if the vertices are identified to the same portal, they appear in the same boundary component of the same subsurface in $\Sigma_1,\ldots,\Sigma_x$. Furthermore, such vertices appear consecutively on the boundary components.
We then modify each $\Sigma_i$ by slightly cutting the boundary components $B_i$ so that every $B_i$ consists of the vertices in $\mathcal M$ that are identified to the same portal in $\mathcal M^{\textsf{con}}$, refer to \Cref{sur_fig:face_reduction}(c).
Unlike the original subregions $\Sigma_1,\ldots, \Sigma_x$, the modified $\widetilde\Sigma_1,\ldots, \widetilde \Sigma_x$ are not pairwise interior-disjoint.
However, if a weak linkage $\mathcal L$ of $H_i^{\textsf{rad}}$ intersects $\widetilde\Sigma_i\setminus \Sigma_i$, then no weak linkage of $H_i$ is discretely homotopic to $\mathcal L$.

We apply \Cref{sur_lem:summary_pdp} to each $H_i$ in $\mathcal H$ with respect to the modified $\widetilde \Sigma_i$, then we enumerate weak linkages in $H_i^{\textsf{rad}}$ lying on the original subsurface $\Sigma_i$.
For each combination of obtained weak linkages in $H_1^{\textsf{rad}},\ldots, H_x^{\textsf{rad}}$, we concatenate them at the portals to form weak linkages in $\gcontract$. Finally, by uncontracting the portals, we obtain weak linkages in $\radgraph$.
The following observation guarantees that if a $\mathcal T$-linkage exists in $G$, at least one of the constructed weak linkages in $\radgraph$ is discretely homotopic to it.  We call such a weak linkage a \emph{canonical weak $\mathcal T$-linkage}. 
\begin{restatable}{observation}{ObsCanonical}\label{sur_obs:canonical}
    If $G$ has a $\mathcal T$-linkage, then at least one of the obtained weak linkages is discretely homotopic to a feasible $\mathcal T$-linkage.
\end{restatable}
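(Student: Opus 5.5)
The plan is to fix a $\mathcal T$-linkage $\mathcal P=\langle P_1,\ldots,P_k\rangle$ of $G$, restrict it to each piece of the decomposition, invoke \Cref{sur_lem:summary_pdp} piece by piece, and then glue the resulting homotopies together along $\mathcal M$.

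\textbf{Step 1: restriction.} Since $\mathcal M$ consists of the terminals and the cutting curves, every $P_j$ splits at its intersections with $\mathcal M$ into maximal subpaths, each lying in a single $\Sigma_i$ with both endpoints on $\mathcal M$. Two cases arise at such an endpoint:
\begin{itemize}
\item if the path crosses $\mathcal M$ at a vertex $w$, then $w$ becomes a boundary vertex of both adjacent pieces;
\item if the path runs along a curve of $\mathcal M$, we cut it at the first and last vertices of $\mathcal M$, so that each piece's subpath starts and ends on its boundary.
\end{itemize}

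\textbf{Step 2: pairs and linkages per piece.} For every $i$, the pieces of $\mathcal P$ inside $H_i$ are pairwise vertex-disjoint paths between boundary vertices of $H_i$. So they form a $\mathcal T_i$-linkage of $H_i$, where $\mathcal T_i$ is the induced set of pairs of boundary vertices.

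\textbf{Step 3: apply the lemma.} By \Cref{sur_lem:summary_pdp}, applied with respect to the modified surface $\widetilde\Sigma_i$ (which has the same boundary vertices, grouped by portal), some enumerated weak linkage $\mathcal W_i$ in $H_i^{\textsf{rad}}$ is discretely homotopic to this $\mathcal T_i$-linkage. Since the $\mathcal T_i$-linkage lies in $\Sigma_i$, the remark before the observation guarantees that $\mathcal W_i$ is among the retained weak linkages lying on $\Sigma_i$.

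\textbf{Step 4: concatenation.} The algorithm tries every combination, so in particular it tries $(\mathcal W_1,\ldots,\mathcal W_x)$. Each $\mathcal W_i$ has the same endpoints as the corresponding subpaths of $\mathcal P$, and endpoints are preserved by face operations. Hence concatenating at the portals, and then uncontracting, reconnects the walks exactly in the order in which $P_j$ passes from piece to piece. This yields walks $W_j$ from $s_j$ to $t_j$.

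\textbf{Step 5: gluing the homotopies.} This is the main obstacle. A sequence of face operations transforming $\mathcal P\cap H_i$ into $\mathcal W_i$ uses only faces of $H_i^{\textsf{rad}}$, that is, faces of $\radgraph$ inside $\Sigma_i$. I would argue such a face operation is also a legal face operation in $\radgraph$ on the whole weak linkage. Face Pull and Face Push only involve one walk. A Face Move requires $\partial F\setminus W$ to be unused by other walks, and any walk in another piece could touch $\partial F$ only along $\mathcal M$.

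To handle this, I would do the operations piece by piece while all other pieces are still in their original (disjoint path) form. Edges on $\mathcal M$ used by two pieces come only from paths running along a cutting curve, and by disjointness of $\mathcal P$ such an edge belongs to a single $P_j$. Walks may share $\mathcal M$-edges only in the weak linkages $\mathcal W_i$; if this creates a conflict, I would first push shared segments slightly into one side, using the portal identification so that the ordering along the boundary is consistent.

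The non-crossing property of the concatenation follows because, at each portal, the cyclic order of the walks is determined by the boundary order, and this order is the same from both sides. Composing the per-piece sequences then gives a discrete homotopy from $\mathcal P$ to the concatenated weak linkage, which proves the observation.
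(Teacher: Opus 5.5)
There is a genuine gap in Step~3. \Cref{sur_lem:summary_pdp} is only an existence statement. For the pairs $\mathcal T_i$, it says that \emph{some} $\mathcal T_i$-linkage of $H_i$ is discretely homotopic to one of the enumerated weak linkages. It does not say that the particular linkage $\mathcal P\cap H_i$ is, and there is no reason for that to hold. Only $M^{O(b)}$ weak linkages are enumerated, and they cross every boundary-free annulus along the reference paths with $O(\tw(G))$ turns around the frames. Paths of $\mathcal P$, by contrast, may spiral arbitrarily often through a thick boundary-free annulus. Such windings are part of the discrete homotopy class; the encoding in the appendix records exactly these traversal counts. So the face-operation sequences turning $\mathcal P\cap H_i$ into $\mathcal W_i$, which you want to compose in Step~5, need not exist. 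The target of your argument, a discrete homotopy from the given $\mathcal P$ to the concatenation, cannot be reached in general. This is precisely why the observation only claims homotopy to \emph{a} feasible $\mathcal T$-linkage.

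The missing idea is the exchange step used in the paper's proof.
\begin{enumerate}
\item For each $i$, let $\Pi_i'$ be the $\mathcal T_i$-linkage of $H_i$ that \Cref{sur_lem:summary_pdp} actually provides, so $\Pi_i'$ is discretely homotopic to an enumerated $\mathcal W_i$.
\item Substitute $\Pi_i'$ for $\mathcal P\cap H_i$ inside $\mathcal P$. The endpoints on $\mathcal M$ and the order in which the pieces are visited are unchanged, so this gives a new $\mathcal T$-linkage $\mathcal L'$ of $G$. You still have to make sure the substitutes do not reuse copies of $\mathcal M$-vertices already occupied by pieces in other $H_j$.
\item Run your Steps~4--5 with $\mathcal L'$ in place of $\mathcal P$, and apply the remark about $\widetilde\Sigma_i\setminus\Sigma_i$ to $\Pi_i'$ rather than to $\mathcal P\cap H_i$.
\end{enumerate}
Separately, the move ``push shared segments slightly into one side'' in Step~5 is not a face operation as stated and would need to be made precise, but that is secondary to the issue above.
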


\begin{proof}
    We suppose that the \textsc{$g$-Surface $k$-Disjoint Paths} problem on $(G,\mathcal T)$ has a feasible $\mathcal T$-linkage.
    Then we let $\mathcal L$ be a $\mathcal T$-linkage in $G$.
    Note that it is a $\mathcal T$-linkage in $\radgraph$ also.
    For each $H_i\in \mathcal H$, we let $\Pi_i$ be the set of maximal subpaths in $\mathcal L$ that are in $H_i$.

    We let $\mathcal T_i$ be the set of all pairs $(s,t)$ such that $s$ and $t$ are connected by a path in $\Pi_i$. Then $\Pi_i$ is a $\mathcal T_i$-linkage in $H_i$.
    Additionally, let $\widetilde {\mathcal T_i}$ be the pairs of boundary components of $\widetilde 
    \Sigma_i$ corresponding to $\mathcal T_i$.
    Thus, there is a $\mathcal T_i$-linkage $\Pi_i'$ in $H_i$ and a weak $\widetilde{\mathcal T_i}$-linkage $\mathcal W_i$ in $\widetilde{H_i^{\textsf{rad}}}$
    that are discretely homotopic in $\widetilde{H_i^{\textsf{rad}}}$, where $\mathcal W_i$ is enumerated by \Cref{sur_lem:summary_pdp}.
    Here, $\widetilde{H_i^{\textsf{rad}}}$ is the graph obtained from the radial completion $H_i^{\textsf{rad}}$ by identifying each boundary component as an artificial vertex, that is a portal in $\mathcal M^{\textsf{con}}$ (and $\gcontract$).
    By replacing the subpaths in $\mathcal L$ traversing $\Pi_i$ with the paths in $\Pi_i'$, we can obtain another $\mathcal T$-linkage $\mathcal L'$. Analogously, by replacing the subpaths with the walks in $\mathcal W_i$, and then uncontracting the portals, we also provide the weak $\mathcal T$-linkage $\mathcal W$ in $\radgraph$. Here, our algorithm enumerates the weak linkage $\mathcal W$.
    This completes the proof since $\mathcal W$ is discretely homotopic to $\mathcal L$.    
\end{proof}

\begin{figure}
    \centering
    \includegraphics[width=0.8
    \textwidth]{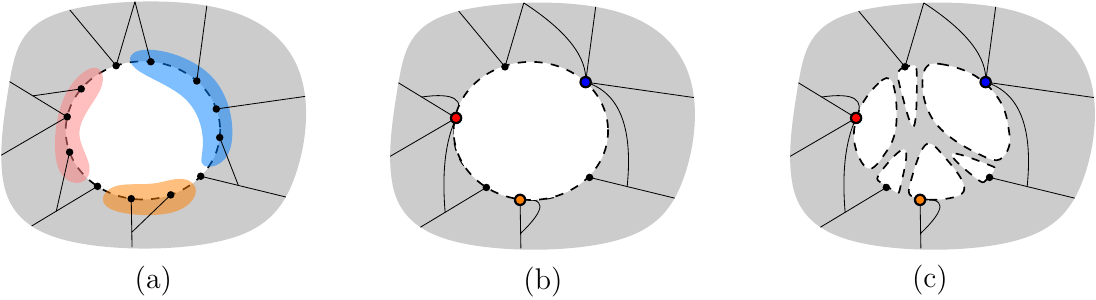}
    \caption{\small 
    (a-b) The colored areas denote maximal paths of degree-two vertices in $\mathcal M$. The vertices in the same area are identified with the same portal in $\mathcal M^{\textsf{con}}$ and $\gcontract$. 
    (c)  $\widetilde\Sigma_i$ is modified from $\Sigma_i$ so that each boundary component is incident to one boundary vertex (or a portal).
    If a weak linkage of $H_i^{\textsf{rad}}$ intersects $\widetilde\Sigma_i\setminus \Sigma_i$, then no weak linkage of $H_i$ is discretely homotopic to it.
\vspace{-1em}
}
    \label{sur_fig:face_reduction}
\end{figure}

\subsection{Reconstruction of Linkages from Canonical Weak Linkages}\label{sur_sec:completes}

\textsc{Cohomology Feasibility} is a standard technique in the study of \textsc{Disjoint Paths} problem, introduced by Schrijver~\cite{schrijver1994finding}.
For plane-embedded graphs, this technique was further developed using frames, skeleton forests, and reference paths to handle the weak linkages obtained by \Cref{sur_lem:summary_pdp} in $O(\textsf{poly}(\tw(G), k)\cdot n)$ time~\cite{cho2023parameterized}.
In this section, we extend this algorithm to surface-embedded graphs for recovering a $\mathcal T$-linkage from a canonical weak $\mathcal T$-linkage.
For this, we use our decomposition structure $\mathcal M$ defined in \Cref{sur_sec:crossing_pattern}.
The resulting algorithm runs in $O(\textsf{poly}(\tw(G), k, g)\cdot n)$ time.

We suppose that we are given frames, skeleton forests, and reference paths along with a canonical weak $\mathcal T$-linkage $\mathcal W$ on $\radgraph$.
Then we describe how to compute a $\mathcal T$-linkage in $G$, that is discretely homotopic to $\mathcal W$, in $O(\textsf{poly}(\tw(G),k,g)\cdot n)$ by applying the \textsc{Cohomology Feasibility Problem}.
Note that Schrijver~\cite{schrijver1994finding} developed a polynomial-time algorithm for the cohomology feasibility problem as follows.
First, we summarize the necessary preliminaries.

\subsubsection{Cohomology Feasibility Problem}\label{sur_sec:feasibility}

A digraph $D$ is said to be weakly connected if it is connected when the direction is ignored.
A \emph{flow function} is a function $\phi\colon E(D)\to [k]\cup [k]^{-1}\cup \{\epsilon\}$, where $\epsilon$ denotes the empty string and $[k]=\{1,\ldots, k\}$ and $[k]^{-1}=\{1^{-1},\ldots, k^{-1}\}$.
Here, we use $[k]^*$ to denote the set of all strings consisting of the symbols in $[k]\cup [k]^{-1}$.
The \emph{product} of strings $w$ and $w'$, denoted by $w\cdot w'$, is defined as the string obtained by concatenating them and deleting all appearances of  $xx^{-1}=x^{-1}x=\epsilon$.
  We say two flow functions $\phi$ and $\psi$ are \emph{$r$-homologous} for a vertex $r\in V(D)$, called the \emph{base}, if a homology function $f\colon V(D)\to [k]^*$ with $f(r)=\epsilon$ satisfies:
  
  \begin{itemize}
      \item {$f(v_e)^{-1}\cdot \phi(e)\cdot f(u_e)=\psi(e)$ for every directed edge $e=(v_e,u_e)$ in $E(D)$.}
  \end{itemize}

In the \textsc{Cohomology Feasibility Problem}, we are given a digraph $D$, a flow function $\phi$, a set $\Pi$ of \emph{undirected paths} in $D$, and a candidate function $\Gamma\colon {\Pi} \to 2^{[k]^*}$ such that every $\Gamma(\pi)$ is \emph{hereditary}, that is,
  for every string $x\in \Gamma(\pi)$, all its prefixes and $x^{-1}$ also belong to $\Gamma(\pi)$.
  Here, $\Pi$ is a set of \emph{undirected paths} in $D$. That means $\pi\in \Pi$ might use a reverse edge of $D$.
  The goal is to compute a flow function $\psi$ that is $r$-homologous to the given flow function $\phi$ such that $\psi(\pi)\in \Gamma(\pi)$ for every $\pi\in \Pi$.
  For the flow functions $\phi$ and $\psi$, $\phi(\pi)$ (and $\psi(\pi)$) denotes the product of $\phi(e)^{\mu(e)}$'s (and $\psi(e)^{\mu(e)}$'s) in order for all edges $e$ in $\pi$, where the sign $\mu(e)$ is positive if $e$ is an edge of $D$, and negative, otherwise.
  Schrijver provided a polynomial-time algorithm for the \textsc{Cohomology Feasibility Problem} for a general digraph~\cite[Section 2]{schrijver1994finding}, as summarized in \Cref{sur_lem:summarized_schrijver}. 

  \begin{restatable}[\cite{schrijver1994finding}]{lemma}{LemSummarySchrijver}\label{sur_lem:summarized_schrijver}
  For $D=(V,A)$, let $T_{\textsf{pre}}$ be the running time of computing the smallest pre-feasible function.
      The cohomology feasibility problem can be solved in $O(\chi T_{\textsf{pre}}+\chi^2 |V|)$ time, where $\chi$ denotes the number of $\pi$ in $\Pi$ with $\phi(\pi)\notin \Gamma(\pi)$.
\end{restatable}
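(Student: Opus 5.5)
This lemma is a restatement of Schrijver's cohomology feasibility algorithm, so the plan is to reconstruct his iterative scheme and count its cost explicitly in terms of $\chi$, $T_{\textsf{pre}}$ and $|V|$.

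First, fix the notions. A function $f\colon V\to[k]^*$ with $f(r)=\epsilon$ is \emph{pre-feasible} if the flow $\psi_f(e)=f(v_e)^{-1}\cdot\phi(e)\cdot f(u_e)$ is a single symbol or $\epsilon$ on every arc. Strings are ordered by the prefix order, and the order is taken componentwise on functions. Following Schrijver~\cite[Section~2]{schrijver1994finding}, the pre-feasible functions are closed under the componentwise meet. Hence, for any lower bound $f_0$, a unique smallest pre-feasible $f\ge f_0$ exists whenever some pre-feasible function dominates $f_0$. By hypothesis it is computed in time $T_{\textsf{pre}}$; this is essentially a monotone fixpoint or propagation step.

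The algorithm maintains a pre-feasible $f$, starting from $f\equiv\epsilon$, which is pre-feasible because $\phi$ takes values in single symbols. It also maintains the set $\Pi_{\mathrm{bad}}$ of paths $\pi$ with $\psi_f(\pi)\notin\Gamma(\pi)$. While $\Pi_{\mathrm{bad}}\neq\emptyset$, pick a bad $\pi$ from $u$ to $w$. Heredity of $\Gamma(\pi)$ means the violating string has a longest prefix in $\Gamma(\pi)$. From this, read off a vertex $v$ on $\pi$ and a symbol $a$ such that every feasible solution $\psi$ dominating the current $f$ must satisfy $f'(v)\ge f(v)\cdot a$. This is the key ``forcing'' step: a feasible $\psi$ composes $\psi(\pi)$ from symbols along $\pi$, and heredity with closure under inverses rules out every alternative cancellation pattern. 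Raise the lower bound at $v$ and recompute the smallest pre-feasible function above the new bound. If none exists, report infeasibility. The same forcing argument shows by induction that the current $f$ is always a lower bound for every feasible homology function, so reporting infeasibility is correct.

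Next comes the counting, which gives $O(\chi T_{\textsf{pre}}+\chi^2|V|)$. The claim is that each path of $\Pi$ becomes bad and is repaired at most once before becoming permanently good, or else the process stops. The argument is that once the forcing step for $\pi$ fixes the prefix structure of $\psi(\pi)$, later increases of $f$ only extend strings beyond it, and Schrijver's monotonicity lemma keeps $\psi_f(\pi)$ inside the hereditary set. Moreover, only the $\chi$ initially bad paths can ever become bad; I expect this to be the main obstacle, and I would prove it by showing $\psi_f(\pi)$ is monotone in $f$ relative to $\phi(\pi)$ for initially good $\pi$. That yields at most $\chi$ iterations, each costing one call to the pre-feasibility routine, $T_{\textsf{pre}}$. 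In addition, updating $\Pi_{\mathrm{bad}}$ and locating $(v,a)$ needs a scan along the changed vertices. With the strings $f(v)$ stored as tries of length $O(\chi)$, this costs $O(\chi|V|)$ per iteration.

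Summing over the at most $\chi$ iterations gives the stated bound. When $\Pi_{\mathrm{bad}}$ is empty, $\psi_f$ is $r$-homologous to $\phi$ by construction and is a valid solution.
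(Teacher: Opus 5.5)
\textbf{The forcing step fails, and the rest of the argument depends on it.} Along a path $\pi$ from $u$ to $w$ the homology telescopes, so $\psi_f(\pi)=f(u)^{-1}\cdot\phi(\pi)\cdot f(w)$ depends only on the two endpoint values; interior vertices play no role.

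Suppose $\pi$ is violated with $f(u)=f(w)=\epsilon$, as happens at your very first iteration. A hereditary $\Gamma(\pi)$ is closed under substrings, so a feasible $f^*\geq f$ must cancel something at one of the two ends. Either $f^*(u)$ begins with the first symbol $a$ of $\phi(\pi)$, or $f^*(w)$ begins with $b^{-1}$, where $b$ is the last symbol. This is a genuine disjunction, and heredity does not decide it. Take a single arc $(u,w)$ with $\phi=1$, $\Gamma=\{\epsilon\}$ and $u,w\neq r$. Then both $f(u)=1,\ f(w)=\epsilon$ and $f(u)=\epsilon,\ f(w)=1^{-1}$ are feasible. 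So in general no pair $(v,a)$ is dominated by every feasible solution. A greedy loop that commits to one side can wrongly report infeasibility, and branching instead would cost $2^{\chi}$.

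\textbf{The missing idea is Schrijver's reduction to 2-SAT.} This is what the paper's sketch uses, and it is where the $\chi^2|V|$ term comes from. First, pre-feasibility must be defined as in the paper: each $\pi\in\Pi$ either has $\psi(\pi)\in\Gamma(\pi)$ or has value $\epsilon$ at both endpoints. Your single-symbol-per-arc condition is a different notion. The algorithm then runs as follows.
\begin{enumerate}
\item For each of the $\chi$ violated paths, compute the two smallest pre-feasible functions $\overline f_\pi$ and $\overline g_\pi$ above the two alternatives. This is $2\chi$ calls, costing $O(\chi T_{\textsf{pre}})$.
\item Test finiteness of all $O(\chi^2)$ pairwise joins, at $O(|V|)$ each.
\item Solve the resulting 2-SAT instance.
\end{enumerate}
A pairwise-compatible choice has a finite join. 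That join is pre-feasible and is non-$\epsilon$ at an endpoint of every originally violated path, so it is feasible.

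This definition of pre-feasibility also removes the obstacle you flagged. A path with a non-$\epsilon$ endpoint value is automatically good. A path with both endpoints at $\epsilon$ has $\psi(\pi)=\phi(\pi)$. Hence only the original $\chi$ paths can ever be violated. Your count of $\chi$ iterations with $O(\chi|V|)$ trie scans each is therefore the cost of a different, and incorrect, algorithm.
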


Here, a function $f: V\to [k]^*$ with $f(r)=\epsilon$ 
  is called a \emph{pre-feasible (homology) function} if 
  either $\psi(\pi)\in \Gamma(\pi)$, or $f(v)=f(u)=\epsilon$ for each path $\pi$ of $\Pi$, where $v$ and $u$ are the starting and end vertices of $\pi$. 
  Here, $r$ is the base vertex in $V$ of the homologous, which means our goal is to find the $r$-homologous flow function to $\phi$ in $D$.
  For two functions $f$ and $g:V\to [k]^*$, we say $f$ is \emph{smaller} than $g$ if $f(v)$ is a prefix of $g(v)$ for every $v\in V$.
  Schrijver showed that
  the \emph{smallest} pre-feasible function $\overline f$ larger than $f$ is well-defined for any function $f$ in a directed graph $D$. That is, any finite pre-feasible function $f^*$ larger than $f$ is also larger than $\overline f$. 
  \begin{proof}[Sketch of \Cref{sur_lem:summarized_schrijver}]
When a weak connected directed graph $D=(V,A)$, a flow function $\phi: A\to [k]\cup[k]^{-1}\cup \{\epsilon\}$, a set $\Pi$ of undirected paths in $D$, and the candidate set $\Gamma: \Pi\to [k]^{*}$ are given,
Schrijver~\cite[Section 2]{schrijver1994finding} designed a polynomial-time algorithm solving the cohomology feasibility problem as illustrated by \Cref{sur_lem:summarized_schrijver}.
    We sketch his algorithm.

    For given $\Pi$, $\Gamma$, and $\phi$, the goal is to compute a homology function $f$ such that $\psi(\pi)\in\Gamma(\pi)$ for all paths $\pi\in\Pi$, where $\psi(e)=f(v_e)^{-1}\cdot \phi(e) \cdot f(u_e)$ for each directed edge $e=(v_e,u_e)$ in $D$.
  In this case, we say the function $f$ is \emph{feasible}.
  By the definition, a feasible function is larger than any smallest pre-feasible function addressing all the paths $\pi\in \Pi$ with $\phi(\pi)\notin \Gamma(\pi)$. This motivates his polynomial-time algorithm by applying the \emph{join} operation.

  We define the \emph{join} operation for two strings (or functions) that returns the smallest string (or function) larger than both of the two inputs.
    Particularly, for two strings $x$ and $y$ in $[k]^*$, we define the \emph{join} of them, denoted by $x\vee y$, so that $x\vee y=y\vee x=y$ if $x$ is a prefix of $y$, otherwise, we set $x\vee y$ is infinite. Furthermore, for two functions $f$ and $g: V\to [k]^*$,  the join $f\vee g$ is defined as $(f\vee g)(v)=f(F)\vee g(v)$ for all vertices $v$. We say $f\vee g$ is finite if all $(f\vee g)(v)$ is finite, otherwise, we say it is infinite. Note that if $f \vee g$ is finite, then $(f\vee g)(v)$ is equal to either $f(v)$ or $g(v)$ for all $v\in V$. 
    Joining two functions takes $O(|V|)$ time.
  
  Schrijver's algorithm~\cite{schrijver1994finding} first enumerates all $f_{\pi}$ and $g_{\pi}$ for $\pi\in \Pi$ with $\psi(\pi)\notin \Gamma(\pi)$, where $f_{\pi}$ (and $g_{\pi}$) maps all vertices in $V$ to $\epsilon$ except $f_{\pi}(v)=\phi(e)$ (and $g_{\pi}(u)=\phi(e')^{-1}$). Here, $v$ and $u$ (and $e$ and $e'$) are the starting and end vertices (and edges) of $\pi$.
  Observe that any function $f$ addressing the violation of the path $\pi$ is larger than either $f_{\pi}$ or $g_{\pi}$.
  Then the algorithm computes all smallest pre-feasible functions $\overline f_{\pi}$'s and $\overline g_{\pi}$'s. 
  By \emph{joining} either $\overline f_{\pi}$ or $\overline g_{\pi}$ for all $\pi$, the algorithm returns a finite feasible function if it exists.
  
    Note that the join phase takes $O(\chi^{2}|V|)$ time by applying the 2-SAT algorithm~\cite{4567876} after checking the join for every pairs of $\overline f_{\pi}$'s and $\overline g_{\pi}$'s, where $\chi$ denotes the number of $\pi$ in $\Pi$ with $\phi(\pi)\notin \Gamma(\pi)$. 
    Therefore, the algorithm runs in $O(\chi\cdot T_{\textsf{pre}}+\chi^2 |V|)$ time. 
  \end{proof}

  \subparagraph*{Reduction for the Disjoint Paths problem.}
  When a collection $\mathcal L$ of pairwise edge-disjoint and non-crossing walks is given on a surface-embedded graph $H$, we can find a linkage $\mathcal L'$ that is discretely homotopic to $\mathcal L$, if it exists, by reducing the problem to the \textsc{Cohomology Feasibility Problem}.
  For this, we first translate $H$ into a digraph by replacing its edges with two directed edges in opposite directions.
  Then we define its dual digraph $D=(\mathcal F, A)$, where $\mathcal F$ is the set of faces of $H$.
  Here, an edge $e$ in $H$ corresponds to a dual edge $e^*=(L_e,R_e)$, where $L_e$ (and $R_e$) is the left (and right) face of $e$. 
  {To define $R$-homology on the dual digraph $D$, we first designate a specific face $R$ in $H$ to serve as the base for homology. To this end, we insert a dummy directed loop at an arbitrary end vertex of $\mathcal L$ in $H$, which forms a contractible cycle on $\Sigma$ enclosing a small disk $R$. This modification introduces a new face $R$ in $H$. We then define and use $R$-homologous on the dual digraph $D$.} 
  
  For the walks in $\mathcal L$ on $H$, we define a flow function $\phi\colon A \to [k]\cup[k]^{-1}\cup \{\epsilon\}$, where $k=|\mathcal L|$, such that for a dual edge $e^*$ in $D$, $\phi(e^*)=i$, if the edge $e$ in $H$ is used by the $i$th walk in $\mathcal L$, and otherwise, $\phi(e^*)=\epsilon$.
    To complete the reduction, we define the path set $\Pi$ and its candidates $\Gamma$.
    The following is a sufficient condition for $\Pi$ and $\Gamma$ so that the feasible flow function $\psi$ represents a pairwise vertex-disjoint paths connecting the same pairs of endpoints as the walks in $\mathcal{L}$:
    \begin{itemize}
        \item [\textsf{(i)}] $\Pi$ includes all undirected paths $\pi$ in $D$ such that the duals of edges in $\pi$ are incident to a common vertex $v_\pi$ in $H$,
        \item [\textsf{(ii)}] $\Gamma(\pi)\subseteq[k]\cup[k]^{-1}\cup\{\epsilon \}$ for the paths $\pi$ in from \textsf{Condition (i)}, and
        \item [\textsf{(ii)}] $\Gamma(\pi)=\{i,i^{-1},\epsilon \}$ for the paths $\pi$ in  \textsf{Condition (i)} if the common vertex $v_{\pi}$ in $H$ is an end vertex of the $i$th walk in $\mathcal L$.
    \end{itemize}
    
    The solution for the \textsc{Cohomology Feasibility Problem} for $\phi$, $\Pi$, and $\Gamma$ as above derives pairwise vertex-disjoint paths. Note that the homologous relations are closed under the discrete homotopy relation.
    Particularly, for two weak linkages $\mathcal W$ and $\mathcal W'$, if $\mathcal W'$ is obtained by sequential face operations from $\mathcal W$, then they correspond to homologous flow functions as stated in the lemma below due to Lokshtanov et al.~\cite{lokshtanov2020exponential}. 
    Although the authors established the observation for planar graphs, it extends naturally to surface-embedded graphs, as the face operation is defined locally on the surface.   The correctness of the above reduction is guaranteed by \Cref{sur_lem:homology-homotopy}.

    \begin{lemma}[{\cite[Lemma 5.2]{lokshtanov2020exponential}}]\label{sur_lem:homology-homotopy}
    Let $\mathcal W$ and $\mathcal W'$ be two weak linkages discretely homotopic to each other. 
    Then their flow functions are $R$-homologous to each other. 
    \end{lemma}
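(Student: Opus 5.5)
By induction on the length of the sequence of face operations, it suffices to treat the case where $\mathcal W'$ is obtained from $\mathcal W$ by a single \textsf{Face Move}, \textsf{Face Pull}, or \textsf{Face Push} applied to one face $F$ of $\radgraph$ and one walk $W_i$. Homology is an equivalence relation: for composition, take the product of the homology functions, which is again a homology function with value $\epsilon$ at the base. So the lemma follows once each single operation is shown to yield an $R$-homologous flow function.

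For one operation on face $F$, I will build an explicit homology function $f$ on the dual digraph $D$. Set $f(F)=i$ or $f(F)=i^{-1}$, with the sign fixed by the orientation of $W_i$ and its side relative to $\partial F$, and set $f(F')=\epsilon$ for every other face $F'$. In particular $f(R)=\epsilon$. This requires $F\neq R$, and that holds because the operations act on faces of $\radgraph$ while $R$ is the artificial face created by the dummy loop. I then check the identity $f(v_e)^{-1}\cdot\phi(e)\cdot f(u_e)=\psi(e)$ dual edge by dual edge.
\begin{itemize}
\item A dual edge not incident to $F$ has $f=\epsilon$ at both ends, and its flow value does not change, because the operation only modifies edges of $\partial F$.
\item A dual edge crossing an edge of $\partial F$ that lies in $W$ but not in $W'$: conjugating by $f(F)$ cancels the symbol $i$, leaving $\epsilon$.
\item A dual edge crossing an edge of $\partial F$ that lies in $W'$ but not in $W$: the symbol $i^{\pm1}$ appears.
\end{itemize}
For \textsf{Face Move}, exactly these two kinds of changes occur on $\partial F\cap W$ and $\partial F\setminus W$. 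For \textsf{Face Pull} and \textsf{Face Push}, every edge of $\partial F$ switches between carrying $i$ and carrying nothing. The edges of $\partial F$ are not used by any other walk: for Move and Push this is part of the hypothesis, and for Pull the edges already lie in $W_i$. So no product with another symbol $j\neq i$ can arise.

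The main obstacle is the sign bookkeeping. Each traversal of an edge $e$ by $W_i$ contributes $i$ or $i^{-1}$ depending on whether the traversal agrees with the orientation defining $e^*=(L_e,R_e)$. I must confirm that, for a triangular face, all edges of $\partial F$ traversed around $F$ put $F$ consistently on the left or on the right, so that one choice of $f(F)\in\{i,i^{-1}\}$ works for every edge. On a surface, even a non-orientable one, this is a local statement: it only uses the rotation at $F$, which is a disk. It is therefore the same check as in the planar argument of Lokshtanov et al.
\begin{itemize}
\item For \textsf{Face Move}, the replaced subwalk runs along one side of $\partial F$ and the new subwalk runs along the other side in the same direction, so the two edge sets see $F$ on opposite sides, and conjugation by $f(F)$ swaps their contributions correctly.
\item For \textsf{Face Pull} and \textsf{Face Push}, the whole boundary walk $\partial F$ keeps $F$ on one side throughout.
\end{itemize}
Finally, I will note that the flow function is defined from edge usage counted with multiplicity and direction, so the same computation works when $W_i$ traverses an edge several times.
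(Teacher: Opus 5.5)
Your reduction to a single face operation, the composition of homology functions by pointwise product, and the choice of $f$ supported on the operated face are the planar argument of Lokshtanov et al.\ that the paper invokes. They are correct when $\Sigma$ is orientable and all dual arcs $e^*=(L_e,R_e)$ come from one global orientation. The step that fails is your claim that the sign check is local ``even on a non-orientable surface''. Whether $F$ lies to the left of the traversal of $\partial F$ at an edge $e$ is not decided by the rotation at $F$. It is decided by the convention fixing $(L_e,R_e)$, which is made once per edge for the whole digraph $D$. Suppose every face were consistent, i.e.\ all of $\partial F$, traversed around $F$, puts $F$ on the same side. Orienting each face accordingly would make any two adjacent faces induce opposite directions on their common edge, which is a coherent orientation of $\Sigma$. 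Hence on a non-orientable surface, whatever conventions you fix, some triangle $F=abc$ has $ab,bc$ putting $F$ on one side and $ca$ on the other. Your $f$ would then need $f(F)=i$ and $f(F)=i^{-1}$ simultaneously. No other $f$ helps either. Take a \textsf{Face Push} of $W_i$ at $a$ with all remaining edges at $b$ and $c$ unused. Going around $b$ and $c$ through unused edges forces the three faces across $\partial F$ to share one value $w$, and the edges of $\partial F$ then demand both $f(F)=w\,i^{-1}$ and $f(F)=w\,i$. So with this untwisted homology the statement genuinely needs orientability, or a twisted (e.g.\ orientation-double-cover) formulation. The paper's own justification is the same one-line locality remark, so it does not supply this step either.

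A second, smaller gap is the base face. It is not enough that $F\neq R$. The dummy loop sits inside some face $F_0$ of $\radgraph$ at a terminal $t$, and its dual arc joins $R$ and $F_0$. The loop carries $\epsilon$ before and after every operation, so every homology function has $f(F_0)=f(R)=\epsilon$, which your $f$ violates for an operation at $F_0$. A cleverer $f$ does not exist in general. Let $W_i$ end at $t$, with no other walk touching $t$, and push $W_j$ with $j\neq i$ across $F_0$. The two edges of $\partial F_0$ at $t$ force $f=j^{\pm1}$ on both faces across them. Going around $t$ the long way crosses only unused edges and the last edge of $W_i$, which conjugates that value by $i^{\pm1}$, a contradiction. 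So operations at the face carrying the loop, which drag a walk across the puncture at $t$, must be excluded: $R$ has to be treated as a hole in the notion of discrete homotopy, and your argument should state this restriction explicitly.
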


\subparagraph*{Overview of remaining \Cref{sur_sec:completes}.}
To compute a $\mathcal T$-linkage in $G$ that is discretely homotopic to $\mathcal W$ on $\radgraph$, there are three phases: modifying the canonical weak $T$-linkage $\mathcal W$ on $\radgraph$ as pairwise edge-disjoint walks $\wmod$, reducing the problem to a cohomology feasibility problem, and developing the $\textsf{poly}(\tw(G), k, g)\cdot n$ time algorithm for the cohomology feasibility problem for $\wmod$.
According to \Cref{sur_lem:summarized_schrijver}, there are two tasks to achieve a $\textsf{poly}(\tw(G), k, g)\cdot n$-time algorithm for the cohomology feasibility problem for $\wmod$: bounding the number $\chi$ of paths in $\Pi$ violating $\Gamma$ during the first two phases, and achieving the smallest pre-feasible function in $\textsf{poly}(\tw(G), k, g)\cdot O(n)$ time in the last phase.

Here, our given frames, skeleton forests, and reference paths are constructed with respect to $H_i$'s in $\mathcal H$ and their boundary vertices and terminals in $\mathcal M$.
Note that the frames decompose the surface $\Sigma$ into $O(k+g)$ regions so that half of them do not contain any vertices on $\mathcal M$ inside.
Recall that such a region is enclosed by two homeomorphic frames, and we call such a region a \emph{boundary-free annulus}.
The canonical weak $\mathcal T$-linkage $\mathcal W$ satisfies the following:
\begin{itemize}
    \item $\mathcal W$ is set of pairwise vertex-disjoint walks in $G$ except on frames, skeleton forests, or $\mathcal M$, 
    \item For a boundary-free annulus $\circledcirc$, $\mathcal W$ traverses in $\circledcirc$ along the reference paths, and
    \item If $\mathcal W$ traverses an edge in $G^{\textsf{rad}}$ but not in $G$, then it is an edge on frames, skeleton forests, or $\mathcal M$.
        Additionally, such an edge is traversed at most $O(\tw(G))$ times.
\end{itemize}

\subsubsection{Modification of $\mathcal M$ and $\radgraph$ for $\wmod$ and $\gmod$}
It is easy to translate a weak linkage as pairwise edge-disjoint walks by duplicating edges in $\radgraph$ which are traversed by $\mathcal W$ more than once.
However, to improve the running time in the following phases, we require further modification.

We use $\gparallel$ to denote the obtained graph from $\radgraph$ by adding the parallel edges. 
Then the walks in $\mathcal W$ are pairwise edge-disjoint on $\gparallel$.
We say an edge $e$ in $\gparallel$ is a \emph{linkage-edge} if a walk of $\mathcal W$ uses $e$. 
 We say an edge $e$ lies \emph{between $e_1$ and $e_2$} if $e_1$, $e$, and $e_2$ are incident to a common vertex $v$, and they appear in a clockwise direction around $v$. 
 Two linkage-edges $e$ and $e'$ incident to $v$ form a \emph{wedge at $v$} if there is no other linkage-edge between $e$ and $e'$.
 All wedges at $v$ are pairwise interior-disjoint. 
 For a vertex $v$, the number of wedges is the same as the number of linkage-edges incident to $v$.
 We say a wedge at $v$ is \emph{empty} if 
 no edge of $G$ incident to $v$ is contained in the wedge. 
 
For each non-empty wedge at a vertex $v$ in the frames, skeleton forests, and $\mathcal M$, we insert a new vertex $v'$ to $\gparallel$, 
 and add the edge between $v$ and $v'$. 
 Then we remove the edges in the wedge and reconnect them to $v'$ instead of $v$. Note that, the edges incident to a new vertex $v'$ are not linkage-edges. 
 We do this for all non-empty wedges and all vertices on the frames, skeleton forests, and $\mathcal M$.
 This takes $O(n)$ time in total.
 Observe that a one-to-one correspondence exists between the weak linkages in $\gparallel$ and in $\gmod$.
 We let $\mathcal W_{\textsf{mod}}$ be the weak linkage in $\gmod$ corresponding to the canonical weak linkage $\mathcal W$ in $\gparallel$.

Observe that $\wmod$ traverses a vertex in $\gmod$ more than once, then the vertex lies on the frames, skeleton forests, or $\mathcal M$. Additionally, such a vertex has a degree at most $O(\tw(G))$ on $\gmod$. 
It is crucial to design an efficient algorithm for the cohomology feasibility problem.
Briefly, it bounds the number $\chi$ of violating paths in \Cref{sur_lem:summarized_schrijver} by \Cref{sur_lem:small_violating}. 
Note that the treewidth of $\gparallel$ and $\gmod$ are both at most $O(\tw(G))$ by construction. 
For clarity, we use the fact instead of the explicit $\tw(\gparallel)$ or $\tw(\gmod)$.


\subsubsection{Reduction to Cohomology Feasibility Problem}
Let $D=(\mathcal F, A)$ be the directed dual graph of $\gmod$ embedded on $\Sigma$ constructed by the reduction in \Cref{sur_sec:feasibility}.
Recall that there is a dummy face $R\in \mathcal F$ that corresponds to a loop at an arbitrary terminal in $T$.
We refer to the $R$-homologous relation as the homologous relation, briefly.

We let $\phi$ be the flow function mapping $A\to [k]\cup[k]^{-1}\cup \{\epsilon\}$ on $D$ corresponding to $\wmod$ in $\gmod$, where $k=|\mathcal T|$.
Furthermore, the path set $\Pi$ and its candidates $\Gamma$ are defined by the \textsf{Condition(i-iii)} in \Cref{sur_sec:feasibility}.
Then the solution of the cohomology feasibility problem defined on $D$ with respect to $\phi,\Pi$, and $\Gamma$ 
corresponds to $\mathcal T$-linkage in $\gmod$ that is discretely homotopic to $\wmod$.
The additional condition ensures correspondence to a $\mathcal T$-linkage in $G$:
\begin{enumerate}
    \item [\textsf{(iv)}] $\Gamma(\pi)=\{\epsilon\}$ for the paths $\pi$ in the \textsf{Condition (i)}, where the dual of edges in $\pi$ is incident to a common vertex $v_\pi$ in $\gmod$, such that $v_\pi$ is not a vertex in $G$.
\end{enumerate}

To improve the running time in the following phase, we require further modification.
Particularly, we insert paths into $\Pi$ with respect to the frames and paths on $\mathcal M$.
For each frame $C$, we insert a path $\pi_C$ into $\Pi$ consisting of the dual of all edges in $\gmod$ incident to the frame $C$ in the boundary-free annulus enclosed by $C$.
Then we set $\Gamma(\pi_C)$ as the set of prefixes of $\phi(\pi_C)$ and their reverse.
Note that for each maximal path $\gamma$ in $\mathcal M$ identified into the same vertex for $\mathcal M^{\textsf{con}}$, defined in \Cref{sur_sec:crossing_pattern}, the incident edges and faces to $\gamma$ compose two disjoint face-edge paths in $\gmod$.
We insert their corresponding paths $\pi_{\gamma}$ and $\pi_{\gamma}'$ in $D$ into $\Pi$, and set $\Gamma(\pi_{\gamma})$ (and $\Gamma(\pi_{\gamma}')$) as the set of prefixes of $\phi(\pi_{\gamma})$ (and $\phi(\pi_{\gamma}')$) and their reverse.

Note that there is a discretely homotopic $\mathcal T$-linkage in $G$ whose intersection pattern with the frames and the radial curves $\gamma$ on $\mathcal M$ matches that of the canonical weak linkage $\mathcal W$ (and $\wmod$). Please refer to~\cite{cho2023parameterized} and the proof of \Cref{sur_obs:canonical}. 
Therefore, the correctness of this reduction is guaranteed by \Cref{sur_lem:homology-homotopy}.
Furthermore, the number of paths $\pi$ in $\Pi$ with $\phi(\pi)\notin \Gamma(\pi)$ is bounded by \Cref{sur_lem:small_violating}.

\begin{lemma}\label{sur_lem:small_violating}
 There are at $O(\tw(G)^3(k+g))$ paths $\pi$ in $\Pi$ with $\phi(\pi)\notin \Gamma(\pi)$.
\end{lemma}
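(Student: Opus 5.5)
We will bound the violating paths by classifying the paths of $\Pi$ into the groups created by the construction. For each group we show that a violation can only occur at a small set of vertices. The inserted paths $\pi_C$ for frames and $\pi_\gamma,\pi'_\gamma$ for maximal paths of $\mathcal M$ never violate their constraint. Indeed, $\Gamma(\pi_C)$ and $\Gamma(\pi_\gamma)$ are defined to contain $\phi(\pi_C)$ and $\phi(\pi_\gamma)$ themselves, since a string is a prefix of itself. Hence only the paths of \textsf{Condition (i)} can be violating, namely the dual paths $\pi$ whose primal edges share a common vertex $v_\pi$ of $\gmod$.

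Next we split according to $v_\pi$. If $v_\pi$ is a vertex of $G$ not lying on the frames, skeleton forests, or $\mathcal M$, then $\wmod$ is vertex-disjoint there by the listed properties of the canonical weak linkage $\mathcal W$. So at most one walk passes through $v_\pi$, and it passes at most once. Then $\phi(\pi)$ for any path of duals around $v_\pi$ is a single symbol or $\epsilon$, so it lies in $\Gamma(\pi)$ by \textsf{(ii)}–\textsf{(iii)}. The only subtlety is an end vertex of the $i$th walk, where the only symbol that can appear is $i^{\pm1}$. If $v_\pi$ is a newly inserted wedge vertex $v'$, its incident edges are not linkage-edges except possibly $vv'$, which is also not a linkage edge. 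Hence $\phi(\pi)=\epsilon$ and \textsf{(iv)} holds. Similarly, a vertex of $\radgraph$ not in $G$ that is off the frames, skeleton forests, and $\mathcal M$ carries no walk by the third property, so it again gives $\phi(\pi)=\epsilon$.

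It remains to count violations at vertices $v$ on frames, skeleton forests, or $\mathcal M$. For such $v$, the modification guarantees that $v$ has degree $O(\tw(G))$ in $\gmod$, because every non-empty wedge was collapsed into a single extra edge and each linkage-edge is traversed $O(\tw(G))$ times. The dual paths around $v$ are subpaths of its cyclic face sequence of length $d=O(\tw(G))$, so there are $O(d^2)=O(\tw(G)^2)$ of them. Some care is needed for the first and last face, i.e. how $\Pi$ ranges over undirected paths around $v$. We will take \textsf{Condition (i)} to mean the consecutive arcs around $v$, so this count holds.

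Then we bound the number of such vertices $v$ actually traversed by $\wmod$ more than once, or traversed at all when $v\notin V(G)$. These are the only candidates for violation. Here we use that $\mathcal M$ consists of $O(k+g)$ radial curves of complexity $O(\tw(G))$ by \Cref{sur_thm:nice_decomp_tw}. The frames and skeleton forests likewise consist of $O(k+g)$ curves of complexity $O(\tw(G))$ by \Cref{sur_lem:summary_pdp} applied to each of the $O(k+g)$ pieces. Each piece has $M=O((k+g)\tw)$, but the number of curves is $O(b_i)$ with $\sum b_i=O(k+g)$. This gives $O((k+g)\tw(G))$ relevant vertices in $\radgraph$, and multiplying by $O(\tw(G)^2)$ paths each yields $O(\tw(G)^3(k+g))$. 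The main obstacle is justifying this vertex count uniformly across pieces: the frame and skeleton complexities are stated per region in terms of $\tw$ of that piece, and we need $\tw(H_i)\le\tw(G)$ and a global count of curves that sums over the decomposition. We would handle this by summing the per-piece bounds over the $O(k+g)$ regions given by \Cref{sur_thm:nice_decomp_tw}.
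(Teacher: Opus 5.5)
Your proposal is correct and follows the paper's proof essentially step by step: the frame and $\mathcal M$ paths cannot violate because $\Gamma(\pi_C)$ and $\Gamma(\pi_\gamma)$ contain $\phi(\pi_C)$ and $\phi(\pi_\gamma)$ by definition, so violations are confined to vertices on the frames, skeleton forests, and $\mathcal M$, and multiplying the $O((k+g)\tw(G))$ such vertices by the $O(\tw(G)^2)$ arcs around each gives the bound; your explicit checks of the wedge vertices and the off-structure radial vertices only make the paper's terse argument more careful. One small caveat is that your stated reason for the degree bound (wedge collapsing together with per-edge multiplicity $O(\tw(G))$) does not by itself bound the number of distinct walk-carrying edges at a structure vertex, so you really need that $\wmod$ passes through each such vertex only $O(\tw(G))$ times, though the paper likewise simply takes this degree bound from the construction.
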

\begin{proof}
By construction, the paths in $\Pi$ corresponding to a frame or a path in $\mathcal M$ satisfies $\Gamma$. 
That means a path $\pi\in \Pi$ with $\pi\notin \Gamma(\pi)$ satisfies that the dual of all edges in $\pi$ are incident to a common vertex $v_\pi$ in $\gmod$.
Additionally, the common vertex $v_\pi$ is traversed by $\wmod$ more than once, or $v_\pi$ is a vertex in $\gmod$ but not in $G$.
Such a vertex lies on frames, skeleton forest, and $\mathcal M$. 
Note that the complexity of the structures is at most $O(\tw(G)\cdot(k+g))$.
Furthermore, each of such a vertex has degree $O(\tw(G))$ in $\gmod$ by construction.
Therefore, there are $O(\tw(G)^3\cdot(k+g))$ paths that do not satisfy $\Gamma$.
\end{proof}

\subsubsection{Smallest Pre-feasible Function}

For computing the smallest pre-feasible function $\overline f$ of a given $f:\mathcal F \to [k]^{*}$, we make updates to $f(\cdot)$ at each iteration.
Each iteration selects a face $F$ and updates $f(F)$ to handle some path $\pi\in \Pi$ with $\phi(\pi)\notin \Gamma(\pi)$.
This process increases the length of $f(F)$.
Cho et al.~\cite{cho2023parameterized} described a data structure that supports the updating process in $O(k)$ time.
Moreover, the number of updating iterations is at most $L\cdot |\mathcal F|$, where $L$ is the longest length of $\overline f(F)$ for $F\in \mathcal F$.
The following lemma bounds $L$ as $O(\tw(G)(k+g))$.

\begin{lemma}\label{sur_lem:length}
The length of $\overline f (F)$ is $O(\tw(G)(k+g)^2)$ or infinite for a face $F\in \mathcal F$. 
\end{lemma}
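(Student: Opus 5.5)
We bound $|\overline f(F)|$ by comparing $\overline f$ with a concrete feasible function, following the planar argument of Cho et al.~\cite{cho2023parameterized}. The key fact is the minimality of the smallest pre-feasible function. If some finite feasible (hence pre-feasible) function $f^*$ is larger than $f$, then $\overline f(F)$ is a prefix of $f^*(F)$ for every $F$. So $|\overline f(F)|\le |f^*(F)|$. Otherwise $\overline f$ may be infinite, and the statement allows that. It therefore suffices to exhibit, whenever a finite pre-feasible function above $f$ exists, one whose values have length $O(\tw(G)(k+g)^2)$. Equivalently, we bound the length of the homology function relating $\phi$ to a "nice" solution $\psi$.

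\textbf{Step 1: the target solution.} By the discussion before \Cref{sur_lem:small_violating}, there is a $\mathcal T$-linkage $\mathcal L$ in $G$ with the following properties. It is discretely homotopic to $\wmod$. Its intersection pattern with the frames and with the radial curves of $\mathcal M$ agrees with that of $\wmod$. Inside each boundary-free annulus it can be taken to follow the reference paths up to the same crossing pattern. By \Cref{sur_lem:homology-homotopy}, its flow function $\psi$ is $R$-homologous to $\phi$ via some $f^*$. We then interpret $f^*(F)$ topologically. For a dual path $\rho$ from $R$ to $F$, $f^*(F)$ equals $\phi(\rho)^{-1}\cdot\psi(\rho)$ up to normalization. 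This is the reduced word of walk labels crossed by the closed curve formed by $\rho$ and the "sweep" between $\mathcal W$ and $\mathcal L$. Concretely, $f^*(F)$ records which walks separate $F$ from $R$ in the symmetric difference of $\mathcal W$ and $\mathcal L$.

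\textbf{Step 2: choose a short dual path.} We choose $\rho$ from $R$ to $F$ that crosses few edges on which $\phi$ and $\psi$ differ. Outside the frames, skeleton forests and $\mathcal M$, $\wmod$ consists of vertex-disjoint walks that $\mathcal L$ can be assumed to follow. The reconstruction gives the same pattern on every subsurface region $\Sigma_i$. Hence the differences between $\mathcal W$ and $\mathcal L$ are confined to a neighbourhood of the union of these structures. That union has complexity $O(\tw(G)(k+g))$, since there are $O(k+g)$ curves each of complexity $O(\tw(G))$. Each of its edges is traversed $O(\tw(G))$ times. We route $\rho$ along this skeleton through the decomposition of \Cref{sur_thm:nice_decomp_tw}. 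Because the regions form a tree-like arrangement after cutting, $\rho$ passes through $O(k+g)$ regions. Within each region it crosses $O(\tw(G)(k+g))$ linkage edges: these are the walks entering the region, whose number is bounded by the portals times the multiplicity. This gives total length $O(\tw(G)(k+g)^2)$ after cancellations.

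\textbf{Step 3: boundary-free annuli.} Here both $\phi$ and $\psi$ follow reference paths with identical crossing patterns on the frames. This is enforced by the paths $\pi_C$ and $\Gamma(\pi_C)$. So we can take $f^*$ to be constant across the annulus, and annuli contribute nothing beyond the frame words.

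The main obstacle is Step 2 on non-planar pieces. Unlike the plane, a dual path around a handle could wind and accumulate long words. I would control this by insisting that $\rho$ never leaves the planar pieces except through the $O(k+g)$ cutting curves. I would also use the fixed words on $\pi_\gamma,\pi'_\gamma$ to show that crossing a cut curve changes $f^*$ by a prefix of bounded length.
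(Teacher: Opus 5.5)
\textbf{Gap: minimality against a witness does not bound $\overline f$ in general.} Minimality of the smallest pre-feasible function gives $|\overline f(F)|\le|f^*(F)|$ only when the witness $f^*$ lies above $f$. Your dichotomy (either a finite feasible $f^*$ above $f$ exists, or $\overline f$ ``may be infinite'') misses the remaining case. Pre-feasibility is much weaker than feasibility, so $\overline f$ can be finite and long with no feasible function above $f$.

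For example, suppose the instance is infeasible because of a violated path $\pi'$ elsewhere. Then $\overline f_\pi$ may be finite, since it can keep $\epsilon$ at both ends of $\pi'$, although there is no $\mathcal L$ and no feasible function at all. Even in feasible instances, a given solution need lie above only one of $f_\pi,g_\pi$, and for the other your argument says nothing.

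The lemma is needed exactly in these situations. The bound $L\cdot|\mathcal F|$ on the number of update iterations, and the rule for detecting that $\overline f$ is infinite, must hold for every $\overline f_\pi$ and $\overline g_\pi$ that Schrijver's procedure computes, whether or not a linkage exists. What is missing is an intrinsic bound on $\overline f$ that does not refer to any solution.

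The paper obtains such a bound by propagation. It starts from $\overline f(R)=\epsilon$ and uses the claim from the proof of Cho et al.'s align lemma. That claim says: if every face incident to one frame or boundary curve $C$ of a region has $|\overline f|\le L_C$, then every face of the region has $|\overline f|\le L_C+O(M)$. Here the region is a disk with holes bounded by frames and curves of $\mathcal M$, and $M=O((k+g)\tw(G))$ by \Cref{sur_thm:nice_decomp_tw}. The paths $\pi_C,\pi_\gamma,\pi'_\gamma$, with prefix-closed $\Gamma$, were inserted into $\Pi$ for this purpose. Chaining through the $O(k+g)$ regions gives $O(\tw(G)(k+g)^2)$.

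\textbf{The witness bound itself is also not established.} At best your route justifies a truncated algorithm: abort and declare $\overline f$ infinite once the bound is exceeded. That is safe provided a short feasible function exists whenever any feasible function does. But you have not shown that such a short witness exists.
\begin{itemize}
\item Nothing forces $\mathcal L$ to agree with $\wmod$ away from the frames, skeleton forests and $\mathcal M$. Inside a boundary-containing region, $\wmod$ may run along skeleton-forest edges that are not in $G$. $\mathcal L$ is only required to share the crossing pattern with frames and cut curves.
\item Counting the walk segments that enter a region bounds the number of segments, not the reduced word read along $\rho$. A path of $\mathcal L$ and the homotopic walk of $\wmod$ may cross each other many times and create nested pockets.
\item In Step 3, $f^*$ is not constant on an annulus even when $\phi=\psi$ there. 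The relation $f^*(u_e)=\phi(e)^{-1}\cdot f^*(v_e)\cdot\psi(e)$ conjugates the current word each time $\rho$ crosses a linkage edge.
\end{itemize}
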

\begin{proof}

Cho et al. proved the align lemma~\cite[Lemma 8.4]{cho2023parameterized} for planar graphs by establishing the claim that
for a boundary-containing region and boundary-free annulus, if there is a frame or a boundary component $C$ incident to the region such that the length of $\overline f(F')$ is at most $L_C$ for every face $F'$ incident to $C$, then the length of $\overline f(F)$ is at most $L_C+O(M)$ for every face $F$ in the region. 
Here, $M$ denotes the maximum complexity of each frame or boundary component. 
The claim also extends to our regions of the surface $\Sigma$ enclosed by the frames and $\mathcal M$ since they are topological disks with holes.
In our setting, we can set $M=O((k+g)\tw(G))$ by \Cref{sur_thm:nice_decomp_tw}.

Note that $\overline f(R)=\epsilon$, where $R$ is the face enclosed by the dummy loop at some terminal in $T$.
Our goal is to achieve $R$-homologous flow $\psi$ to $\phi$.
Since the number of framed regions are $O(k+g)$ in the surface $\Sigma$ by \Cref{sur_thm:nice_decomp_tw},
$\overline{f}(F)$ has length $O(\tw(G)(k+g))$ for all faces $F$ of $\gmod$ by inductively applying the above claim.
\end{proof}

\medskip

In conclusion, we can compute the smallest-prefeasible function $\overline f$ for any function $f$ in $O(\tw(G)(k+g)^3\cdot n)$ time. 
Along with \Cref{sur_lem:small_violating}, we can find a $\mathcal T$-linkage discretely homotopic to a weak linkage $\mathcal W$ enumerated by \Cref{sur_sec:crossing_pattern} in $O(\textsf{poly}(\tw(G),k,g) \cdot n)$ time if exists according to \Cref{sur_lem:summarized_schrijver}.

\subsection{Conclusion}
In conclusion, after the $O((k+g)\tw(G)n)$-time preprocessing from \Cref{sur_thm:nice_decomp_tw}, we enumerate $(k+g+\tw(G))^{O(k+g)}$ weak linkages.
Thus, we can solve the \textsc{$g$-Surface $k$-Disjoint Paths} problem in $(k+g+\tw(G))^{O(k+g)}n$ time.
\ccheck{Furthermore, the treewidth can be reduced to $2^{O(k+g)}$ in $2^{O(k+g)}n$ time by applying the irrelevant vertex technique in \Cref{sec:irrelevant}. Plugging this bound into the algorithm above yields a $2^{O(k^2+g^2)}n$-time algorithm.}
Along with \Cref{sur_cor:kernelization}, this concludes \Cref{sur_thm:main_result}.

\section{Irrelevant Vertex Technique}\label{sec:irrelevant}
For the \textsf{$g$-Surface $k$-Disjoint Paths} problem on $(G,\mathcal T)$, it is well known that there exists a computable function $f(k, g)$ such that any $f(k, g)$-isolated vertex is irrelevant~\cite{Geelen2018Explicit}. Recall that a vertex $v$ is \emph{irrelevant} if $(G, \mathcal T)$ is a \textsf{YES}-instance if and only if $(G-v, \mathcal T)$ is a \textsf{YES}-instance.
Furthermore, we say a contractible cycle (or noose) $C$ \emph{isolates} a vertex $v$ if its enclosing region $\cl(C)$ contains $v$ but excludes all terminals $T$.
Additionally, we say a vertex $v$ is $\ell$-isolated if there is a sequence $\langle C_1,\ldots, C_{\ell}\rangle$ of \emph{concentric} cycles isolating $v$. Here, $C_1,\ldots, C_\ell$ are said to be concentric if they are pairwise vertex-disjoint contractible cycles with $\cl(C_{i})\subsetneq \cl(C_{i+1})$ for $i\in[\ell-1]$.
In this section, we describe how to remove irrelevant vertices so that the reduced graph has a bounded treewidth in $2^{O(k+g)}n$ time.
Note that for the plane-embedded graphs, Cho et al.~\cite{cho2023parameterized} designed such an algorithm as derived by \Cref{sur_thm:irr_plane}.
\begin{theorem}[{\cite[Theorem 4.1]{cho2023parameterized}}]\label{sur_thm:irr_plane}
Given a graph $G$ embedded on a plane with $b$ boundary components, we can remove $\ell$-isolated vertices from $G$ in $2^{O(b)}|V(G)|$-time in total so that the resulting graph $G$ has treewidth of $2^{O(b)}\cdot \ell$ for any positive integer $\ell$.
\end{theorem}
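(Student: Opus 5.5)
Since this is \cite[Theorem 4.1]{cho2023parameterized}, I would follow their proof strategy, re-deriving it for a plane with $b$ holes. The first step is to contract each boundary component into a single artificial terminal. This gives a plane-embedded graph $\widetilde G$ with a set $T'$ of $b$ special vertices. Under this contraction a contractible cycle isolates a vertex in $G$ exactly when it isolates it from $T'$ in $\widetilde G$. From then on I work in $\widetilde G$ and its radial completion. The goal splits into two parts: an algorithmic part (delete every $\ell$-isolated vertex in $2^{O(b)}n$ total time) and a structural part (once no $\ell$-isolated vertex remains, the treewidth is $2^{O(b)}\ell$).

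\emph{Structural part.} I would first show that every surviving vertex lies within radial distance $R=2^{O(b)}\ell$ of $T'$. Fix a vertex $v$ with $\rdist(v,T')=d$. Consider the radial BFS layers around $v$ at radii $1,\dots,d-1$; each gives a family of nooses in the plane. Each such noose induces a partition of $T'$ into the parts lying inside and outside it. I would argue by induction on the number of terminals that the isolation depth of $v$ is at least $d/2^{O(b)}$. Along the nested sequence of layers, the enclosed terminal set can change only $O(b)$ times, and each change corresponds to a merge in a laminar family over $T'$. Hence some run of $d/2^{O(b)}$ consecutive layers either encloses no terminal, giving $\ell$ concentric isolating cycles directly, or reduces to a subinstance on fewer terminals where the induction applies. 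The $2^{O(b)}$ comes from this recursion over subsets of terminals, so no surviving vertex has $d>2^{O(b)}\ell$. Given this radius bound, take a radial BFS tree from $T'$ and contract a minimal Steiner tree connecting $T'$ to one vertex. This yields a planar graph of radius $O(bR)$, whose treewidth is therefore $O(bR)$ by the standard radius--treewidth bound for planar graphs. Treewidth does not increase when the contracted tree is uncontracted, up to a factor absorbed into $2^{O(b)}$, since the tree consists of $b$ paths of length at most $R$ that can be added to every bag. Undoing the boundary contractions also costs nothing.

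\emph{Algorithmic part.} I would compute isolation depths by peeling. Repeatedly applying the concentric-cycle search of \Cref{sur_lem:isolated_linear} from every vertex would be quadratic. Instead I would process the region structure hierarchically: for each of the $2^{O(b)}$ candidate terminal subsets or laminar cells, run one linear-time outward peeling from the cell. This computes the tight concentric cycles whose enclosed regions avoid $T'$, and deletes every vertex whose depth exceeds $\ell$. The amortization argument from \Cref{sur_lem:isolated_linear} then gives $O(n)$ per cell.

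The main obstacle is the structural induction: making precise that a long nested family of radial layers with a fixed terminal partition yields genuinely \emph{concentric contractible} cycles isolating $v$. This needs care because the layers around a far-away vertex may wrap around terminals. I would handle it by cutting the plane along a shortest radial Steiner tree, which turns it into a disk with the terminals on the boundary, and arguing within that disk.
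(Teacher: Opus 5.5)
There is a genuine gap: your structural lemma is false, so the route through a radius bound cannot work. Take $b=2$ and let $G$ be the cylindrical grid $C_w\times P_L$, drawn as $L$ nested $w$-cycles joined by spokes. The two holes are the face inside the innermost cycle and the face outside the outermost one. Suppose $C_1,\dots,C_m$ are concentric cycles isolating a vertex $v$. Every radial curve from $v$ to a vertex outside $\cl(C_m)$ meets all $m$ disjoint cycles, so every vertex at radial distance less than $m-O(1)$ from $v$ lies inside $\cl(C_m)$. If $m>w/2+O(1)$, this includes the whole ring through $v$. That ring winds around the inner hole, which would force the hole into $\cl(C_m)$. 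Hence no vertex is more than $(w/2+O(1))$-isolated. Taking $w=2\ell-c_0$ for a suitable constant $c_0$, no vertex may be deleted at all. Yet the middle ring stays at radial distance about $L/2$ from both holes, which is unbounded in $b$ and $\ell$. The theorem still holds here, but only because the tube separating the holes is thin ($\tw(G)=O(w)$), and a distance-to-terminals argument cannot see that. This is the wrap-around you flagged, and it is not a technicality. Once the layers around $v$ wind around a hole, they never again help isolate $v$, because isolation is always measured against all holes. So there is no smaller instance to recurse on. Cutting along a Steiner tree does not help either: the tree can be arbitrarily long (here it runs along the tube), so being close to it says nothing about being close to $T'$. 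The treewidth bound has to come from a different argument, for example the planar grid-minor theorem. If $\tw(G)\ge c\sqrt{b+1}\,m$, there is a grid minor containing $b+1$ disjoint $2m\times 2m$ subgrids whose lifted outer cycles bound pairwise disjoint disks. One of these disks contains no hole, and the lifted concentric cycles of that subgrid make its centre $m$-isolated. So if no vertex is $m$-isolated, then $\tw(G)=O(\sqrt{b}\,m)$.

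The algorithmic part also has gaps. Deleting every vertex whose depth, computed once, exceeds $\ell$ is unsound. In the same cylinder with $w,L\gg\ell$, every vertex on a ring more than $\ell+O(1)$ away from both ends is more than $\ell$-isolated, certified by small squares around it. Deleting all of them disconnects the two holes, which turns the yes-instance with one terminal on each hole into a no-instance. Each such vertex was $\ell$-isolated in $G$, but not in the graph from which the others had already been removed. Every deletion has to be certified in the current graph, for instance by removing only the interior of the innermost cycle of a sequence that is kept intact. Moreover, ``one linear outward peeling per laminar cell'' is not yet an argument. \Cref{sur_lem:isolated_linear} returns one longest tight sequence around one fixed vertex or contracted region. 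The cycles certifying that a tube vertex is deep are small local cycles unrelated to any layering grown from the holes. You would need to say which $2^{O(b)}$ peelings are run, and why after the safe deletions no vertex is still, say, $2^{O(b)}\ell$-isolated; together with the grid argument, that would give the bound. This linear-time bookkeeping is the actual content of Cho et al.'s Theorem~4.1, which the paper imports without proof, and your proposal does not supply it.
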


Note that Golovach et al.~\cite{golovach2025irrelevant} designed an algorithm that removes vertices $v$ if there is a sequence $\langle C_1,\ldots, C_{\ell}\rangle$ of \emph{concentric} cycles isolating $v$, and there are $\ell$ vertex-disjoint paths connecting $C_1$ and $C_\ell$, where $\ell$ is a sufficiently large constant computable in the Euler genus $g$ and the number $k$ of terminal pairs $\mathcal T$. Note that such vertices are irrelevant on $(G,\mathcal T)$.
For this, they used Courcelle's theorem, and thus, their algorithm has a large complexity in $g$ and $k$.
\ccheck{However, Mazoit~\cite{mazoit2013single} claimed an explicit function $f(k,g)=2^{O(k+g)}$ so that $f(k,g)$-isolated vertex is an irrelevant vertex on $(G,\mathcal T)$, and recently it was strictly reproved by Cavallaro et al.~\cite[Theorem 4.10]{cavallaro2026optimalboundskdisjointpaths}.}
Therefore, we can modify the prior algorithm to run more efficiently.
Independently, we can also achieve such a $2^{O(k+g)}n$-time algorithm by applying our surface cut reduction technique, illustrated by \Cref{sur_lem:cut_reduction}.
Briefly, we decompose the surface-embedded graph into several plane-embedded graphs while removing some $f(k,g)$-isolated vertices. 
\ccheck{Here, we suppose that $f(k,g)$ is the explicit function $2^{O(k+g)}$~\cite{cavallaro2026optimalboundskdisjointpaths,mazoit2013single} so that $f(k,g)$-isolated vertex is an irrelevant vertex on $(G,\mathcal T)$.}
Then, we can further remove isolated vertices in each of the plane-embedded graphs by applying the algorithm developed by Cho et al.~\cite{cho2023parameterized}.
Then the obtained surface-embedded graph has a bounded treewidth as as illustrated by \Cref{thm:irrelevant_surface}. Details are in the proof.

\begin{theorem}\label{thm:irrelevant_surface}
    For the \textsf{$g$-Surface $k$-Disjoint Paths} problem, we can reduce the treewidth by $2^{O(k+g)}$ by removing $f(k,g)$-isolated vertices in $2^{O(k+g)}n$ time.
\end{theorem}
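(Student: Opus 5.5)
We follow the strategy sketched before the statement: cut the surface into plane-embedded pieces with \Cref{sur_lem:cut_reduction}, remove $f(k,g)$-isolated vertices along the way, and finish each piece with \Cref{sur_thm:irr_plane}.

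First add a small hole at each terminal, so $\Sigma$ has $b\le 2k$ boundary components. We then run the recursive cutting from the proof of \Cref{sur_thm:nice_decomp_tw}. That proof gives curves of complexity $O(\tw(G))$, but $\tw(G)$ may be unbounded here. So each call to \Cref{sur_lem:cut_reduction} is modified as follows. After computing the longest tight sequence $\langle C_1,\ldots,C_\ell\rangle$ via \Cref{sur_lem:isolated_linear}, around the contracted boundary component $o$ (and, in the second case, also around $o'$), we look at the length $\ell$.

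If $\ell\le 3f(k,g)$, the translated curve $J'$ already has complexity $O(\ell)=O(f(k,g))=2^{O(k+g)}$. If $\ell > 3f(k,g)$, we split on the number of vertex-disjoint paths between $C_{\ell-3f(k,g)}$ and $C_\ell$ inside the annulus.
\begin{itemize}
\item If a Ford--Fulkerson run, stopped after $f(k,g)$ augmentations, finds fewer than $f(k,g)$ such paths, we obtain a separating noose $O$ with $O(f(k,g))$ vertices. We reroute $J'$ along $O$ exactly as in \Cref{sur_claim:uncontracting}.
\item Otherwise the annulus contains a grid-like structure. Every vertex enclosed by $C_j$ with $j\le\ell-f(k,g)$ is then $f(k,g)$-isolated with respect to the terminals, since the cycles $C_j,\ldots,C_\ell$ avoid all boundary components other than $B$, and $B$ has been contracted into $o$. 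We therefore delete such a vertex, or the whole inner disk, as irrelevant and repeat.
\end{itemize}
In both cases each cutting step costs $2^{O(k+g)}n$ time. The number of steps is $O(k+g)$ by the counting argument of \Cref{sur_thm:nice_decomp_tw}. The new cut curves carry $O((k+g)f(k,g))$ vertices.

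Next, each resulting piece is a plane with $O(k+g)$ holes. Treating the cut vertices and terminals as boundary, \Cref{sur_thm:irr_plane} with $\ell=f(k,g)$ removes isolated vertices in $2^{O(k+g)}n$ time, and each piece ends with treewidth $2^{O(k+g)}f(k,g)=2^{O(k+g)}$. A vertex isolated inside a piece by cycles avoiding the piece's boundary is also isolated by contractible cycles in $\Sigma$ that avoid $T$, so it is irrelevant for the original instance. The whole graph then has treewidth at most the maximum treewidth over the pieces plus the total size of the cut vertex set. Both terms are $2^{O(k+g)}$, because adding all cut vertices to every bag of the pieces' decompositions and gluing gives a valid tree decomposition.

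The main obstacle is the case $\ell>3f(k,g)$ with many disjoint paths. We must check that the vertices we delete are genuinely $f(k,g)$-isolated in the original $(G,\mathcal T)$ and not merely with respect to the current subsurface. In particular, the cycles must be contractible in $\Sigma$ and must not enclose terminals hidden in other holes. We also need the total running time to stay linear even though deletions trigger recomputation. For this we plan to charge the work to the deleted disks and the recomputed region, in the same way as in the linear-time analysis of \Cref{sur_lem:isolated_linear}.
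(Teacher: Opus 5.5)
\textbf{There is a genuine gap in your second bullet, the case with many disjoint paths.}

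You claim that every vertex enclosed by $C_j$ with $j\le \ell-f(k,g)$ is $f(k,g)$-isolated. This is false. Each $C_j$ isolates $o$, so $\cl(C_j)$ contains $o$, which is the boundary component $B$ itself. Contracting $B$ into $o$ does not remove it from the enclosed region.
\begin{itemize}
\item In the first rounds, $B$ is the hole at a terminal, so $\cl(C_j)$ contains that terminal.
\item In later rounds, $B$ is an earlier cut curve. The rest of $\Sigma$, with its terminals and genus, lies beyond it, so $C_j$ need not bound a terminal-free disk of $\Sigma$ at all.
\item On the $o'$ side, the cycles $C'_j$ are explicitly allowed to enclose further holes.
\end{itemize}

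Moreover, the number of disjoint paths between $C_{\ell-3f(k,g)}$ and $C_\ell$ says nothing about the part inside $C_{\ell-3f(k,g)}$. If the terminal $o$ is joined to $C_{\ell-3f(k,g)}$ by a single path, deleting the inner disk (or just that path) cuts the terminal off. So your deletion step can turn a yes-instance into a no-instance.

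Using the grid structure correctly, you only get isolated vertices at the centres of square subgrids of the annulus. Those lie away from $B$ and are found one at a time, which breaks the linear running time. This is exactly the obstacle you flag at the end, but the bullet does not resolve it.

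\textbf{The missing idea is the one the paper uses.} Apply \Cref{sur_thm:irr_plane} to the plane-with-holes $\cl(C_\ell)$ (and $\cl(C'_{\ell'})$). Declare $B$, $C_\ell$, and any enclosed holes as its boundary components; there are at most $b'+2$ of them.
\begin{itemize}
\item That theorem deletes only vertices isolated by disks inside this region that avoid all of these boundaries. Such disks are terminal-free disks of $\Sigma$, so the deleted vertices are genuinely $f(k,g)$-isolated.
\item It runs in linear time and leaves the region with treewidth $2^{O(b')}f(k,g)$.
\item The grid-minor and Ford--Fulkerson argument of \Cref{sur_claim:uncontracting} then directly gives a noose $O$ of complexity $2^{O(b')}f(k,g)$.
\end{itemize}
With this, no delete-and-repeat loop or charging argument is needed.

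The rest of your plan matches the paper and is sound: the $O(k+g)$ cutting steps, applying \Cref{sur_thm:irr_plane} with $\ell=f(k,g)$ to the final pieces, and gluing the pieces' decompositions after adding all cut vertices to every bag.
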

\begin{proof}
We first modify the algorithm of \Cref{sur_lem:cut_reduction} so that the complexity of cutting curves is bounded by $f(k,g)=2^{O(k+g)}$ instead of the treewidth. The following claim derives the modified algorithm.

\begin{claim}\label{sur_lem:irrelevant_cut_reduction}
    Let $H$ be the subgraph of $G$ embedded on a subsurface $\Sigma'$ of $\Sigma$ with genus $g'\leq g$ and $b'$ boundary components. If $g',b'\geq 1$, then we can remove $f(k,g)$-isolated vertices, and compute the radial curve $J$ on $\Sigma'$, in $2^{O(b')}\cdot f(k,g)\cdot  n$ time, of complexity $2^{O(b')}\cdot f(k,g)$ such that $J$ encloses at most three regions of $\Sigma'$, and one of the following holds:
    \begin{enumerate}
        \item Every region of $\Sigma'$ enclosed by $J$ has Euler genus at most $g'-1$ or
        \item A region of $\Sigma'$ enclosed by $J$ either \textsf{(i)} has fewer than $b'$ boundary components and Euler genus $g'$, or \textsf{(ii)} has at least two boundary components and Euler genus zero.
    \end{enumerate}
\end{claim}

\begin{proof}
    The algorithm is the same as the surface cut reduction algorithm illustrated in \Cref{sur_lem:cut_reduction}.
Briefly, we find one tight concentric cycles $\langle C_1,\dots, C_\ell\rangle$ isolating a boundary component $B$ of $\Sigma'$ from the other boundary components so that there is a non-contractible noose $J$ of complexity at most three satisfying the \textsf{Condition 1} of \Cref{sur_lem:irrelevant_cut_reduction} when we identify the vertices on $\cl(C_\ell)$ as an artificial vertex $o$.
Otherwise, we find such tight concentric cycles  $\langle C_1,\dots, C_\ell\rangle$ isolating $B$ and another tight concentric cycles $\langle C_1',\dots, C_{\ell'}'\rangle$ so that (1) $\cl(C_1')$ includes another boundary component $B'\neq B$, and (2)  the radial distance between $C_\ell$ and $C_{\ell'}'$ is at most one.
By contracting the vertices in $\cl(C_\ell)$ and $\cl(C_{\ell'}')$ as artificial vertices $o$ and $o'$, respectively, we can find a radial curve $J$ of complex at most three and satisfy the \textsf{Condition 2} of \Cref{sur_lem:irrelevant_cut_reduction}.
This process takes $O(|E(G)|)$ time.
By uncontracting the obtained curve $J$, we can obtain the radial curve satisfying the conditions in \Cref{sur_lem:irrelevant_cut_reduction} except for the complexity. 
In the following, we show that we can reduce the complexity bounded by $f(k,g)$ by removing some isolated vertices.

Note that $\cl(C_\ell)$ (and $\cl(C_{\ell'}')$) is $c$-punctured plane with $c\leq b'+2$.
Moreover, if a vertex is $f(k,g)$-isolated in $\cl(C_\ell)$ (and $\cl(C_{\ell'}')$), then it is also $f(k,g)$-isolated on $\Sigma$.
Therefore, we can remove $f(k,g)$-isolated vertices in the regions by \Cref{sur_thm:irr_plane} so that the subgraph embedded on $\cl(C_\ell)$ (and $\cl(C_{\ell'}')$) have treewidth at most $2^{O(b')}\cdot f(k,g)$.
Then we can obtain the radial curve satisfying \Cref{sur_lem:irrelevant_cut_reduction} analogously to \Cref{sur_claim:uncontracting}.
This returns a radial curve satisfying \Cref{sur_lem:irrelevant_cut_reduction} of complexity $2^{O(b')}\cdot f(k,g)$.
Here, both removing isolated vertices and uncontracting processes take $2^{O(b')}n$-time.
This completes the proof.
\end{proof}

In the following, we describe how to conclude \Cref{thm:irrelevant_surface} by applying \Cref{sur_lem:irrelevant_cut_reduction} and \Cref{sur_thm:irr_plane} recursively.
We start by adding $2k$ empty holes at each of the terminals on $\Sigma$.
According to the proof of \Cref{sur_thm:nice_decomp_tw}, by $O(k+g)$ calls of \Cref{sur_lem:irrelevant_cut_reduction}, we decompose the surface into $O(k+g)$ number planes each with $O(k+g)$ holes and $2^{O(k+g)}\cdot f(k,g)$ boundary vertices. This is because during the process, the number of boundary components $b'$ is at most $O(k+g)$.
We apply \Cref{sur_thm:irr_plane} by setting $\ell=f(k,g)$, and then, we reduce the treewidth of each subgraph embedded on a plane with boundary components by $2^{O(k+g)}\cdot f(k,g)=2^{O(k+g)}$.

Since the number of boundary vertices is $2^{O(k+g)}$, we can obtain that the treewidth of the reduced graph $G$ is at most $2^{O(k+g)}$.
This completes the proof.
\end{proof}

\section{Enumerating Weak Linkages for Planar Graphs with Holes}\label{ap:enumerating}
In this section, we present detailed algorithms corresponding to \Cref{sur_lem:summary_pdp} restated below.
Roughly, we provide an extension of the approach by Cho et al.~\cite{cho2023parameterized} to planar graphs embedded on a plane with no hole, i.e., no boundary component.
Throughout this section, we assume that $G$ is a planar graph embedded on a plane with $b$ boundary components.
We refer to the vertices lying on each hole as \emph{boundary vertices}.
We first contract each boundary component into a single vertex, and denote the resulting graph by $\widetilde G$.
Let $\widetilde T$ denote the set of contracted boundary vertices.
Note that $\widetilde G$ is embedded on a plane without holes and $|\widetilde T| = b$,  refer to~\Cref{sur_fig:framed_regions}(a).
Furthermore, for any set $\mathcal T$ of boundary vertex pairs in $G$, a weak $\mathcal T$-linkage corresponds to a weak $\widetilde{\mathcal T}$-linkage in $\widetilde G$, where $\widetilde{\mathcal T}$ is a set of pairs of $\widetilde T$ of size at most $M'$.
Here, $M'$ denotes twice the number of boundary vertices in $G$.
Precisely, if a boundary vertex $v \in \widetilde T$ is identified with $x$ boundary vertices in $G$, then it is sufficient to allow $v$ to appear in $\widetilde{\mathcal T}$ at most $2x$ times.
This correspondence allows us to directly apply the techniques from~\cite{cho2023parameterized} to $\widetilde G$.

\LemSummaryPDP*
\subsection{Constructing Structures}
Here, we describe how to construct the structures: \emph{frames}, \emph{skeleton forests}, and \emph{reference paths}, as illustrated in \Cref{sur_lem:summary_pdp}.

\subparagraph*{Frames.}
We first to decompose $\widetilde G$ and its embedding into $O(b)$ regions using $O(b)$ nooses, called \emph{frames}.
Each frame contains at most $O(\tw(G))$ vertices.
Here, $\tw(G)$ and $n$ denote the treewidth and the number of vertices of $G$, respectively.
Note that the treewidth and number of vertices of $\widetilde G$ are at most $\tw(G)$ and $n$, respectively.

We first start from an arbitrary boundary vertex $v$ in $\widetilde T$.
Then we compute a longest tight sequence $\langle I_1,\ldots, I_{\ell}\rangle$ of concentric cycles in $\widetilde G$ isolating $v$ using \Cref{sur_lem:isolated_linear}.
Let $\textsf{Ring}(I_i,I_j)$ be a region bounded by two cycles $I_i$ and $I_j$ for $0<i<j<\ell$.
We say $\textsf{Ring}(I_i,I_j)$ is a \emph{maximal boundary-free ring} if it has no boundary vertex in $\widetilde T$ inside, and
$\textsf{Ring}(I_{i'},I_{j'})$ contains at least one for any two indices $i',j'$ with 
$[i,j]\subsetneq [i',j']$.
Also, $\textsf{Ring}(I_i,I_j)$ is said to be \emph{thick} if  $|i-j|>100\tw(G)$. 

Let $\textsf{Ring}(I_i, I_j)$ be a maximal boundary-free ring that is thick.
We compute a noose $C$ (and $C'$) that corresponds to a minimum vertex cut between the vertices of $I_{i+30\tw(G)}$ and $I_{i+40\tw(G)}$ (respectively, $I_{j-40\tw(G)}$ and $I_{j-30\tw(G)}$) lying in $\ring(I_{i+30\tw(G)}, I_{i+40\tw(G)})$ (respectively, $\ring(I_{j-40\tw(G)}, I_{j-30\tw(G)})$).
We call these nooses \emph{frames}.
Note that each frame consists of $O(\tw(G))$ vertices. This is because if $\omega(\tw(G))$ vertex-disjoint paths exist between $I_{i+30\tw(G)}$ and $I_{i+40\tw(G)}$, then together with the cycles $I_{i+30\tw(G)}, \dots, I_{i+40\tw(G)}$ they would form a minor model of a $10\tw(G)$-grid, which is a contradiction.
The two frames $C$ and $C'$ within $\textsf{Ring}(I_i, I_j)$ enclose a subregion of the ring.
We refer to this frame-bounded subregion as a \emph{boundary-free annulus}.

We recursively move to the next boundary vertex $v' \neq v$ in $\widetilde T$ by removing the vertices in $I_1,\ldots, I_{\ell}$.
We compute a longest tight sequence $\langle I'_1,\ldots, I'_{\ell'}\rangle$ of concentric cycles in $\widetilde G$ after deleting the vertices in $I_1,\ldots, I_{\ell}$. Next, we compute the frames with respect to their thick maximal boundary-free rings, and then move to the other boundary vertex in $\widetilde T\setminus\{v,v'\}$.
The recursive algorithm takes $O(b \cdot \tw(G) \cdot n)$ time, and it gives us $O(b)$ frames each of size $O(\tw(G))$.
Observe that the \emph{tightness} of the cycle sequences ensures that the frames (and the boundary-free annuli defined by them) are pairwise disjoint.

\subparagraph*{Boundary-free annuli and boundary-containing regions.}
The frames decompose the plane into $O(b)$ regions.
Half of them are boundary-free annuli, where each annulus is bounded by two frames that are separated by at least $\Omega(\tw(G))$ in radial distance.
We refer to the remaining regions as \emph{boundary-containing regions}.
Unlike boundary-free annuli, a boundary-containing region has either a boundary vertex in $\widetilde T$ or at least three bounding frames.
Importantly, in a boundary-containing region, any boundary vertex from $\widetilde T$ within the region and the bounding frames of the region are separated from each other by at most $O(\tw(G))$ in radial distance.
In the following, we describe the substructures, called the \emph{skeleton forest} and the \emph{reference paths}, associated with each boundary-containing region and each boundary-free annulus.

\subparagraph*{Skeleton forest and reference paths.}
It is clear that each connected boundary-containing region $R^{\textsf o}$ admits $O(b')$ radial curves, each of complexity $O(\tw(G))$, whose cutting transforms $R$ into a topological disk $\Delta$.
Furthermore, all boundary vertices of $R^{\textsf o}$ (from $\widetilde T$) and the bounding frames of $R^{\textsf o}$ appear on the boundary of $\Delta$.
Here, $b'$ denotes the number of boundary vertices within $R^{\textsf o}$ plus the number of its bounding frames.
We call the union of such radial curves over all boundary-containing regions the \emph{skeleton forest}.

For a boundary-free annulus $R^{\textsf x}$ bounded by two frames $C$ and $C'$, observe that $R^{\textsf x}$ is a subregion of some thick maximal boundary-free ring $\ring(I_{i},I_{j})$ described above.
Precisely, $R^{\textsf x} \subseteq \ring(I_{i+30},I_{j-30}) \subset \ring(I_{i},I_{j})$.
Within $\ring(I_{i+30},I_{j-30})$, we compute a maximum set $\mathcal P$ of vertex-disjoint paths between $I_{i+30}$ and $I_{j-30}$.
From $\mathcal P$, we select the maximal subpaths that are entirely contained in $R^{\textsf x}$ and connect a vertex on $C$ to a vertex on $C'$.
We call these selected subpaths $\mathcal P'$ the \emph{reference paths}.
Since both $C$ and $C'$ have complexity $O(\tw(G))$, the size of $\mathcal P'$ is also bounded by $O(\tw(G))$.
It is clear that computing the entire skeleton forest and all reference paths requires $O(b \cdot \tw(G) \cdot n)$ time.
\subsection{Analysis of \Cref{sur_lem:summary_pdp}.}
Here, we fix a set $\widetilde{\mathcal T}$ of pairs of $\widetilde T$.
We allow $v$ to appear in $\widetilde{\mathcal T}$ at most $2x$ times, where each boundary vertex $v \in \widetilde T$ corresponds to $x$ boundary vertices in $G$.
Cho et al.~\cite{cho2023parameterized} designed a canonical procedure to obtain a weak $\widetilde{\mathcal T}$-linkage in the radial completion of $\widetilde G$ (if one exists) with respect to each boundary-free annulus and each boundary-containing region so that it satisfies \Cref{sur_lem:summary_pdp}.
Since $\widetilde G$ is embedded in the plane without holes, the canonical construction applies to $\widetilde G$ as well.
In the following, we analyze the complexity stated in \Cref{sur_lem:summary_pdp}.
Precisely, we analyze the number of discretely homotopic classes of weak linkages $\mathcal W$ satisfying the conditions in \Cref{sur_lem:summary_pdp} as follows, with respect to each boundary-free annulus and each boundary-containing region.
\begin{itemize}
    \item The walks in $\mathcal W$ are pairwise vertex-disjoint except on frames or skeleton forests,
    \item the walks in $\mathcal W$ traverse a boundary-free annulus along the reference paths, and
    \item the walks in $\mathcal W$ traverses an edge in the radial completion of $\widetilde G$ but not in $G$, then it is an edge on a frame or a skeleton forest. Additionally, such an edge is traversed at most $O(\tw(G))$ times in the walks.
\end{itemize}

\subparagraph*{Boundary-free annulus.}
Let $R^{\textsf x}$ be a boundary-free annulus bounded by two frames $C$ and $C'$.
We first show that the number of discretely homotopic classes of weak linkages $\mathcal W$ connecting $C$ and $C'$ within $R^{\textsf x}$ is at most $\textsf{poly}(\tw(G))$.
Note that $C$ and $C'$ are nooses of $\widetilde G$, and thus they are cycles of its radial completion.
For discretely homotopic classes, we can contract all edges of the radial completion of $\widetilde G$ lying on $C$ (and on $C'$) into a single loop.
Observe that even if the walks in $\mathcal W$ traverse $C$ (or $C'$) multiple times, they must all do so in the same direction, and the traversal between $C$ and $C'$ naturally proceeds along the reference paths in order.
Furthermore, the number of times two walks in $\mathcal W$ traverse $C$ (or $C'$) can differ by at most one.
Therefore, the discretely homotopic classes are determined by
(i) the number of times a walk traverses the contracted loop for $C$ and for $C'$, and
(ii) the size of the weak linkage $\mathcal W$.
Both are bounded by $O(\tw(G))$.
Therefore, the weak linkages can be encoded into at most $\textsf{poly}(\tw(G))$ discretely homotopic classes for each boundary-free annulus $R^{\textsf x}$.

\subparagraph*{Boundary-containing regions.}
For a boundary-containing region $R^{\textsf o}$, let $b'$ denote the number of boundary vertices of $\widetilde T$ contained in $R^{\textsf o}$ together with its bounding frames.
Recall that $\widetilde G$ is obtained by contracting the boundary vertices of $G$.
Moreover, if a boundary vertex $v \in \widetilde T$ corresponds to $x$ boundary vertices in $G$, then we allow $v$ to appear in a weak linkage at most $2x$ times.
For clarity, we further contract each bounding frame of $R^{\textsf o}$ into a single vertex.
The skeleton forest within $R^{\textsf o}$ consists of $O(b')$ radial curves, each of complexity at most $O(\tw(G))$, which implies that the skeleton forest has at most $O(b')$ vertices of degree one or at least three.
By cutting $R$ along the skeleton forest, we obtain a topological disk $\Delta$.
Additionally, by further contracting the maximal paths on the boundary of $\Delta$ consisting of degree-two vertices, the boundary of $\Delta$ consists of $O(b')$ vertices, and each boundary vertex of $\Delta$ corresponds to either
(i) a maximal path of degree-two vertices on the skeleton forest,
(ii) a vertex of degree at least three on the skeleton forest,
(iii) a boundary vertex within $R^{\textsf o}$, or
(iv) a bounding frame incident to $R^{\textsf o}$.
We assign each boundary vertex of $\Delta$ a weight equal to the number of identified vertices in $G$ it represents.

Then the discretely homotopic classes of weak linkages in $R^{\textsf o}$ can be encoded as a weighted diagonalization of $\Delta$, where for a boundary vertex $v$ of $\Delta$, the number of diagonals incident to $v$ is at most its weight.
This is because the weak linkages consist of pairwise vertex-disjoint walks inside $R^{\textsf o}$ (and hence inside $\Delta$).
The number of such diagonalizations is at most $2^{O(b'\log b')} (M_{\textsf o}+\tw(G))^{O(b')}=(M_{\textsf o}+\tw(G))^{O(b')}$.
Here, let $M_{\textsf o}$ denote twice the number of boundary vertices of the original graph $G$ contained in $R^{\textsf o}$. Note that $M_{\textsf o}$ is at least $b'$.

\medskip

Overall, the total number of discretely homotopic classes over all regions is $\textsf{poly}(\tw(G))^{O(b)}\cdot(M'+\tw(G))^{O(b)}=M^{O(b)}$, where $M$ denotes twice of the number of boundary vertices in the original graph $G$ plus its treewidth $\tw(G)$.
Furthermore, for each discretely homotopic class, a weak linkage in that class can be computed in linear time according to the above description. The detailed algorithm is analogous to~\cite{cho2023parameterized}.
Therefore, the complexity bounds stated in \Cref{sur_lem:summary_pdp} hold.

\bibliography{paper}

\end{document}